\newtheorem{corollary}{Corollary}
\newtheorem{definition}{Definition}
\newtheorem{remark}{Remark}
\definecolor{aliceblue}{rgb}{0.94, 0.97, 1.0}
\newcommand{\add}[1]{ \textcolor{blue}{#1}}
\def\blk{\color{black}}
\newenvironment{mybox}[1][]{%
\ifstrempty{#1}%
    {\mdfsetup{%
    style = mystyle,
    }}%
    {\mdfsetup{%
    style = mystyle,
    frametitle={%
    \tikz{[baseline=(current bounding box.east),outer sep=0pt]
    \node[draw = white, line width = 2pt, text = blue, anchor=east,rectangle,
    fill=aliceblue, rounded corners, drop shadow]
    {\strut #1};
    }}}}%
    \begin{mdframed}[]\relax%
}{\end{mdframed}}
\def\XXint#1#2#3{{\setbox0=\hbox{$#1{#2#3}{\int}$}
     \vcenter{\hbox{$#2#3$}}\kern-.5\wd0}}
\renewcommand\@makefntext[1]{\leftskip=2em\hskip-2em\@makefnmark#1}
\begin{document}
	
    \title{Intermediate Times Dilemma for Open Quantum System:\\  Filtered Approximation to The Refined Weak Coupling Limit}

	

\author{Marek Winczewski}
\author{Antonio Mandarino}
\author{Gerardo Suarez}
\author{Robert Alicki}
\author{Michał Horodecki}
\affiliation{International Centre for Theory of Quantum Technologies, University of Gdansk, Jana Ba\.zy\'nskiego 1A, Gda\'nsk, 80-309, Poland}

	\date{\today}
	
\begin{abstract}
The famous Davies-GKSL secular Markovian master equation is tremendously successful in approximating the evolution of open quantum systems in terms of just a few parameters. However, the fully-secular Davies-GKSL equation fails to accurately describe time scales short enough, i.e., comparable to the inverse of differences of frequencies present in the system of interest. A complementary approach that works well for short times but is not suitable after this short interval is known as the quasi-secular master equation. Still, both approaches fail to have any faithful dynamics in the intermediate time interval. Simultaneously, descriptions of dynamics that apply to the aforementioned ``grey zone'' often are computationally much more complex than master equations or are mathematically not well-structured. The filtered approximation (FA) to the refined weak coupling limit has the simplistic spirit of the Davies-GKSL equation and allows capturing the dynamics in the intermediate time regime. At the same time, our non-Markovian equation yields completely positive dynamics. We exemplify the performance of the FA equation in the cases of the spin-boson system and qutrit-boson system in which two distant time scales appear. 
\end{abstract}
	
\maketitle


\section{Introduction}


\color{black}

The exact dynamics of systems with a small number of degree of freedom 
can be tackled exactly, whereas in a case in which the dimension of a system becomes large, the exact dynamics become computationally not tractable. Hence, it is inevitable to rely on a statistical approach to derive the evolution of a single main unit and consider the remaining part of the whole system as a noisy environment. In analogy with classical systems showing dissipation, the standard terminology refers to these types of systems as open quantum systems~\cite{AlickiLendi1987, Breuer+2006}. However, in addition, a plethora of novel phenomena arise, viz., dephasing, decoherence, and revival of coherence among them all.

Open quantum systems are ubiquitous in physics, and 
the simplest description of their dynamics is by means of Markovian master equation called Davies-GKLS one \cite{davies1974markovian, gorini1976completely,lindblad1976generators}. 
Recently however, open systems  
  showing a behavior far from  the Markovian dynamics are becoming a blooming topical area~\cite{whitney2008staying}. 
In particular, when the environment is a solid state system, relaxation and 
decoherence can be mediated by other bosonic fields, e.g., phonons~\cite{QDnonM1, QDnonM2} or magnons~\cite{magnons1, magnons2}, 
whose behavior is typically not Markovian. In particular, the interaction with the acoustic phonons 
has been experimentally shown to be the mechanism responsible for the damping Rabi-oscillations 
and Rabi-frequency renormalization in $\text{InGaAs/GaAs}$ quantum dots~\cite{RabiPhonon1, RabiPhonon2}. 
Another intense field of research where the use of the correct master equation is still a matter of 
controversy is the modeling of the coherent excitation energy transfer 
in chemical compounds, such as chromophores in light-harvesting complexes~\cite{lightharv1, lightharv2, collini2010}.

As noted in \cite{Alicki1989} at various stages of evolution, the character of evolution can change from Markovian to non-Markovian and vice-versa. A system was proposed, where at the initial and final stages the Markovian description works (albeit with different master equations), while in the intermediate times, one needs a different description. The one working in the initial stage, proposed in \cite{AlickiLendi1987}, is now called quasi-secular (do not confuse with partial secular of \cite{cattaneo2019local}), while the other one is the Davies equation, i.e. the full-secular.  (See Box 1 for presenting this phenomenon in a more general scenario.) 
To describe dynamics in the whole time, one can use the so-called Bloch-Redfield equation~\cite{bloch1957generalized, redfield1957theory}, 
perhaps the earliest master equation. The latter better accounts for finite coupling with bath (e.g. offering a better steady state). However, it may not preserve positivity. 

The first attempt to provide a completely positive alternative was done in \cite{Alicki1989}.
This equation was later independently discovered and developed in Refs.~\cite{Rivas_2017,Rivas_2019}, under the name of {\it refined weak coupling limit} (this terminology was also used in the context of coarse-grained master equation in~\cite{benatti2009, benatti2010}). The equation was derived using a cumulant expansion, therefore for convenience, we so will term it shortly {\it cumulant equation}.
In principle, it is valid for all times scales and therefore interpolates between the two Markovian evolutions mentioned above, as we show later herein. In a later period, an effort was undertaken to modify 
the Bloch-Redfield  in order to make the dynamics complete positive~\cite{Mayenz-Lidar-coarse-graining,galve2017microscopic,farina2019open,Davidovic2020} (see also  Refs. \cite{hartmann2020accuracy,jeske2015bloch} in this context). 
 
There is a problem with the above equations is that they require full knowledge of spectral density, and involve integration over all frequencies. While this is doable for small systems, it will become clearly less and less feasible for larger systems. This is in manifest contrast with the powerful easiness of Davies equation that requires 
the knowledge of merely few physical parameters. 

In this paper, we provide an approximated version of the cumulant equation, which retains complete positivity as well as non-Markovianity, 
but also exhibits the simplicity of the Davies equation.
The coefficients appearing in the proposed evolution equation are not integrals anymore, and they depend only on the Bohr spectrum of the system. As a starting point, we consider the physical Hamiltonian of the system weakly coupled with a reservoir, for which the effects, such as the Lamb-shift, have already accounted within a proper effective theory~\cite{RenormalizationPaper,Lobejko_Mean-Force,correa2023potential}, as required for the validity of equations in the long-time limit.

We illustrate our equation with a three-level system, the simplest system where the effect of intermediate time considered in \cite{Alicki1989} can be seen. 
We show that cumulant equation interpolates between two Markovian evolutions -- quasi-secular and full secular -- reproducing one of them for short times and the other one for long times. We also show that filtered approximation evolution 
does the job, being a very good approximation of the cumulant one. Importantly, FA equation is computationally much less expensive than the original cumulant equation.
We also show that the dynamics predicted by the cumulant equation and FA equation have a non-Markovian character.

We then prove for general open systems that our dynamical equation interpolates between 
quasi-secular and full-secular equations. 
We do a sanity check comparing both the cumulant equation and FA equation with the exact dynamics given 
by the hierarchical equations of motion (HEOM) theory and the Redfield equation in the case of qubit.


\begin{widetext}
~~
\vspace{-2em}
\begin{mybox}[Box 1 - Intermediate Time Dynamics]
\label{box:1}
\begin{minipage}[t]{1.\textwidth}
\includegraphics[width = 0.31\linewidth]{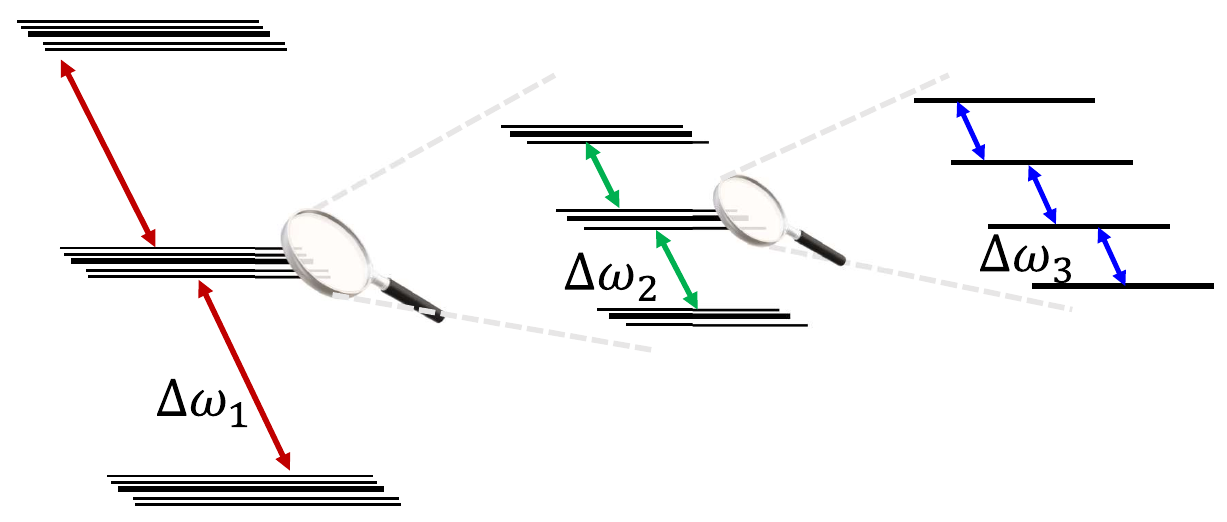}
\includegraphics[width = 0.35\linewidth]{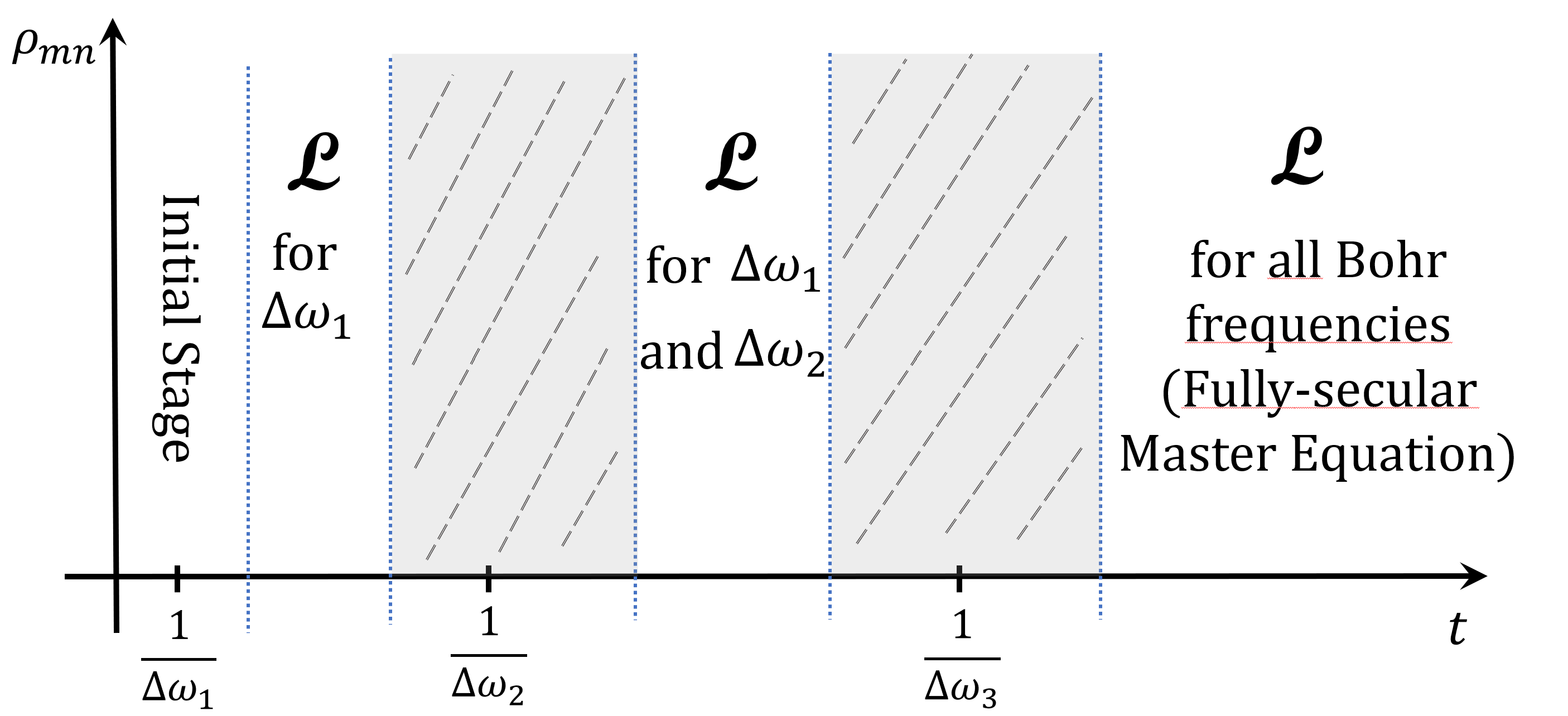}
\includegraphics[width = 0.33\linewidth]{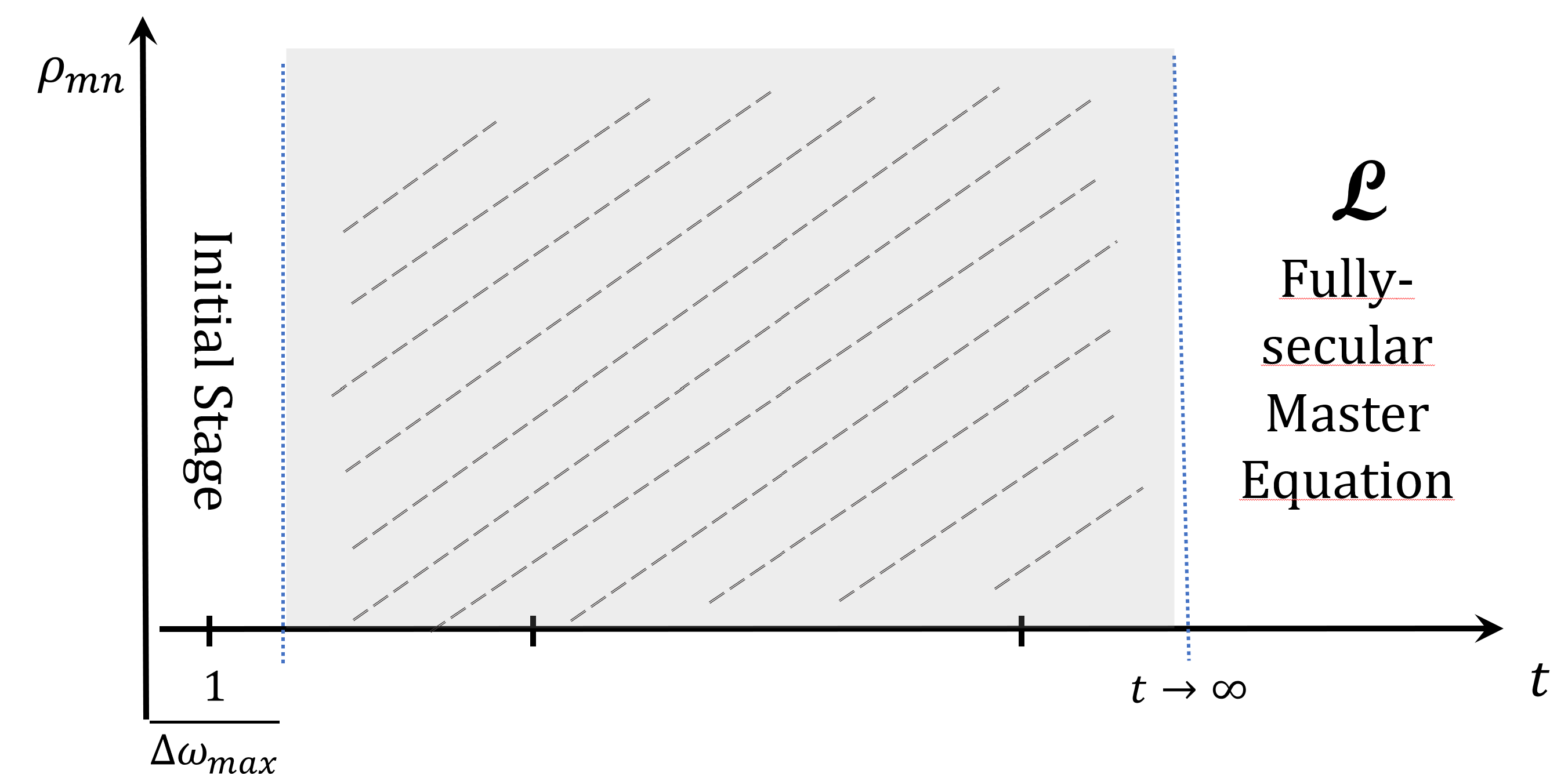}
\end{minipage}%
\hfill
\begin{minipage}{1 \textwidth}
\vspace{1em}
The relevant time scales governing the dynamics of an open quantum system are related to the inverse of the system's Bohr frequencies of the transition induced by the system-bath interaction. The time-energy uncertainty relation written as $ t \geq \frac{1}{\Delta \omega }$ tells how long the system should interact with the bath in order that the dynamics is able to `resolve' all the transition frequencies of the system. 
To easier visualize the concept, let us imagine a system that has a coarse-grained level structure as depicted in the left picture. We denote as $\Delta \omega_1$ the largest transition frequencies between two consecutive `quasi-bands' (subgroups of energy levels). 
A deeper look, namely letting the system and bath interact longer, enables us to resolve firstly the Bohr frequencies of the order of $\Delta \omega_2 <\Delta \omega_1$ between a sublevel of `quasi-bands,' and afterward also the third quasiband $\Delta \omega_3 <\Delta \omega_2 <\Delta \omega_1$. 
We stop our pictorial description after three steps, but in principle, it will last till all the transitions frequency are probed. 
This translates into the time evolution plot of 
a generic element of the system reduced density matrix $\rho_{mn}$ as depicted in the middle graphic. 
At the beginning, for short times, $t< \frac{1}{\Delta \omega_1},$ we observe the `initial stage' of evolution, 
as the shortest characteristic time scale of the system, i.e., $\frac{1}{\Delta \omega_1}$, sets the resolution for the dynamical equations.
When $t \gtrsim \frac{1}{\Delta \omega_1} $ we are in the regime where the dynamics is correctly described by a Davies-type quasi-secular generator 
$\mathcal{L}_{\Delta \omega_1}$ that includes jump operator proportional to frequencies of the order of $\Delta \omega_1$. After a while this description will not be valid anymore and we are in the first ``grey zone'' of the evolution in time. At this moment no valid Davies-like master equation can be derived. At later times, $t \gtrsim \frac{1}{\Delta \omega_2} $ 
a less quasi-secular (or more secular) generator can be written, namely $\mathcal{L}_{\Delta \omega_1, \Delta \omega_2}$. After this period we encounter the second ``grey zone'', and eventually for $t \gtrsim \frac{1}{\Delta \omega_2}$ the dynamics will be described by the (fully-) secular Davies-GKSL generator $\mathcal{\mathcal{L}}^{fs} \equiv \mathcal{L}_{\Delta \omega_1, \Delta \omega_2, \Delta \omega_3}.$

In the right graphic, we show the situation when only one frequency difference can be assumed $\Delta \omega_{max}$ (we are in the presence of a quasicontinuum structure, $\omega_{max}$ is the frequency difference between the ground state and the highest excited level) and the only valid Davies-generator is the secular one in the  $t \rightarrow \infty$ limit.

\vspace{1em}
\end{minipage}%
\end{mybox}
\end{widetext}




\section{Time scales in open system dynamics}\label{sec:TimeScales}
An intriguing question we address for systems weakly interacting with a single thermal bath is the following: 
is it possible to derive a simple dissipative equation that catches some non-Markovian features while retaining during the entire evolution the complete positivity? The answer might be exceptionally important for systems with a complex energetic structure.

The energetic configuration (i.e., the transition frequencies) of a generic system relates to the different time scales identifiable in its dynamical evolution, see e.g.~\cite{Cresser17,cattaneo2019local}. 
If we can group frequency differences into two groups that are well separated from each other (the same analysis applies to more groups) marked by $\Delta \Omega$ and $\Delta \omega$,
for time 
\begin{align}
    \frac{1}{\Delta\Omega} \ll t \ll \frac{1}{\Delta \omega},
\end{align}
the quasi-secular master equation holds \cite{AlickiLendi1987}. Briefly speaking, a quasi-secular master equation consists of a dissipator derived with respect to a modified Hamiltonian. In the modified Hamiltonian, the almost degenerated energy levels, i.e., separated by energy difference $\Delta\omega$, are replaced with (perfectly) degenerated levels of averaged energy. Next, the standard derivation procedure of the Markovian master equation takes place~\cite{AlickiLendi1987,Breuer+2006}. In the final step, which is performed in the Schr\"odinger frame, the modified Hamiltonian is (by hand) replaced with the original, non-degenerated one.  Furthermore, for times
\begin{align}
    t\gg \frac{1}{\Delta \omega},
\end{align}
the (fully-) secular equation holds~\cite{levy2014local,cattaneo2019local}. For more groups, we have more time scales, and series of equations, which starts with ``most quasi-secular'' (or ``least secular'') and becomes more and more ''secular'' (till fully-secular equation), see for an illustrative description the Box 1.

The question we want want to address in the paper is: 
{\it is there a simple, accurate, and well-structured evolution equation that describes the whole range of time, hence covering also times $t\simeq \frac{1}{\Delta\omega}$ ?}

This question is important as in a generic system, 
there is no division between the well-separated groups of frequency differences at all, 
so there is even no chance to implement quasi-secular equations. 
The mentioned issue becomes especially significant for complex systems.

\subsection{Notes on the dynamics at intermediate times}\label{sec:NoteJ}
As noted in~\cite{Davies2} the 
most widespread microscopically derived master equation, the GKSL-Davies equation ~\cite{davies1974markovian} fails to describe time regimes where differences of Bohr frequencies are too small in comparison to the inverse of time relaxation of the system.
In Ref.~\cite{Alicki1989} this problem was considered in depth. 
Here, an oscillator with small  anharmonicity is considered, 
so that there are two energy scales: the basic one is given by the frequency of the oscillator, and a much smaller one is introduced via an anharmonicity coefficient $\chi$.

Then for short times, the environment does not have enough time (due to time-energy uncertainty) to distinguish the levels separated by small energy differences, so that the secular approximation is only done with respect to the basic frequency. This results in the master equation marked as type I in Ref.~\cite{Alicki1989},  where there is a cross-talking between those near degenerated Bohr frequencies. Such a master equation is supposed to work well for times evolution
satisfying 
\begin{align}
    t\ll \frac{1}{\chi}.
\end{align}
In Ref.~\cite{AlickiLendi1987} a  procedure of forming such an equation for a general quantum system was outlined.
A similar but much simpler situation -- the three-level system -- has been considered~\cite{Brumer_2018}, where experimental realization was proposed to show that for short times the equation of type I is the correct description. 
This type of equation has been later considered in the literature under the name of quasi-secular
such as in~\cite{McCauley2020} and~\cite{trushechkin2021unified}.

For times long enough, i.e. 
\begin{align}
    t\gg \frac{1}{\chi},
\end{align}
the nearby transitions are recognized by the environment, and the full secular approximation is legitimate, and then the master equation marked as type III in Ref.~\cite{Alicki1989} works, which is the standard Davies-GKSL master equation.  

Alas, there is a ``grey zone''. Namely, when the times are comparable to $1/\chi$, in which none of the equations (neither quasi-secular, nor fully-secular) works well.

One way to 
cover the grey zone is to use Bloch-Redfield equation \cite{redfield1957theory,bloch1957generalized}. Its time independent version is of comparable simplicity as Davies-GKLS equation, albeit less accurate than a more complicated time dependent version. Yet, as mentioned Redfield equation does not preserve positivity of the density matrix in general. In Ref.~\cite{Alicki1989} an completely positive evolution, aka ''the refined weak coupling limit'' \cite{Rivas_2017}, that 
 in particular covers  the grey zone was proposed. This methodology was further developed in Refs.~\cite{Rivas_2010,Rivas_2012,Rivas_2017,Rivas_2019,Mayenz-Lidar-coarse-graining}. However, the refined weak coupling approach is much more involved then Davies-GKSL equations (the complexity is comparable to time-dependent Bloch-Redfield equation).

There were other attempts to 
have completely positive evolution, that covers the ''grey zone''.
In Ref.~\cite{Mayenz-Lidar-coarse-graining}, 
the authors derive a \add{CPTP} master equation which might work in principle in any chosen, particular time scale.
However, it depends on an additional phenomenological parameter, the coarse-graining time. In Ref.~\cite{Mayenz-Lidar-coarse-graining} the fixed coarse-graining time is optimized with respect to the entire evolution, however it is expected that time-dependent coarse-graining procedure can reproduce the exact evolution even better\footnote{This is not evident from ~\cite{Mayenz-Lidar-coarse-graining}, still short-times behavior of dash-dotted black line in Fig.~5 therein suggests such a possibility.}.

Thus, if we are interested in the system evolution over a given time scale, we have to set the parameters suitably, i.e. we shall not be able to describe evolution through the whole time having fixed parameters of equation. The reason behind it is intuitive: since two extreme cases (quasi- and fully-secular dynamics) are Markovian with generators having completely different structures, there cannot be a Markovian generator that could produce a dynamics recovering the extreme generators as proper limiting cases. 
This might only happen if the generator is changed adiabatically, which requires much longer 
times scales than the inverse of any difference of Bohr frequencies in the system.



Essentially, the whole art still consists in making physically motivated approximations, while {\it keeping complete positivity and trace preservation } (CPTP property) of the evolution, in hope for obtaining simple equations that nevertheless work pretty well (for all times scales). 

\section{Refined weak coupling: the cumulant equation}
\label{sec:RWC}

In this Section, we recall 
the refined weak coupling limit 
here referred to as ''the cumulant equation''.  
In this treatment of the dynamics of an open quantum system weakly interacting with its environment, 
the Markovianity of the time evolution of the open system is not presumed. 
Therefore, in general, the time evolution of the reduced density operator 
is described with a time-dependent generator rather than with a quantum dynamical semigroup.

We consider a system interacting with a bath, with the following total Hamiltonian:
\begin{align}
    &H_{\mathcal{S}+\mathcal{R}} = H_\mathcal{S}+H_\mathcal{R}+\lambda H_\mathcal{I},\\
    &H_\mathcal{I} = \sum_i A_i \otimes B_i,
\end{align}
where $H_\mathcal{S}$ is the physical (measurable) Hamiltonian of the system, i.e., the one including any Lamb shift, 
 and  without loss of generality, we assume
that average of $B_i$ over bath state vanishes. The system Hamiltonian $H_\mathcal{S}$ is chosen such that in the interaction picture 
 the system will undergo only a dissipative dynamics~\cite{RenormalizationPaper}.
To derive the cumulant equation we follow the approach in Ref.~\cite{Alicki1989}, which we summarize in the following. 
The reduced dynamics of the system in the interaction picture is defined as the partial trace over the reservoir degrees of freedom 
\begin{equation}
\label{eq:tot_evolution}
    \rho_S(t) = \Lambda(t) \rho_S = \Tr_\mathcal{R} \left[ U(t, t_0)  \rho_S(t_0) \otimes \rho_R U^\dagger(t, t_0) \right]
\end{equation}
where $U(t, t_0) = \mathcal{T} \exp  \left\{ - i \lambda \int_{t_0}^t dt'  \tilde{H}_\mathcal{I}(t') \right\}$ is the  time-ordered unitary propagator of the whole system.

A formal expansion of the reduced dynamics can be written as $\Lambda(t)  \equiv \exp \left\{ \sum_{n=1}^\infty \lambda^n \tilde{K}^{(n)}(t) \right\},$ then assuming that the bosonic bath is approximately Gaussian 
(such that it is enough to consider up to the second-order correlation function), 
it is possible to consider terms up to the second order in $\lambda,$ arriving at the expression $\Lambda(t)  = \exp \left\{ \lambda^2 \tilde{K}^{(2)}(t) \right\}$, while the first term $K^{(1)}(t)$ results to be null due to the centralization of the bath operators.

Finally, one can write the time evolution in the interaction picture of the reduced density operator $\tilde{\rho}_S(t)$ 
associated with the open system, that is given by the equation:
~~~~
\vspace{-0.1em}
~~
\begin{mybox}[Box 2 - The FA equation]\label{box2}
\begin{minipage}[t]{1.\textwidth}
\includegraphics[width = 0.95\linewidth]{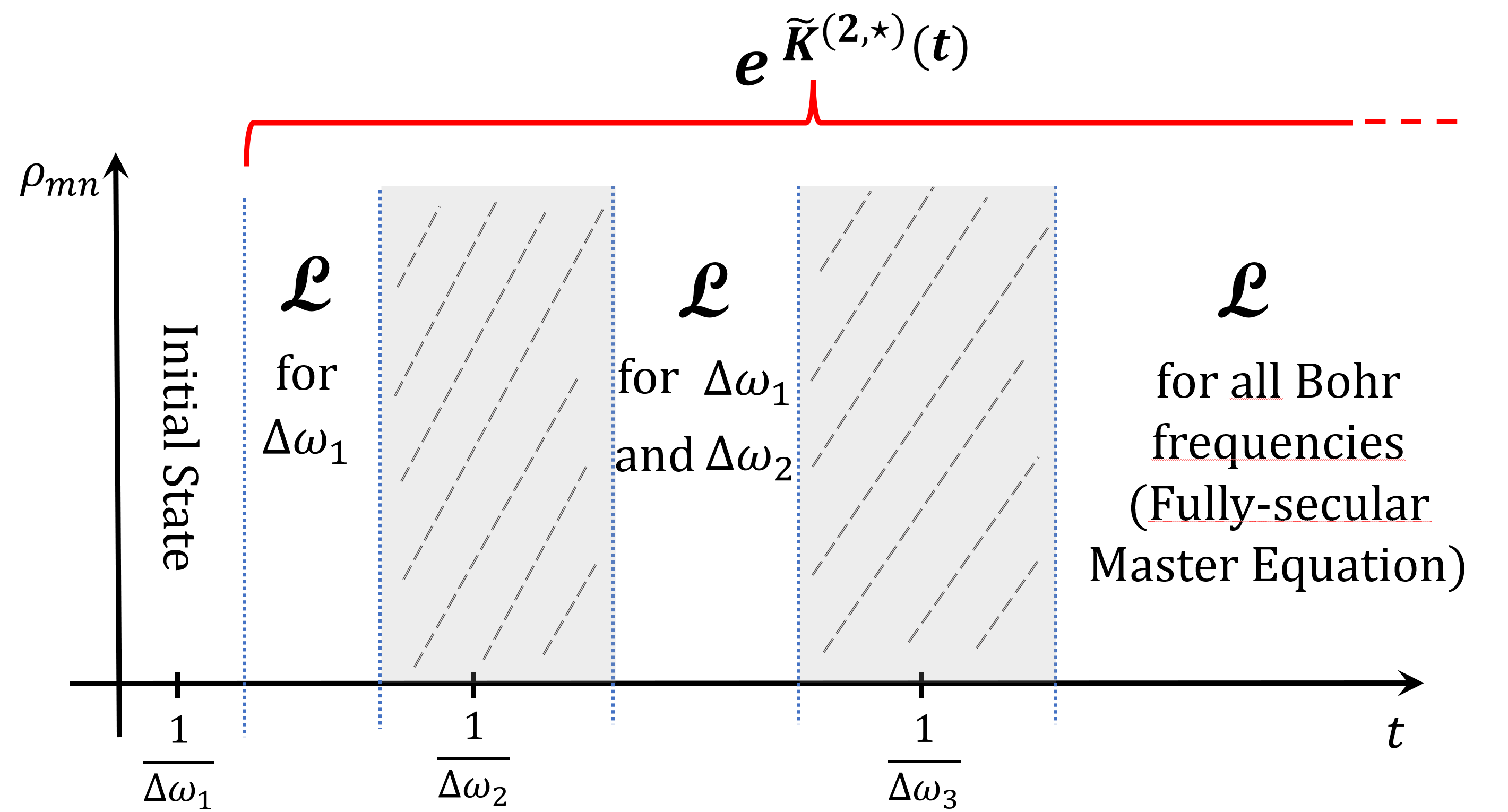}
\end{minipage}%
\hfill
\begin{minipage}{1 \textwidth}
\vspace{1em}
\label{box:2}
We introduce  a novel approximation technique when deriving the cumulant equation in the weak-coupling limit. The resulting filtered approximation (FA) is suitable for a variety of scenarios and works for almost all time regimes. The evolution will be guided by $e^{\tilde{K}^{(2,\star)}(t)}$  in contrast to the GKLS dynamics given by $e^{\mathcal{L}t}$.

The FA equation has an advantage of being particularly simple, while being  an impressively good (for all examined cases) approximation of the cumulant equation. Within this formalism, we preserve the complete positivity maintaining the physical insight of Bloch-Redfield equations, 
and we take trace of non-Markovianity where the Markovian-master equations, either quasi- or fully-secular, fail. 
\vspace{1em}
\end{minipage}
\end{mybox}
\begin{align}\label{eqn:GeneratorK}
	\tilde{\rho}_S (t) = e^{\tilde{K}^{(2)}(t)} \tilde{\rho}_S (0),
\end{align}
where the superoperator $\tilde{K}^{(2)}(t)$ 
 reads

\color{black}
\begin{align}
\label{eqn:KTime}
\tilde{K}^{(2)}(t)\tilde{\rho} = \frac{1}{\hbar^2} \int_0^t ds \int_0^t du \, e^{i (\omega^\prime s - \omega w)} \left< \tilde{B}_j (s) \tilde{B}_i (u) \right>_{\tilde{\rho}_B} \nonumber \\ 
\times \left( \tilde{A}_i(s) \rho \tilde{A}_j^\dag(u) - \frac12 \{\tilde{A}_j^\dag(u) \tilde{A}_i(s), \tilde{\rho} \} \right). 
\end{align}
The above superoperator takes into account dynamics guided by a time-dependent Hamiltonian $H_{\mathcal{S}+\mathcal{R}}(t)= \sum_i A_i(t) \otimes B_i $. The time dependence can also be present at the level of  $H_{\mathcal{S}}(t)$. 
Both cases can describe enthralling physical situations.  
The former case can be used to model all those situations where there is the possibility to tune in time the system-bath coupling, making it possible to recover the adiabatic limit (of a slowly varying coupling) or the Floquet picture of an interaction periodically switched on and off; the latter covers systems perturbed by an external field, 
such as atoms shined by laser light or in a magnetic field varying in time.

In this paper, we consider a total Hamiltonian $H_{\mathcal{S}+\mathcal{R}}$ time independent, then when moving into the frequency domain, the superoperator $\tilde{K}^{(2)}$  is given by 
\begin{align}\label{eqn:ActionOfCumulant}
	&\tilde{K}^{(2)}(t) \tilde{\rho}_S =
	\frac{1}{\hbar^2}\sum_{i,j} \sum_{\omega, \omega^\prime} \gamma_{ij}(\omega,\omega^\prime,t) \nonumber\\
	&\times  \left(A_i (\omega) \tilde{\rho}_S A_j^\dagger (\omega^\prime) - \frac{1}{2} \left\{A_j^\dagger (\omega^\prime) A_i (\omega), \tilde{\rho}_S \right\} \right).
\end{align}
In the above formula (that is in GKSL form) the  time-dependent relaxation coefficients $\gamma_{ij} (\omega,\omega^\prime,t)$ are elements of a positive semi-definite matrix, and are given by:
\begin{align}
	&\gamma_{ij} (\omega,\omega^\prime,t) = \int_0^t ds \int_0^t dw~ e^{i (\omega^\prime s - \omega w)} \left< \tilde{B}_j (s) \tilde{B}_i (w) \right>_{\tilde{\rho}_B}, \label{eqn:DEFgamma}
\end{align}
	where $\left< AB \right>_\sigma \equiv \tr \left\{AB\sigma\right\}$ and are the bath operators in the interaction picture $\tilde{B}_j (s)$. 
The above formula can be integrated into the following form
\begin{align}\label{eqn:KossakowskiCumulantIntegrated}
    \gamma_{ij} (\omega,\omega^\prime,t) =& e^{i \frac{\omega^\prime-\omega}{2}t} \int_{-\infty}^{\infty} d\Omega~
	    \left[t~ \mathrm{sinc} \left(\frac{\omega^\prime-\Omega}{2}t\right)\right] \nonumber \\
	    &\times \left[t~ \mathrm{sinc} \left(\frac{\omega-\Omega}{2}t\right)\right]  R_{ji} (\Omega),
\end{align}	
where $R_{ji} (\omega)$ is the autocorrelation function of the bath given by $R_{ji} (\omega) =\left< {B}_j (\omega) {B}_i \right>_{\tilde{\rho}_B}$, and $B (\omega)$ denotes the Fourier transform of $B (t)$ . 
\color{black}

Finally, the jump operators ${A}_i(\omega)$, ${A}_i^\dagger(\omega)$ are defined in the usual way:
\begin{align}\label{eqn:jumpStd}
	{A}_i(\omega)=\sum_{\epsilon^\prime - \epsilon=\hbar \omega} \Pi(\epsilon) {A}_i \Pi(\epsilon^\prime).
\end{align}
Here, the sum runs over Bohr frequencies of the {\it renormalized} 
Hamiltonian $H_\mathcal{S},$ and the operators  $\Pi(\epsilon)$  are then projections onto the eigenspace relative to the eigenenergy $\epsilon$ of the 
Hamiltonian $H_\mathcal{S}$ of the open system. 
This requirement is because the above evolution results from the central limit like theorem -- so that only cumulant up to the second-order is kept -- which requires eliminating all the {\it systematic} effects of noise (see~\cite{Alicki1989, RenormalizationPaper}).
We then have an evolution that is consistent with thermodynamics - i.e., leads to a stationary state which is the Gibbs state of the observed Hamiltonian, unlike in the approach that is widespread in the literature, where the stationary state is the Gibbs state of the bare Hamiltonian~\cite{Breuer+2006,Rivas_2017}.

From now on, we assimilate a system of units for which all physical constants are set to $1$.

\begin{remark}
Our notation differs slightly from the one used in reference~\cite{Rivas_2017}. This discrepancy can be most easily observed in equation (\ref{eqn:ActionOfCumulant}).  Namely, the positions of $\omega$ and $\omega^\prime$ in the Kossakowski matrix $\gamma$ are interchanged. We choose the arrangement of indices for which complete positivity of $\tilde{K}^{(2)}(t)$ is almost evident.
\label{rem:notation}
\end{remark}

\section{Approximations of the cumulant equation}
\label{sec:approximation} 

In this Section, we present the main results of the paper. Nonetheless, before showing how non-Markovian approximation arrives from the cumulant equation we first discuss the long-times regime (and its subtleties) in which the cumulant equation reproduces Markovian evolution. Next, we present a method to cope with the complexity problems emerging in the original cumulant equation by means of an approximation procedure. These will allow to have a ready-to-use dynamical equation \textit{\`a la Davies} that in addition allows to capture the non-Markovian evolution of the system in the intermediate time regime, for an illustrative description see Box 2. 

\subsection{The Markovian approximation}

The (fully-secular) Davies-GKSL master equation can be readily obtained from the cumulant equation~\cite{Rivas_2017}, by performing a specific limit on its superoperator. The integrated form of Davies-GKSL master equation follows then from the following replacement in Eq. \eqref{eqn:GeneratorK}.
\begin{align}\label{eqn:LongTimeLimitCumulant}
     \tilde{K}^{(2)}(t) \longrightarrow  t \tilde{\mathcal{L}}^\mathrm{fs} = t\left(\lim_{\tau \to +\infty} \frac 1\tau \tilde{K}^{(2)}(\tau)\right) \stackrel{t \approx \infty}{\approx} \tilde{K}^{(2)}(t),
\end{align}
where the limit in r.h.s. can be easily computed using standard methods, and $\tilde{\mathcal{L}}^\mathrm{fs}$ is the interaction picture time-independent generator of a semigroup of the Davies equation (see Sec. \ref{app:sec:RelaxGlob}). This generator is also labeled as fully-secular to distinguish it from the ones valid in other time regimes. The above procedure results in the Markovian master equation in secular approximation; therefore, all interesting memory effects are neglected. This situation motivates us to derive another kind of approximation that interpolates between non-Markovian dynamics and the Markovian case in a way that does not neglect memory effects, i.e, the filtered approximation (FA) of the cumulant equation.

\begin{remark}
    In rigorous mathematical terms $ t \tilde{\mathcal{L}}^\mathrm{fs}$ is not a long-time limit of the cumulant superoperator $\tilde{K}^{(2)}(t)$~\cite{RenormalizationPaper}. However, for
    the renormalized (no Lamb shift term) cumulant equation, studied here,
    for some large $t_0$ we have 
    \begin{align}
        \forall_{t>t_0} ~~ \norm{\tilde{K}^{(2)}(t)-t \tilde{\mathcal{L}}^\mathrm{fs}}_1 < C.
    \end{align}
    where $C$ is an absolute contant.
    Thus, the cumulant equation  reproduces the dynamics of the (renormalized) Davies-GKSL equation in the long-time limit~\cite{MW_preparation}~(cf. Ref.~\cite{Rivas_2017}). Therefore, the approximation in Eq.~\eqref{eqn:LongTimeLimitCumulant} can be understood effectively in the sense of resulting dynamics. Still, the long-time limit of the cumulant superoperator ${\tilde{K}^{(2)}(t)}$ is given by a more involved expression~\cite{RenormalizationPaper}.
\end{remark}


\subsection{Filtered approximation (FA)}\label{sec:SM}
Here we provide the main result of the paper - the ''filtered approximation'' (FA). 
The resulting equation is much simplified but the dynamics is still non-Markovian, in a sense that the dynamical map of the FA equation does not constitute a quantum dynamical semigroup. 


In simple words, in the secular approximation the ``sinc'' functions in the integrands of the elements of the Kossakowski matrix in the generator are replaced with Dirac's deltas. In our approximation method, that starts at the level of cumulant equation dynamical map, only one ``sinc'' function (out of two, in a product appearing in integrands) is replaced. In this way we retain non-Markovianity, but severely simplify the structure. Next, we recover the CPTP structure by square root techniques~\cite{Vacchini,Davidovic2020}. The mathematical details of the derivation are described in Section \ref{app:sec:SM} of the Appendix.
\begin{align}
\label{eq:singlestar}
    &\gamma_{ij} (\omega,\omega^\prime,t) \approx \gamma_{ij}^\star (\omega,\omega^\prime,t) \nonumber \\
    &= 2\pi t e^{i \frac{\omega^\prime-\omega}{2}t}
	     \mathrm{sinc} \left(\frac{\omega^\prime-\omega}{2}t\right) \sum_k R^{\frac 12}_{jk} (\omega^\prime) R^{\frac 12}_{ki} (\omega),
\end{align}
where $R^{\frac 12}_{ij} (\omega) \equiv(R^{\frac 12}(\omega))_{ij}$. Furthermore, $\gamma_{ij}^\star (\omega,\omega^\prime,t)$ is a positive semi-definite matrix, which guarantees CPTP dynamics, and it is a generalization of 
{\it Markovian relaxation rates} $\gamma_{ij} (\omega)=2\pi R_{ij} (\omega)$ known from the Davies equation.
In particular, the terms diagonal in $\omega'$ 
$\omega's$ 
reproduce the Markovian rate: $\gamma_{ij}^\star(\omega,\omega,t)=2 \pi t R_{ij}(\omega)$.

We can now write the FA approximation of the cumulant equation as follows:
\begin{align}\label{eqn:DynMapStar}
    \tilde{\rho}^\star_S (t) = e^{\tilde{K}^{(2,\star)}(t)} \tilde{\rho}_S (0),
\end{align}
where the generator  $\tilde{K}^{(2,\star)}(t)$ is obtained from Eq.  \eqref{eqn:ActionOfCumulant} by inserting 
the above approximated time-dependent relaxation coefficient $\gamma_{ij}^\star (\omega,\omega^\prime,t)$, in place of $\gamma_{ij}(\omega,\omega^\prime,t)$.


The remarkable property of this approximation is the simplicity of the final formulas. Here, the coupling to the reservoir is described with a finite (and small) number of parameters; however, the dynamics is CPTP and non-Markovian in the above sense. Therefore, this approximation is of perfect use in situations in which no good microscopic and theoretical model of the bath spectral density is known. In these cases, $R$ can be constructed with phenomenological values (measured in an experiment) and used to describe the system's dynamics in short and intermediate times that might be beyond the reach of spectroscopy. 
We note here that the technique of splitting the spectral density by means of the square root was considered, see e.g.~\cite{SquareRoot1,Davidovic2020}.

\begin{remark}
    In Appendix \ref{app:sec:SM2} we derive yet another type of approximation of the cumulant equation. The II-type approximation is not less complex in computation than the cumulant equation itself, however (alike FA approximation) it has a property of being ''cutoff-stable'' (see Appendix \ref{app:sec:SM2} for more details).
\end{remark}

\section{The Spin-Boson model dynamics}

In this Section, we substantiate our results on the popular testbed of the transverse spin-boson model~\cite{Surez1992,DiVincenzo_1995,Steane_1998,BEZ_2000}, that despite its simplicity does not have an exact analytical solution (at $T>0$) for all the time regimes. We first, briefly describe the model. Subsequently, we compare the dynamics obtain with the cumulant equation and its regularizations with the ones obtained with Davies-GKSL equation, Bloch-Redfield equation and exact numerics. Finally, we present the evidence for non-Markovianity of the FA aproximation dynamics.

\subsection{The system}

\label{sec:spin-boson}

The spin-boson model concerns a two-level system linearly coupled to a bath of harmonic oscillators. The Hamiltonians of the model take the form:
\begin{align}
    H_S = \frac{\omega_0}{2} \sigma_z,~~
    H_B = \sum_k \omega_k a_k^\dagger a_k,\\
    H_\mathcal{I} = \sigma_x \otimes \sum_k g_k \left( a_k + a_k^\dagger\right),
\end{align}
where $\omega_0$ is the transition frequency between ground and excited state, denoted $\ket{g}$, $\ket{e}$ respectively. In the following basis:
\begin{align}
    \ket{g} = \begin{pmatrix}
0 \\
1 
\end{pmatrix},~
    \ket{e} = \begin{pmatrix}
1 \\
0 
\end{pmatrix},
\end{align}
we obtain the exact form of the system's jump operators
\begin{align}
    &A(\mp \omega_0) = \sigma_\pm = \frac{\left(\sigma_x \pm i \sigma_y\right)}{2}.
\end{align}
Furthermore, $B = \sum_k g_k \left( a_k + a_k^\dagger\right)$ are the reservoir's operators, with $a_k^\dagger$, $a_k$ being the usual creation and annihilation operators respectively. Then, in the continuum limit, for a heat bath in a Gibbs state, we obtain the following~\cite{Rivas_2010,Rivas_2017}
\begin{align}
    R(\Omega) = J(\Omega)\left( N\left(T,\Omega \right)+1\right),
\end{align}

where $N (T,\Omega)=\left[\exp{{\Omega}/{(T)}}-1\right]^{-1}$ is the Bose-Einstein distribution at temperature $T$. We transform to the neutral system of units, by setting $\omega_0$ as a reference frequency, so that $N (T,\Omega)=N (T_\mathrm{eff},\frac{\Omega}{\omega_0})$, with $T_\mathrm{eff}=T/\omega_0$. The real temperature would read in this case $T= \frac{\hbar \omega_0} { k_{B}}T_\mathrm{eff}$, and $J(\Omega)$ is the spectral density.

\subsection{Cumulant vs Redfield, Davies and exact numerics}

Using the content of subsection \ref{sec:spin-boson} above, we obtain a specific form for the action of the superoperator of the cumulant equation (cf. Ref.~\cite{Rivas_2017}, see Remark~\ref{rem:notation})
\eqref{eqn:ActionOfCumulant}:
\begin{align}\label{eqn:SBcumulant}
	&\tilde{K}^{(2)}(t) \tilde{\rho}_S =
	\sum_{\mu,\nu=\pm} \Gamma_{\mu\nu}(t)   \left(\sigma_\mu \tilde{\rho}_S \sigma_{\nu}^\dagger - \frac{1}{2} \left\{\sigma_\nu^\dagger \sigma_\mu, \tilde{\rho}_S \right\} \right),
\end{align}
where we follow the notation from reference~\cite{Rivas_2017}, by denoting $\Gamma_{\mu\nu}(t) \equiv \gamma(-\mu \omega_0,- \nu \omega_0,t)$ (see equation \eqref{eqn:KossakowskiCumulantIntegrated}).

The cumulant equation for the spin-boson model has been studied before in the literature~\cite{Rivas_2019}. However, a comparison of the cumulant equation dynamics, and its (also non-Markovian) alternatives with an exact solution is required, to assess and compare their accuracy. In this section,
we use a testbed of the spin-boson model to provide such a comparison. We follow the choice of parameters proposed in Ref.~\cite{Rivas_2019}, as we want to compare our results with the results therein. This choice goes beyond the weak coupling regime, in which the second order dynamical equation are most accurate. Still, (anticipating a bit) similar qualitative conclusions hold for any other reasonable set of parameters (see Fig.~\ref{fig:heom_plots}). Yet weaker coupling diminishes discrepancy between solutions due to different dynamical equations, and therefore obscures the presentation.  

In Fig.~\ref{fig:heom_plots} (bottom plot) we observe that the solution to the dynamics due to the cumulant equation has higher fidelity with the numerically exact solution, obtained with the hierarchical equations of motion (HEOM), than any other considered equation, i.e., Bloch-Redfield equations~\cite{cattaneo2019local}, Davies-GKSL Markovian master equation and the FA equation. In the middle and top plot of Fig.~\ref{fig:heom_plots} we observe that the cumulant equation predictions for the evolution of populations and coherence follow closely the exact solution due to HEOM. For long times populations predicted by the cumulant equation matches the predictions obtained with Davies-GKSL equation, hence does not match the HEOM dynamics (top plot in Fig.~\ref{fig:heom_plots}). The discrepancy for longer times is due to a mismatch between bare and renormalized (effective) Hamiltonians. Here, the dynamics for all methods is computed w.r.t. to the bare Hamiltonian.  To recover an agreement between HEOM and the cumulant equation at later times the renormalized Hamiltonian should be used for cumulant equation instead~\cite{RenormalizationPaper,Lobejko_Mean-Force}, however this way accuracy at shorter times can be diminished~\cite{correa2023potential}. The renormalization methodology is also applicable for other equations presented in the simulation herein, see Remark~\ref{rem:MF_Ren} at the end of this subsection. Finally, the beyond-weak coupling regime makes the discrepancy non-negligible, and visible in the plot.


\begin{figure}
    \includegraphics[width=1 \columnwidth]{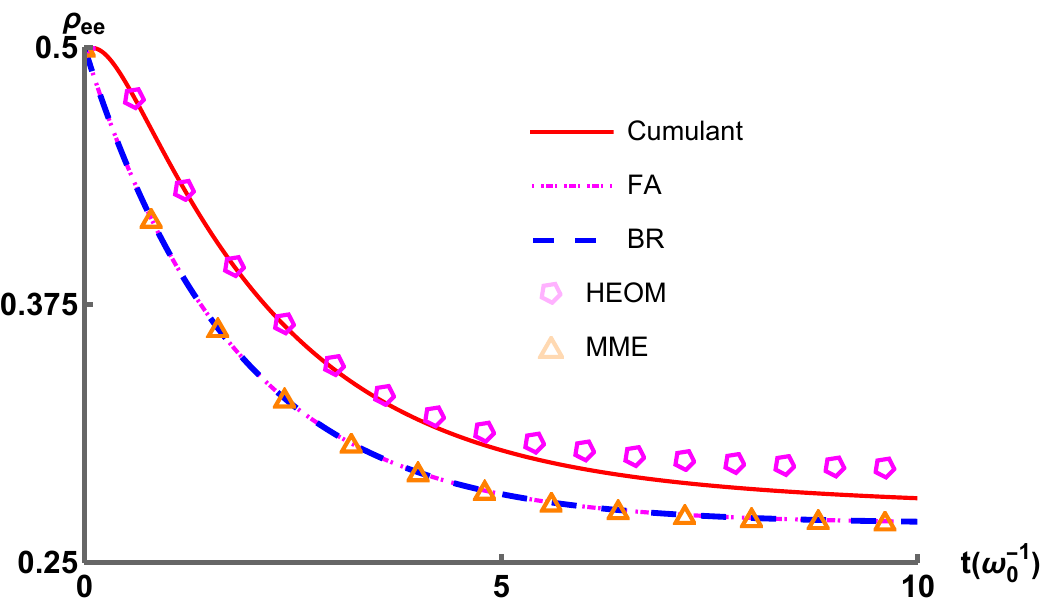}
    \includegraphics[width=1 \columnwidth]{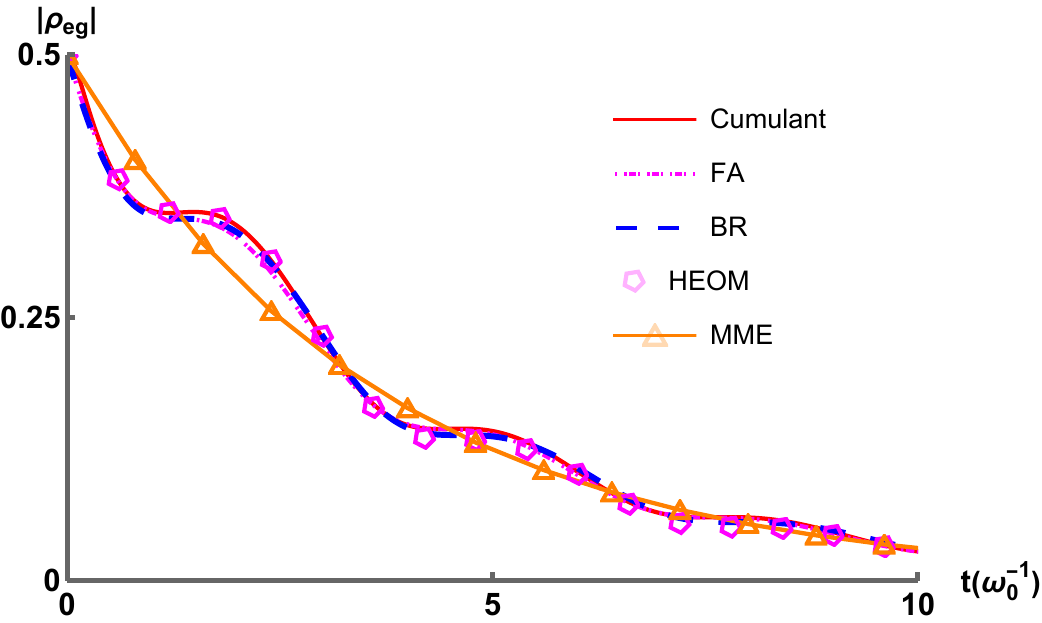}
    \includegraphics[width=1 \columnwidth]{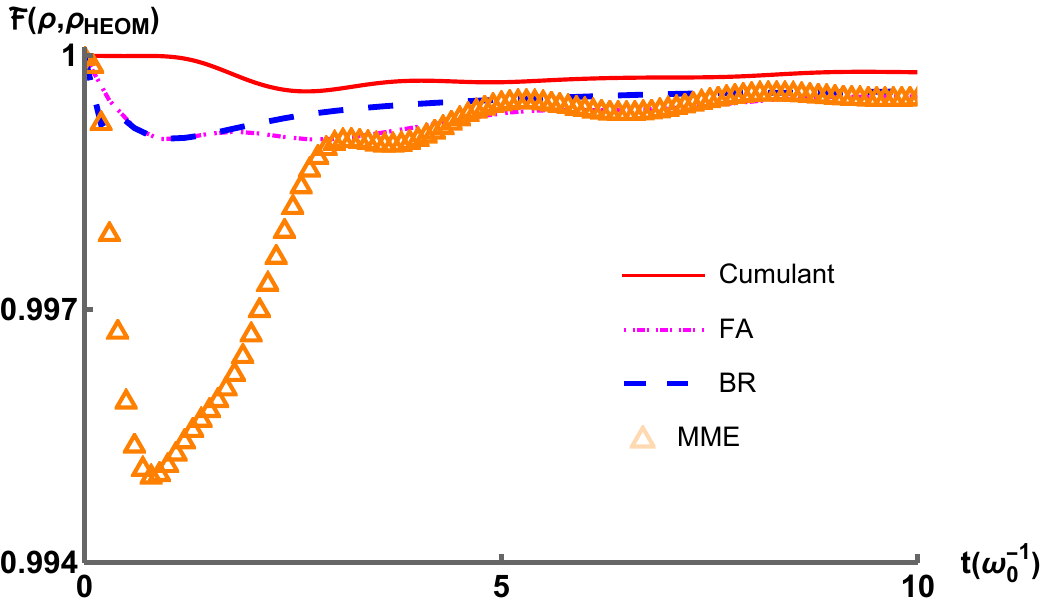}
    \caption{The plots show  a comparison of the original cumulant equation for the spin-boson model with the Hierarchichal Equations of motion (HEOM) which is  a method to obtain numerically exact solutions and the Bloch-Redfield equation which is the most popular equation for non-markovian dynmaics. In 
    (a) we report the evolution of the excited state population, in (b) the absolute value of the coherence between the ground state and the excited one. 
    The parameters chosen for this simulation are $T=1$, $\omega_{c}=5$, $\alpha=0.05$, $\omega_{0}=1$ and absolute and relative tolerance for numerical integration $\epsilon=10^{-14}$ for all of the methods above.  
    In (c) we report the fidelity calculated between the density matrix obtained considering the Bloch-Redfield equation and the cumulant equation with respect to the solution of the HEOM considered as target state. The HEOM simulation a maximum hierarchy of seven ADOS. We can observe all Non-Markovian approaches are close to the numerically exact dynamics, a fact that is reflected in their really high value of fidelity. The cumulant equation shows a better accuracy than the Bloch-Redfield equation. In the plots the cumulant equation corresponds to the red solid line, the FA equation corresponds to the dash-dotted pink line, the Bloch-Redfield equation (BR) corresponds to the dashed blue curve, while the orange curve with markers denote the Markovian master equation (MME).}
    \label{fig:heom_plots}
\end{figure}

The plots in Fig.~\ref{fig:heom_plots} let us also compare the dynamics provided by the FA equation with the solutions due to the cumulant equation, numerically exact method of HEOM, and other equations. In the top plot of Fig.~\ref{fig:heom_plots}, we observe that the evolution of populations predicted by the FA equation matches almost perfectly the evolution obtained with Davies-GKSL for all times, therefore discrepancy with the cumulant equation at shorter times must be reported. Interestingly, the middle plot in Fig.~\ref{fig:heom_plots} shows that the FA equation predicts the evolution of coherence that well matches the evolution obtained with the cumulant equation and HEOM. Finally, we report that the FA equation exhibits quite high fidelity with the exact solution (bottom plot in Fig.~\ref{fig:heom_plots}). The dynamics provided by the FA equation has better fidelity with exact solution that Davies-GKSL equation (especially at short times), and similar fidelity to Bloch-Redfield equation, however a bit worse fidelity than cumulant equation itself. The details of the simulation and an elaborated description of methods are available in the Appendix.

\begin{remark}\label{rem:MF_Ren} 
    While the dynamics due to the HEOM, presented here, converges to the Gibbs state w.r.t. an effective, the so-called mean-force Hamiltonian~\cite{Lobejko_Mean-Force,trushechkin2022}, the dynamics due to other equation does not. Using renormalization technique proposed in Ref.~\cite{RenormalizationPaper} and the mean-force Hamiltonian corrections in Refs.~\cite{Lobejko_Mean-Force,trushechkin2022} it is possible to modify Davies-GKSL, FA and the cumulant equation so that they converge to mean-force Gibbs state as HEOM does (formally up to the second order). However, increase in fidelity for long times would result in loss of accuracy at shorter times scales~\cite{correa2023potential}. There is a hope that more sophisticated approach to renormalization can work for all time scales. 
\end{remark}

\subsection{Application of the approximated dynamics}

\begin{figure*}[ht]
\includegraphics[width=1 \columnwidth]{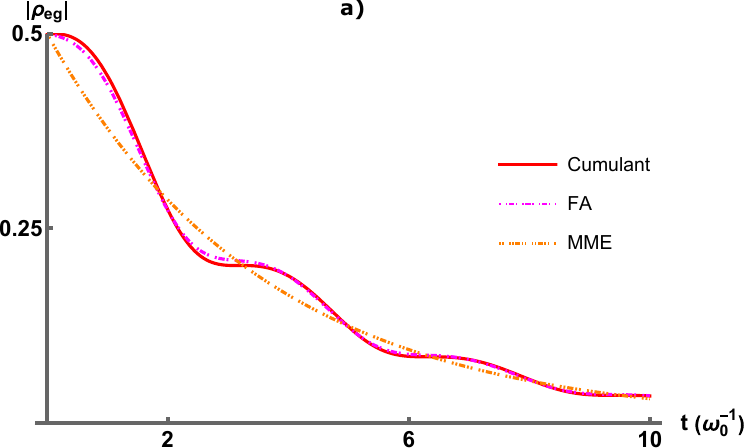}
\includegraphics[width=1 \columnwidth]{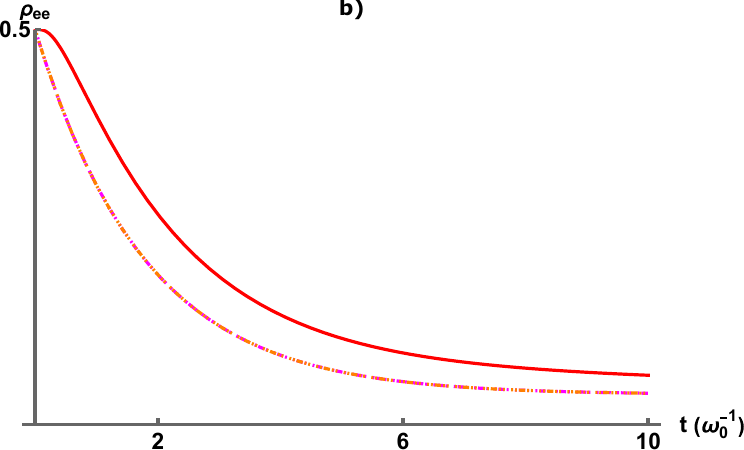}
\caption{Spin-Boson model -- Interaction picture evolution of the coherence (left), and population of the excited state (right) of the reduced density matrix of the system $\rho_S$. The initial state of the system is ${\rho_S}_{ij}(0)=\frac 12$. The reservoir is a heat bath at temperature $T_\mathrm{eff}=1$ in units of $\omega_0$. The short-dash-dotted orange curve is the evolution computed via secular Davies master equation, the solid red curve is the evolution obtained with the cumulant equation in Eq.~\eqref{eqn:GeneratorK}, and the dashed pink curve refers to the FA equation in Eq.~\eqref{eqn:DynMapStar}. In the b) plot populations due to FA equation and MME coincide. The system-bath coupling constant and the cutoff frequency are $\alpha =0.05$, $\omega= 5 \omega_0$, respectively (see also Ref.~\cite{Rivas_2017}). } 
\label{fig:comp_SB}
\end{figure*}

Here, we address a comparison of the different techniques that we have so far discussed. In particular, we report the evolution of the population of its excited state ${\rho_S}_{ee}$ and of the modulus of the coherence ${\rho_S}_{eg}$ between its excited and the ground state, having as initial state $\left(\rho_S(0)\right)_{ij}=1/2$. The FA equations are readily obtain by replacing $\Gamma(t)$ in equation \eqref{eqn:SBcumulant} by $\Gamma^\star_{\mu\nu}(t) \equiv \gamma^\star(-\mu \omega_0,- \nu \omega_0,t)$ (see Eq.~\eqref{eqn:DynMapStar}).

In Fig.~\ref{fig:comp_SB}, we compare in greater detail the dynamics obtained with the FA equation with the dynamics given by the cumulant equation and the dynamics given by the secular Davies-GKSL equation on the test ground of the spin-boson model. In particular, we observe that the evolution of the population predicted by the FA equation (almost exactly, up to small oscillations) matches the evolution predicted by the Davies-GKSL equation, whereas the cumulant equation forecasts slower decay of the excited state population. Regarding the evolution of the (modulus of) coherence, we observe a good agreement between the predictions of the FA equation and the cumulant equation. The values of the modulus of coherence given by the FA equation and the cumulant equation oscillate around the decay curve predicted by the secular Davies-GKSL equation. 

The deviation of the evolution predicted by the FA equation from the evolution predicted by the Davies-GKSL equation (Markovian master equation in secular approximation) is a good sign that our approaches can be used to model an experimental situation where the standard quantum optics master equation is no longer valid~\cite{Andersson2019}. At the same time, we refer~\cite{suarez1992memory} for an extensive treatment of the non-positivity of the Bloch-Redfield equation for the spin-boson model.




This promising result allows us to surmise that the FA equation we have introduced can be applied to tackle the experimental situation showing  Fano coherences and coherent transport of excitation in light-harvesting systems~\cite{Brumer_2018, koyu2021}.

\subsection{A note on the spectral density}\label{sec:SD}

In this place, we discuss our choice of the profile of the spectral density used to plot the dynamics obtained with the Davies-GKSL equation, the Bloch-Redfield equation, the cumulant equation, its FA approximation and exact numerics of HEOM. Following Ref.~\cite{Rivas_2017}, we choose an Ohmic spectral density with exponential cutoff profile, i.e, $J(\Omega) = \alpha \Omega \times e^{-\frac{\abs{\Omega}}{\omega_c}}$, with cutoff frequency $\omega_c$. The presence of cutoff removes divergences from Eq.~\eqref{eqn:KossakowskiCumulantIntegrated}, and consequently from Kossakowski matrix of cumulant superoperator. It allows to perform numerical integration and obtain the results plotted in Fig. \ref{fig:comp_SB} a) and b). The magnitude of the coupling constant (for the sake of Fig.~\ref{fig:heom_plots}, Fig.~\ref{fig:comp_SB} and Fig.~\ref{fig:TD}) is $\alpha=0.05$, and $\omega_c = 5 \omega_0$. Our choice of parameters allows to compare our findings with the results in Ref.~\cite{Rivas_2017}, and falls within a reasonable range of values~\cite{hartmann2020accuracy}. Despite the fact, that formally the above choice of the magnitude of the coupling constant goes beyond the weak coupling regime, we observe in the bottom plot of Fig.~\ref{fig:heom_plots}, high fidelity  of the cumulant equation and FA equation dynamics (w.r.t. HEOM), what confirms the above statement.
We remark that, different choices of the spectral density and cutoff profile account for different rates at which the system approaches the equilibrium. In particular, it is evident, from the plots in Fig.~\ref{fig:comp_SB} that the cumulant equation  with $\omega_c =5\omega_0$  oscillates around the Markovian evolution, whereas higher cutoff frequency might result in completely different behavior.


\subsection{Evidence of non-Markovianity}

A natural to see if the dynamics is non-Markovian~\cite{BLP} is to study the evolution of the trace distance between two states, namely $\frac12 || \rho - \sigma || = \frac12 \Tr{\sqrt{(\rho - \sigma)^\dagger (\rho - \sigma)}}$. A Markovian evolution, tending monotonically to the thermal equilibrium, will force two states
to become less and less distinguishable as time passes. 
In Fig. \ref{fig:TD} we report the evolution of the trace distance 
between two initially orthogonal states $\sigma_{ii}=\frac 12,~~\sigma_{eg}=-\frac 12i$ and $\rho=\sigma^*$ for the spin-boson model developed in Section \ref{sec:spin-boson}.


\begin{figure}[h!]
    \centering
     \includegraphics[width=1 \columnwidth,trim={0cm 0cm 0 0},clip]{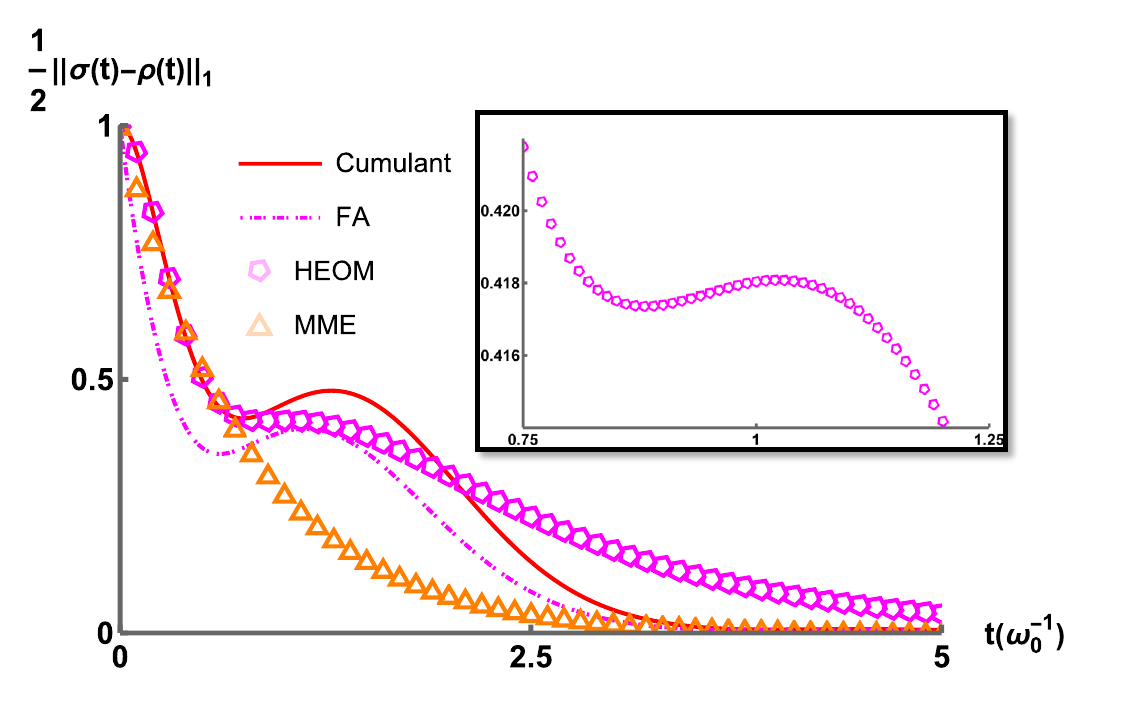}
    \caption{ The plot of trace distance between the reduced density matrices obtained with initial states $\sigma_{ii}=\frac 12,~~\sigma_{eg}=-\frac 12i$, $\rho=\sigma^*$ and different dynamical equations. Dynamics given by the FA equation is given by a short-dash-dotten pink curve, by the cumulant equation by a solid red curve, by the fully-secular Markovian master equation (MME) by a short-dash-dotted orange curve, and the dynamics obtained with HEOM is given by pink markers. The non-monotonic behaviors of curves, obtained with the cumulant equation, FA equation and HEOM, indicate the non-Markovianity of dynamics. The inset highlights the non-monotonic behavior of the trace distance curve corresponding to HEOM. Here, the temperature is $T_\mathrm{eff}=5$, the system-bath coupling is $\alpha=0.05$, and $\omega_c=5\omega_0$. The cumulant equation shows a better accuracy than the Bloch-Redfield equation. In the plots the cumulant equation corresponds to the red solid line, the FA equation corresponds to the dot-dashed pink line, the BR equation corresponds to the dashed blue line, while the grey  hombuses and orange triangles denote the Bloch-Redfield equation with the partial secular approximation and the GKLS equation respectively}
    \label{fig:TD}
\end{figure}

An increase in distinguishability (see Fig.~\ref{fig:TD}) between the two orthogonal states evolving under the same dynamical map is indeed a signature of how the cumulant equation, our FA equation and HEOM exhibit non-Markovian dynamics and the backflow of information from the environment to the system~\cite{BLP, RHP, CKR11} (see also Fig.~2 in Ref.~\cite{Rivas_2017}). The non-monotonic behavior of trace distance evolution provided by HEOM is less evident than for the cumulant equation or for FA equation. This surprising feature of the cumulant and FA equation requires and deserves a further investigation that is beyond the scope of this manuscript. 

Here, we need to report that we were able to show the non-monotonicty of the trace distance solely in a regime of relatively strong coupling, i.e., $\alpha=0.05$ and strong driving ($T_\mathrm{eff}=5$). While in Ref.~\cite{Rivas_2017} the non-Markovianity for the cumulant equation was shown at low temperatures ($\alpha=0.05$, $T_\mathrm{eff}=0$), the FA equation only  becomes non-Markovian at higher temperatures (even for $T_\mathrm{eff}=1$ we didn't observe the effect). Furthermore, in the regime in which the cumulant and the FA equations are most accurate, i.e., in the weak coupling regime, we were unable to provide evidence for non-Markovianity of dynamics via non-monotonicity of the trace distance. This brings us to question pure existence of non-Markovianity (indicated by non-monotonicity of the trace distance) in the weak coupling regime.

\section{Convergence to quasi- and fully-secular master equations.} \label{sec:Interpolation}
The necessity to employ a fully- or quasi-secular master equation can be traced back to the problem of choosing a proper time interval when the dynamics of the system should be studied. 
To clarify the following discussion, we consider a complex system  whose transition frequencies can be grouped into two or more well-separated groups. 
Obviously, the simplest case presenting this configuration is a three-level system (also known as a qutrit) -- see Section \ref{sec:qutrit_b} for the details.

In principle the cumulant equation properly approximates the dynamics in all time regimes~\cite{Alicki1989,RenormalizationPaper}. In particular, the cumulant equation reproduces quasi- and fully-secular master equations for short and long time scales, respectively.
Similarly, generator of the FA equation $\tilde{K}^{(2,\star)}(t)$  developed in in Section \ref{sec:SM}, reproduces the generators of the quasi- and fully-secular master equations in short and long times regimes, respectively. Namely, we  have 
\begin{align}
    &{\tilde{K}^{(2,\star)}(t)} \stackrel{t \to \infty}{\approx} {t \tilde{\mathcal{L}}^\mathrm{fs}},
    \\
    &{\tilde{K}^{(2,\star)}(t)} \stackrel{\frac{1}{\Delta\Omega} \ll t \ll \frac{1}{\Delta \omega}}{\approx} {t \tilde{\mathcal{L}}^\mathrm{qs}},
\end{align}
where $\Delta\omega$ and $\Delta \Omega$ are as in Sec.~\ref{sec:TimeScales} (see also Box 1).
The above long-time limit is understood effectively at the level of dynamics, as in the case of the cumulant equation. The remnant (bounded) terms in dissipators reflect the non-Markovian characteristics of dynamical equations, yet as time increases the diagonal (Markovian) parts of the superoperatros dominate and the influence of the bounded terms becomes less and less important for the dynamics~\cite{MW_preparation}.

From Eq.~\eqref{eq:singlestar} it is almost evident that the  FA equation reproduces the fully-secular master equation in the long-time limit.  The basic idea of the proof that $\star$-regularized cumulant equation, i.e., FA equation quasi-secular master equation (in adequate time regime) is to gather the transition frequencies in (at least two) well-separated groups. Next, a series of approximations can be justified. A more thorough explanation and detailed proofs of the above properties are provided in Appendix~\ref{app:sec:Relax}.
Below we provide a sketch of the proof in the special case of V-type three level system.

Consider 
such a system with transition frequencies $\{\omega_i\}_{i=1,2}=\pm\left\{\omega_0\pm\Delta/2\right\}$, in the $\frac{1}{\omega_0} \le t \le \frac{1}{\Delta}$ time regime ($\omega_0, \Delta>0$). The sketch of our proof start with writing approximate form of the superoperator ${\hat{K}}^{(2)}(t)$ (which refers to cumulant equation or FA equation) in the Schr{\"o}dinger picture~(see Eq.~
\eqref{eqn:ActionOfCumulantSch}, and also Ref.~\cite{RenormalizationPaper}). For conciseness of presentation we skip the extra indices of jump operators.
\begin{widetext}
\begin{align}\label{eqn:step1}
	{\hat{K}}^{(2)}(t) \rho_S =-it \left[H_\mathcal{S},\rho_S\right]+ \sum_{\omega, \omega^\prime} \left(1+\frac{it}{2} (\omega-\omega^\prime)\right)\hat{\gamma} (\omega,\omega^\prime,t)   \left(A (\omega) \rho_S A^\dagger (\omega^\prime) - \frac{1}{2} \left\{A^\dagger (\omega^\prime) A (\omega), \rho_S \right\} \right)+\cdots 
\end{align}
In the next step we group Bohr frequencies (in the case of V-system the grouping is trivial), and perform the following approximation to the elements of Kossakowski matrix
\begin{align}
    \left(1+\frac 12 [it(\omega-\omega^\prime)]\right)\hat{\gamma} (\omega,\omega^\prime,t) \approx \left(1+\frac{it}{2} (\mathrm{Sgn}(\omega)\omega_0-\mathrm{Sgn}(\omega^\prime)\omega_0)\right)\hat{\gamma} (\mathrm{Sgn}(\omega)\omega_0,\mathrm{Sgn}(\omega^\prime)\omega_0,t)\equiv \Gamma_{\pm\pm}. 
\end{align}
The above approximation is justified in the $\frac{1}{\omega_0} \le t \le \frac{1}{\Delta}$ time regime, provided that spectral density of the reservoir does not vary too much with each group. At this point, we also skip the residual terms
\begin{align}\label{eqn:step2}
    \eqref{eqn:step1}&\approx-it \left[H_\mathcal{S},\rho_S\right]&+ \sum_{\pm_1,\pm_2} \sum_{\pm_3,\pm_4} \Gamma_{\pm_1\pm_2}   \left(A (\pm_1\omega_0\pm_3\Delta) \rho_S A^\dagger (\pm_2\omega_0\pm_4\Delta) - \frac{1}{2} \left\{A^\dagger (\pm_2\omega_0\pm_4\Delta) A (\pm_1\omega_0\pm_3\Delta), \rho_S \right\} \right) 
\end{align}
The second summation allows to introduce ''quasi-secular'' jump operators $A^\text{qs}(\omega_0)=A(\omega_1)+A(\omega_2)$.
\begin{align}\label{eqn:step3}
    \eqref{eqn:step2}=-it \left[H_\mathcal{S},\rho_S\right] + \sum_{\pm_1,\pm_2} \Gamma_{\pm_1\pm_2}   \left(A^\text{qs} (\pm_1\omega_0) \rho_S {A^\text{qs}}^\dagger (\pm_2\omega_0) - \frac{1}{2} \left\{{A^\text{qs}}^\dagger (\pm_2\omega_0) A^\text{qs} (\pm_1\omega_0), \rho_S \right\} \right)
\end{align}
Subsequently, we perform the secular approximation, that eliminates terms with opposite signs.
\begin{align}
    \eqref{eqn:step3}=-it \left[H_\mathcal{S},\rho_S\right] + \sum_{\omega=\pm\omega_0} \Gamma_{\pm\pm}   \left(A^\text{qs} (\omega) \rho_S {A^\text{qs}}^\dagger (\omega) - \frac{1}{2} \left\{{A^\text{qs}}^\dagger (\omega) A^\text{qs} (\omega), \rho_S \right\} \right). 
\end{align}
The final step is to show that 
\begin{align}
    \Gamma_{\pm\pm} = \hat{\gamma} (\pm \omega_0,\pm \omega_0,t) \approx t \gamma(\pm \omega_0),
\end{align}
where $\gamma(\omega)$ is the Kossakowski matrix of the fully-secular Davies-GKSL equation. The above is readily obtained (with ''$=$'') for FA approximation of the cumulant equation (see Eq. \eqref{eq:singlestar}). However, in the case of the cumulant equation (that should a priori correctly describe all times regimes), the proof requires a deeper insight. The above considerations yield
\begin{align}
    {\hat{K}}^{(2)}(t) \rho_S \approx -it \left[H_\mathcal{S},\rho_S\right] + t \sum_{\omega=\pm\omega_0} {\gamma}(\omega)   \left(A^\text{qs} (\omega) \rho_S {A^\text{qs}}^\dagger (\omega) - \frac{1}{2} \left\{{A^\text{qs}}^\dagger (\omega) A^\text{qs} (\omega), \rho_S \right\} \right). 
\end{align}
For a more detailed and general version of the proof see Appendix \ref{app:sec:Relax}.
\end{widetext}

\begin{figure}[ht]
    \centering
    \includegraphics[width=1 \columnwidth]{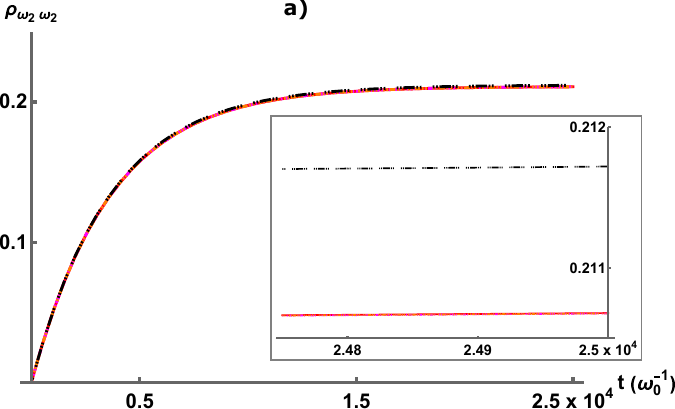}
    \includegraphics[width=1 \columnwidth]{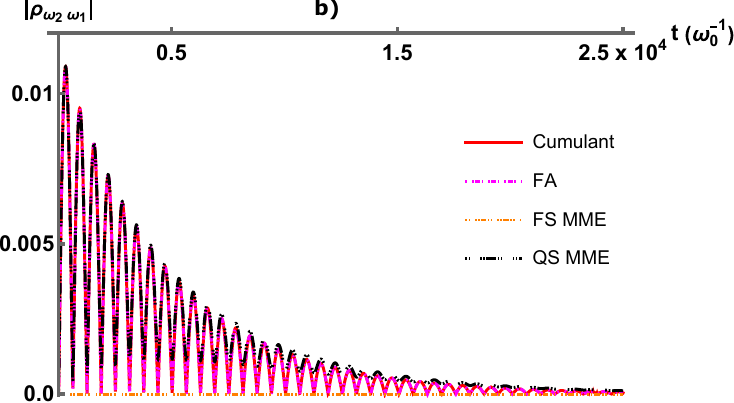}
    \caption{Qutrit-Boson model -- interaction picture evolution of the a) population ${\rho}_{\omega_2 \omega_2}$ of the excited state b) modulus of coherence $\abs{{\rho}_{\omega_2 \omega_1}}$ between the two excited levels. The inset in a) highlights the long-times behavior of the curves. The initial state of the system is the ground state. The dashed pink curve represents the evolution given by the FA equation, the short-dash-dotted orange curve represents the evolution provided by the fully-secular Markovian master equation, and the long-dash-dotted black line represents the evolution given by the quasi-secular Markovian master equation. The cumulant equation given by solid red curve is indistinguishable from the curve for the FA equitation in the given resolution. The reservoir is a heat bath at effective temperature $T_\mathrm{eff}=1$ in units of $\omega_0$. The spectral density oh the reservoir is Ohmic with exponential cutoff, i.e., $J(\omega) = \alpha \omega \exp{- \abs{\frac{\omega}{\omega_c}}}$, where $\omega_c=10 \omega_0$.  The choice of the square of coupling constant $\alpha \approx 1.76\times 10^{-5}$ and the splitting parameter $\Delta\omega= \omega_0/100,$ correspond to the following ratio between the rate $\gamma$ of spontaneous processes and the splitting $\Delta\omega$ between excited states: $\gamma/\Delta\omega=0.01$, which can be classified as a weak coupling. See also Fig.~\ref{fig:3lvlV2}. 
    }
    \label{fig:3lvlV1}
\end{figure}

\subsection{Application of approximations to the qutrit-boson model}
\label{sec:qutrit_b}

In this Section, we chose yet another model to test our approach. As we announced in the introduction, the cumulant equation and FA equation are able to properly describe the dynamics in all time regimes. The simplest example of a system in which more than one time scale appears is a three level V-type (or $\Lambda$) system. For this kind of system, quasi- and fully-secular master equations, that a priori describe different and disjoint time scales, yield distinct prediction~\cite{Brumer_2018,koyu2021}. Here, we show that cumulant equation and FA equation indeed interpolate between the predictions of  quasi- and fully-secular master equations in the ``grey zone'' (intermediate time regime), whereas in short and long time scales they reproduce the evolution given by quasi- and fully-secular equations, respectively. Even more importantly, we show that at each time the cumulant equation and the FA equation are closer in the trace distance to the exact solution than any secular MME. 
\begin{figure}[ht]
\includegraphics[width=1 \columnwidth]{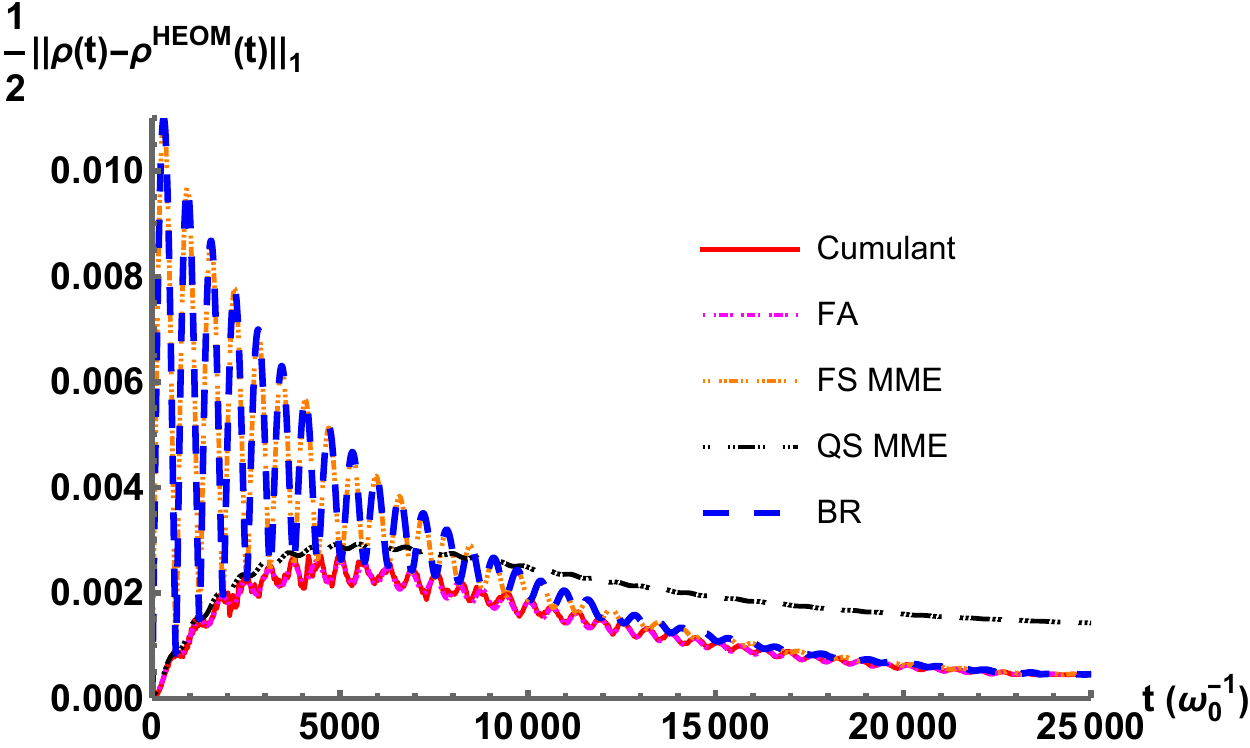}
\caption{ The trace distance comparison between dynamics obtained with various dynamical equation and numerically exact method of HEOM. The parameters and initial conditions of the simulation are the same as in Fig.~\ref{fig:3lvlV1}. The dash-dotted pink curve corresponds to the FA equation, the red solid curve corresponds to the cumulant equation, the short-dash-dotted orange line corresponds to the fully secular master equation, The dashed blue line corresponds to the BR equation and finally the long-dash-dotted black curve corresponds to the quasi secular master equation. } 
\label{fig:3lvlheom}
\end{figure}




\begin{figure*}[ht]
\includegraphics[width=1 \columnwidth]{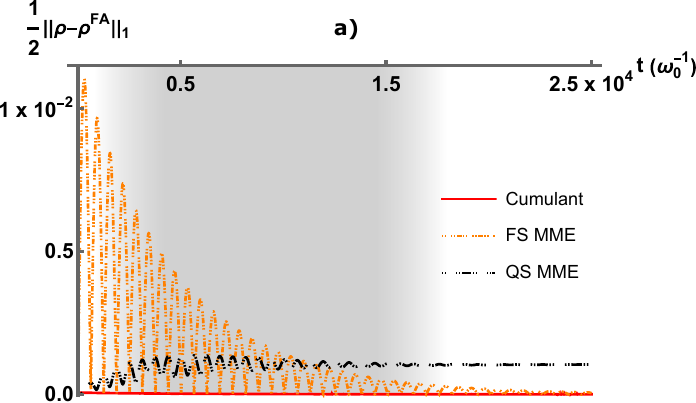}
\includegraphics[width=1 \columnwidth]{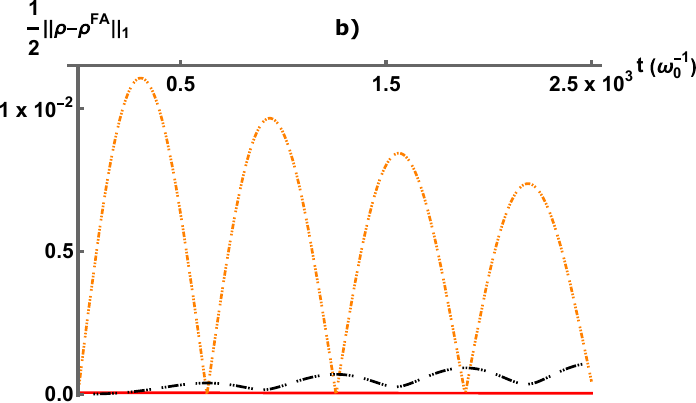}
\includegraphics[width=1 \columnwidth]{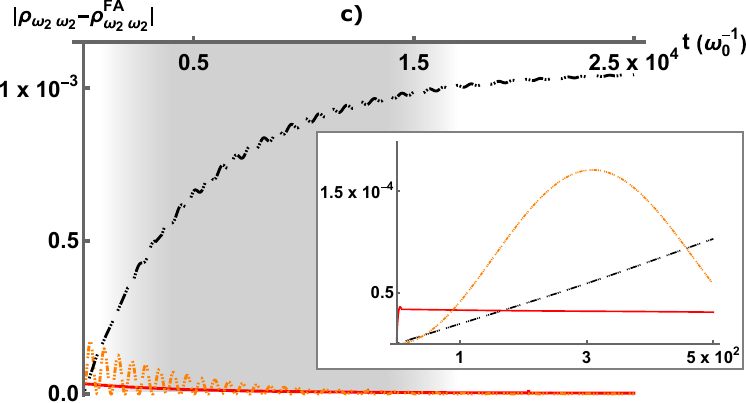}
\includegraphics[width=1 \columnwidth]{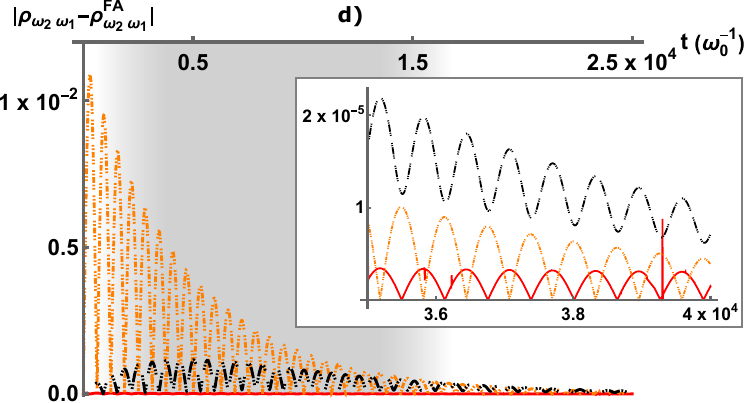}
\caption{ Qutrit-Boson model -- alternative (w.r.t. Fig.~\ref{fig:3lvlV1}) comparison between dynamics predicted by the fully-secular Markovian master equation, quasi-secular Markovian master equation and FA equation. The parameters and initial conditions of the simulation are the same as in Fig.~\ref{fig:3lvlV1}. Plots depicts difference between the FA equation and other equations regarding a) trace distance between solutions, b) also trace distance, but shorter times, c) absolute value of difference in populations of the highest energy level, and d) modulus of difference in coherence between excited states. 
The short-dash-dotted orange curves correspond to the fully-secular MME, the long-dash-dotted black curves correspond to the quasi-secular MME, and the solid red curves correspond to the cumulant equation.}
\label{fig:3lvlV2}
\end{figure*}

The qutrit-boson model concerns a three-level system linearly coupled to a bath of harmonic oscillators. The Hamiltonians of the model take the form:
\begin{align}
    &H_S = (\omega_0+\frac{\Delta \omega}{2}) \ket{\omega_2}\bra{\omega_2}+(\omega_0-\frac{\Delta \omega}{2}) \ket{\omega_1}\bra{\omega_1},\\
    &H_B = \sum_k \omega_k a_k^\dagger a_k,\\
    &H_\mathcal{I} = \left( \ket{g}\bra{\omega_2}+\ket{g}\bra{\omega_1}+\text{h.c.}\right) \otimes \sum_k g_k \left( a_k + a_k^\dagger\right),
\end{align}
where $\omega_0$ is the arithmetic average of transition frequencies between ground and excited states, 
namely $\omega_1 = \omega_0 - \frac{\Delta \omega}{2}$ and $\omega_2 = \omega_0 + \frac{\Delta \omega}{2}$. The continuous limit is conducted in standard way to give $J(\Omega) = \alpha \Omega \times e^{- \frac{\abs{\Omega}}{\omega_c}}$.

For this configuration the fully-secular master equation for the time evolution
(in the Schr\"odinger picture)  of $\rho^{\mathrm{fs}}(t)$ 
depends on the jumps operator relative to all the four
possible transition frequency in the system and it has the form: 
\begin{equation}
\begin{split}
&\dot{\rho}_S^\mathrm{fs}(t) = - i \left[ H_S, \rho \right] \\
&+ \sum_{\omega = \pm \omega_1, \, \pm \omega_2}  \gamma(\omega) \left ( A(\omega)  \rho A^\dagger (\omega)  - \frac12 \{ A^\dagger(\omega)  A(\omega) , \rho\} \right ). 
\end{split}
\end{equation} 
where the jump operators for the fully-secular master equation are
$ A(\omega_1) =  A^\dagger(-\omega_1) = \ketbra{g}{\omega_1}$ and $ A(\omega_2) = A^\dagger(-\omega_2) =  \ketbra{g}{\omega_2}$.
In contrast, the quasi-secular master equation takes into account the situation in which the two excited levels can be considered degenerate. Therefore, the frequency $\Delta\omega$ defining the short time scale enters at the level of the coherent evolution, while the dissipative part depends only on the average frequency $\omega_0$. 
Thus, the evolution of the reduced density matrix of the system in the quasi-secular approach $\rho^{\mathrm{qs}}(t)$
\begin{equation}
\begin{split}
    &\dot{\rho}_S^\mathrm{qs}(t) = - i \left[ H_S, \rho \right]  + \sum_{\omega = \pm \omega_0} \gamma(\omega) \left ( A^\mathrm{qs}(\omega)  \rho {A^\mathrm{qs}}^\dagger (\omega) \right.\\
    &\left.- \frac12 \{ {A^\mathrm{qs}}^\dagger(\omega)  A^\mathrm{qs}(\omega) , \rho\} \right ), 
\end{split}
\end{equation}
and the quasi-secular jump operators are $ A^\mathrm{qs}(\omega_0) = A^{\mathrm{qs}^\dagger}(-\omega_0)=\ketbra{g}{\omega_1} + \ketbra{g}{\omega_2})$ and it is clear that $A^\mathrm{qs}(\omega) \neq A(\omega)$ 
are jump operators of degenerated levels (see Section \ref{app:sec:Relax} of Appendix).

It is easy to notice that having a unique jump operator guiding the absorption 
from the ground state to both the excited states and one for emission, induces a coupling evolution 
of the population of these two levels and the consequent generation of coherence between them. 
This coherence cannot be observed only by looking at the fully-secular master equation in which evolution of coherence is decoupled from the evolution of populations. Instead, fully-secular MME is much better at later times for which quasi-secular MME fails by construction. Sufficiently long lasting interaction between an open system and reservoir allows to resolve arbitrarily closely placed energy levels. After that quasi-secular description can no longer hold.

In Fig.~\ref{fig:3lvlV1} we compare the evolution of the population ${\rho}_{\omega_{2}\omega_{2}}$ of highest energy level
and the coherence ${\rho}_{\omega_{2}\omega_{1}}$ between the two excited levels given by quasi- and fully-secular master equations, the cumulant equation, and derived here FA equation. The initial state of the open system is the ground state. As expected, the fully-secular master equation, due to its structure in which coherences and population evolve in a decoupled way, does not predict creation of coherence in the transient period (see Fig.~\ref{fig:3lvlV1} b)). On the contrary, the cumulant equation, the FA equation, and the quasi-secular master equation, capture the short-time behavior and process of establishing coherence. The possibility to catch the short-time behavior by the quasi-secular master equation was noticed in Ref.~\cite{Brumer_2018}. 
Therein the fully-secular master equation was compared to the quasi-secular master equation for a particular physical system, that is easy to be experimentally studied. As a consequence of the weak coupling the predictions for the evolution of populations due to different dynamical equation differ only marginally regarding numerical values (see Fig.~\ref{fig:3lvlV1} a)). However, we observe (see inset in Fig.~\ref{fig:3lvlV1} a)) that for the long times the quasi-secular MME gives different, inaccurate prediction for the magnitude of population than other equations, as expected. This is because of degenerate Bohr spectrum used to construct the quasi-secular generator as discussed above. 

In order to check the validity of the investigated herein equations in Fig.~\ref{fig:3lvlheom} we compare them with the numerically exact method of HEOM. More precisely, we plot curves of evolution of the trace distance between HEOM and quasi-secular MME, fully-secular MME, the cumulant equation, and the FA equation. Low numerical values of the trace distance in the plot in Fig.~\ref{fig:3lvlheom} indicate good proximity of each considered dynamical equation to the exact solution. Furthermore, it is evident that for short times quasi-secular equation performs better that the fully-secular one, whereas for longer times the table turns and the fully-secular equation outperforms the quasi-secular one. As expected the cumulant equation and the FA equation outperform both quasi- and fully-secular MME's, and Bloch-Redfield equation for all times, with cumulant equation being slightly closer to the exact solution that the FA equation. The long times difference between all considered equations and HEOM has the same explanation as in the case of the spin-boson model, see Remark~\ref{rem:MF_Ren}.


As it was discussed in Section \ref{sec:Interpolation}, the FA equation as well as the cumulant equation, should interpolate between dynamics provided by the quasi- and fully-secular master equation. For short times they should reproduce the dynamics provided by the quasi-secular MME, and for longer times their proximity to the solution provided by the fully-secular MME shall be observed. In Fig.~\ref{fig:3lvlV2} we provide a detailed comparison between dynamics provided by the FA equation and the cumulant equation, quasi-secular MME, and fully-secular MME. We plot the time evolution of i) trace distance between different solutions  for entire evolution (Fig.~\ref{fig:3lvlV2}a) and for short times (Fig.~\ref{fig:3lvlV2}b); ii) absolute value of difference in exited state population, with inset that focuses on short times; iii) modulus of difference of coherence with inset showing long times behavior. First, we observe  that (as expected) for short-times at each plot the solution due to the quasi-secular MME differs less from the dynamics due to the FA equation than than the solution due to the fully-secular MME. Consistently with the prediction, at later times the solution due to the fully-secular MME is closer to the dynamics predicted by FA than the quasi-secular MME. This behavior exemplifies how the FA equation interpolates between quasi- and fully-secular dynamics. Secondly, the solution obtained with the cumulant equation is always (except inset in Fig.~\ref{fig:3lvlV2} c))  much closer to the solution obtained with FA than solutions by quasi- and fully-secular master equations. Therefore, by the triangle inequality for the trace distance argument, we conclude that the cumulant equation also has the interpolation property. Yet we emphasize that at each point of evolution solutions due to the FA and cumulant equations are closer to the exact dynamics than solutions due to considered master equations (see Fig.~\ref{fig:3lvlheom}). Therefore, the interpolation property of the FA and the cumulant equations is just a consequence of them being pretty accurate in full time span.


\section{Conclusions}

We have provided a new tool to study the dynamics of open systems by proposing a filtered approximation (FA) of the cumulant equation. Our approach reverberates into a ``ready-to-use'' approach similar to the Davies-GKSL one that allows 
to compute the dynamical evolution of an open quantum system just by the knowledge of several discrete 
points of the spectral density sampled at the transition frequency of the system.

The obtained evolution, i.e., FA equation, inherits the property of cumulant equation,  that,  
unlike any completely positive master equation with time-independent generator, it covers the full range of times. In particular, it interpolates between quasi- and fully-secular Markovian master equations. 
The dynamical equation (FA equation) we have derived is as easy to use as the Davies master equation but allows us to witness the non-Markovian dynamics in the intermediate time scale. Therefore, as we have exemplified, the FA equation combines low computational complexity with high accuracy at all times scales, at least in the weak coupling regime.

For larger systems, where we would need to further reduce the computational complexity of simulations, FA equation can be  simplified by applying partial secular approximation. 
This was originally proposed for Bloch-Redfield master equation in \cite{cattaneo2019local}. However, it can be applied in the same way to cumulant equation as well as to FA equation. 

The FA equation while being perfect tool to study single bath for not too strong coupling,
regarding two or more baths it may not be a good tool to obtain steady state. Indeed, as we showed, for long times it can be approximated to a good extent by Davies equation, while
it is known, that for two baths and two coupled systems between the baths, for small coupling, the quasi-secular master equation works much better \cite{hofer2017markovian,Gonzalez_2017}. However, we believe that FA will still work pretty well in the  transient regime. 
 
There are plenty of possible developments that can be carried out with our toolkit. One can, for instance, address how the dissipation of a compound system influences the emergence of collective phenomena, e.g.,  synchronization and dissipative phase transitions,  or how quantum correlations are built-in time. 
It can also pave the way for new techniques for quantum noise spectroscopy, with the dissipating system used as a probe to study the environmental degree of freedom. 

Moreover, it is an interesting question of how to extend our result to time-dependent Hamiltonians, 
for the problem of interest in control theory, to potentially address the role of non-Markovian feedback and in thermodynamic protocols like work extraction.  The time-dependent scenario would also be important for the description of noise in a quantum computer - the evolution there is clearly non-Markovian, and can also be considered to be a weak coupling because otherwise, we would not be able to keep quantum information in the quantum computer at all~\cite{Alicki-2002}. 

Our approach can also be employed to address many tantalizing research areas at the boundary between quantum physics, chemistry, and biology. 
The coherent transports of excitation in a noisy environment suggest unavoidable open quantum system descriptions that take into account all the times scales, 
and possible manipulation of such effects of paramount importance for life sciences can lead to the emerging of new technologies, like quantum-enhanced photovoltaic cells.


\section*{Acknowledgements}
This work is supported by  Foundation for Polish Science (FNP), IRAP project ICTQT, contract no. 2018/MAB/5, co-financed by EU  Smart Growth Operational Programme. AM is also supported by (Polish) National Science Center (NCN): MINIATURA  DEC-2020/04/X/ST2/01794. MW acknowledges grant PRELUDIUM-20 (grant number: 2021/41/N/ST2/01349) from the National Science Center.

\vspace{2.0em}
\bibliographystyle{apsrev4-2}
\bibliography{references}

\newpage
\begin{widetext}
~~\\
\appendix
\begin{center}
		{\Huge Appendix}
\end{center}
\addappheadtotoc

\section{Divergences}\label{app:sec:div}

In this Section, we discuss a problem of divergences appearing in the cumulant equation approach, although it was not present in Davies's weak coupling approach. Let the spin-boson model with transition frequency $\omega_0$ (see the main text), serve as our testing ground. We choose the reservoir to be a heat bath at temperature $T=0$, and spectral density $J(\Omega)=\alpha \Omega$. From this choice, we obtain $R(\Omega)=\alpha \Omega$, and:
\begin{align}
    &\gamma (\omega,\omega^\prime,t) = \alpha e^{i \frac{\omega^\prime-\omega}{2}t} \int_{0}^{\infty} d\Omega~\Omega
	    \left[t~ \mathrm{sinc} \left(\frac{\omega^\prime-\Omega}{2}t\right)\right] \left[t~ \mathrm{sinc} \left(\frac{\omega-\Omega}{2}t\right)\right].
\end{align}
The integral in the equation above is divergent. This fact constitutes a major problem regarding the description of the system's evolution. The simplest candidate for a countermeasure to this issue is to choose the cutoff profile and cutoff frequency. Let us choose then, two different cutoff profiles $(1)$ a ``sharp'' cutoff, and  $(2)$ a exponential one~\cite{Rivas_2017}, both with cutoff frequency $\omega_0 \ll \omega_c < + \infty$. 
\begin{align}\label{eqn:COreg}
    \gamma (\omega,\omega^\prime,t) \stackrel{\text{''sharp''}}{\longrightarrow} \gamma^{(1)} (\omega,\omega^\prime,t) &= \alpha e^{i \frac{\omega^\prime-\omega}{2}t} \int_{0}^{\omega_c} d\Omega~\Omega
	    \left[t~ \mathrm{sinc} \left(\frac{\omega^\prime-\Omega}{2}t\right)\right] \left[t~ \mathrm{sinc} \left(\frac{\omega-\Omega}{2}t\right)\right] \stackrel{t \gg \frac{1}{\omega_0}}{\approx} \delta_{\omega,\omega^\prime} \delta_{\omega,\omega_0} 2\pi t \alpha \omega_0, \\
	 \gamma (\omega,\omega^\prime,t) \stackrel{\text{''exp''}}{\longrightarrow} \gamma^{(2)} (\omega,\omega^\prime,t) &= \alpha e^{i \frac{\omega^\prime-\omega}{2}t} \int_{0}^{+\infty} d\Omega~\Omega e^{-\frac{\Omega}{\omega_c}}
	    \left[t~ \mathrm{sinc} \left(\frac{\omega^\prime-\Omega}{2}t\right)\right] \left[t~ \mathrm{sinc} \left(\frac{\omega-\Omega}{2}t\right)\right] \nonumber\\
     &\stackrel{t \gg \frac{1}{\omega_0}}{\approx} \delta_{\omega,\omega^\prime} \delta_{\omega,\omega_0} 2\pi t \alpha \omega_0 e^{-\frac{\omega_0}{\omega_c}},\label{eqn:COreg2}
\end{align}
As we obser in equations \eqref{eqn:COreg}, \eqref{eqn:COreg2} above, the elements of the Kossakowski matrix, and hence the dynamics, depend on the choosen cutoff function, and thence spectral density profile.

\section{ Approximations of cumulant equation}
\label{app:Regulazitations_details}

In this Section, we derive the main result of our paper, i.e., the filtered approximation (FA) equation. The FA equation is constructed by replacing the Kossakowski matrix of the cumulant equation $\gamma_{ij} (\omega,\omega^\prime,t)$ with its positive semi-definite approximation $\gamma^
\star_{ij} (\omega,\omega^\prime,t)$. In parallel, we also develop another kind of approximation to the cumulant equation, i.e., $\star\star$-approximation, not mentioned in the main text. We base our motivation on an observation that in the weak coupling limit, the famous Davies-GKSL equation (usually) provides a good approximation to the true dynamics, independently of the cutoff profile of the spectral density function. This situation motivates us to develop an approximation procedure for the cumulant equation, yielding dynamical equations that depend on a smaller number of parameters of the system yet preserving the non-Markovianity and CPTP properties of the cumulant equation. In this way, we derive the FA equation, which is almost as simple as the Davies-GKSL equation, and also $\star\star$-approximation, in which only the ''spontaneous processes part'' of the Kossakowski matrix is subjected to modifications.

\subsection{Filtered approximation (FA)}\label{app:sec:SM}

In order to derive our FA equation we start with the formulas for time-dependent relaxation coefficients $\gamma_{ij} (\omega,\omega^\prime,t)$ in the cumulant equation, 
\begin{align}\label{def:FTofB}
	&\gamma_{ij} (\omega,\omega^\prime,t) = \int_0^t ds \int_0^t dw~ e^{i (\omega^\prime s - \omega w)} \left< \tilde{B}_j (s) \tilde{B}_i (w) \right>_{\tilde{\rho}_B},
\end{align}
which can be written in terms of Fourier transform of interaction picture reservoir's operators:
\begin{align}
	B_i \left(\Omega \right) = \frac{1}{2 \pi} \int_{-\infty}^{\infty} du~ e^{i \Omega u} \tilde{B}_i (u),
\end{align}
then using the formula above and the fact that $\left< \tilde{B}_j (s) \tilde{B}_i (w) \right>_{\tilde{\rho}_B}=\left< \tilde{B}_j (s-w) B_i  \right>_{\tilde{\rho}_B}$ (for example for a thermal reservoir), we integrate over time variables, in order to obtain alternative formula for $\gamma_{ij} (\omega,\omega^\prime,t)$.
\begin{align}\label{eqn:gammaPreIntegrated}
    &\gamma_{ij} (\omega,\omega^\prime,t) = e^{i \frac{\omega^\prime-\omega}{2}t} \int_{-\infty}^{\infty} d\Omega~
	    \left[t~ \mathrm{sinc} \left(\frac{\omega^\prime-\Omega}{2}t\right)\right]  \left[t~ \mathrm{sinc} \left(\frac{\omega-\Omega}{2}t\right)\right]  R_{ji} (\Omega),
\end{align}
where $\mathrm{sinc}(x) \equiv \frac{\sin (x)}{x}$, and $R_{ji} (\Omega)=\left< {B}_j (\Omega) {B}_i \right>_{\tilde{\rho}_B}$. 

The next crucial ingredient in our approximation are two models of Dirac delta function of width $\frac{1}{\tau}$. 
\begin{align}\label{eqn:models}
    \delta^{(1)}_\tau (x) = \frac{\sin \tau x}{\pi x},~~\delta^{(2)}_\tau (x) = \frac{\sin^2 \tau x}{\pi x^2 \tau}.
\end{align}
These models converge to Dirac delta distribution for $t\to +\infty$, provided the rest of the integrand behaves ``well'' (is bounded and does not oscillate too fast) or the range of integration is finite. Here, we assume that the (sharp) cutoff frequency $\max\{\abs{\omega},\abs{\omega^\prime}\}  \le \omega_c \ll \Lambda$, for which $J(\Lambda) \approx 0$ exists but can be very large ($\omega_c$ refers to cutoff profile as in Section \ref{sec:NoteJ}).  At this point, we are ready to perform a series of approximations:
\begin{align}
    &\gamma_{ij} (\omega,\omega^\prime,t)  
    = e^{i \frac{\omega^\prime-\omega}{2}t} \int_{-\infty}^{\infty} d\Omega~
	    \left[t~ \mathrm{sinc} \left(\frac{\omega^\prime-\Omega}{2}t\right)\right]  \left[t~ \mathrm{sinc} \left(\frac{\omega-\Omega}{2}t\right)\right]  R_{ji} (\Omega)\\
	    &\stackrel{(I)}{=}e^{i \frac{\omega^\prime-\omega}{2}t} \sum_k\int_{-\infty}^{\infty} d\Omega~
	    \left[t~ \mathrm{sinc} \left(\frac{\omega^\prime-\Omega}{2}t\right)R^{\frac 12}_{jk} (\Omega)\right]  \left[t~ \mathrm{sinc} \left(\frac{\omega-\Omega}{2}t\right)R^{\frac 12}_{ki} (\Omega)\right]   \\
	    &\stackrel{(II)}{=} e^{i \frac{\omega^\prime-\omega}{2}t} \sum_k \lim_{\Lambda \to +\infty}\int_{-\Lambda}^{\Lambda} d\Omega~
	    \left[t~ \mathrm{sinc} \left(\frac{\omega^\prime-\Omega}{2}t\right)R^{\frac 12}_{jk} (\Omega)\right]  \left[t~ \mathrm{sinc} \left(\frac{\omega-\Omega}{2}t\right)R^{\frac 12}_{ki} (\Omega)\right] \\
	    &= e^{i \frac{\omega^\prime-\omega}{2}t} \sum_k \lim_{\Lambda \to +\infty}\int_{-\Lambda}^{\Lambda} d\Omega~
	    \left[2\pi \delta^{(1)}_{\frac t2} (\omega^\prime-\Omega)R^{\frac 12}_{jk} (\Omega)\right]  \left[2\pi \delta^{(1)}_{\frac t2} (\omega-\Omega)R^{\frac 12}_{ki} (\Omega)\right] \\
	   &\stackrel{(III)}{\approx}e^{i \frac{\omega^\prime-\omega}{2}t} \sum_k \lim_{\Lambda \to +\infty} \int_{-\Lambda}^{\Lambda} d\Omega~
	    \left[2\pi \delta^{(1)}_{\frac t2} (\omega^\prime-\Omega)R^{\frac 12}_{jk} (\omega^\prime)\right]  \left[2\pi \delta^{(1)}_{\frac t2} (\omega-\Omega)R^{\frac 12}_{ki} (\omega)\right] \\
	    &\stackrel{(IV)}{\approx} e^{i \frac{\omega^\prime-\omega}{2}t} \sum_k \lim_{\Lambda \to +\infty} \int_{-\Lambda}^{\Lambda} d\Omega~
	    \left[t~ \mathrm{sinc} \left(\frac{\omega^\prime-\Omega}{2}t\right)R^{\frac 12}_{jk} (\omega^\prime)\right]  \left[2\pi \delta(\Omega-\omega)R^{\frac 12}_{ki} (\omega)\right]   \\
	    &= 2\pi t e^{i \frac{\omega^\prime-\omega}{2}t} 
	     \mathrm{sinc} \left(\frac{\omega^\prime-\omega}{2}t\right) \sum_k R^{\frac 12}_{jk} (\omega^\prime) R^{\frac 12}_{ki} (\omega),
\end{align}
where $R^{\frac 12}_{ij} (\Omega)  \equiv \left(R^{\frac 12}(\Omega)\right)_{ij}$. In the first step $(I)$ we decomposed positive semi-definite matrix $R(\Omega)$ into it's square roots, in step $(II)$ we introduced cutoff frequency and limiting procedure, in the third step $(III)$ we used models of delta function (see equation (\ref{eqn:models})) to approximate each square bracket separately, then in $(IV)$ step we substitute one of models of delta function with the proper Dirac delta (it is not important which model is approximated). The last approximation works the better the smoother is spectral density function and the longer is the time $t$. The above calculations are consistent with $\omega=\omega^\prime$ case, that can be otherwise calculated using the second delta function model $\delta^{(2)}_\tau (x)$. Finally we obtain:
\begin{align}
    \gamma_{ij} (\omega,\omega^\prime,t) \approx \gamma_{ij}^\star (\omega,\omega^\prime,t) = 2\pi t e^{i \frac{\omega^\prime-\omega}{2}t}
	     \mathrm{sinc} \left(\frac{\omega^\prime-\omega}{2}t\right) \sum_k R^{\frac 12}_{jk} (\omega^\prime) R^{\frac 12}_{ki} (\omega),
\end{align}
here $\star$ denotes the derived FA approximation. It can be shown that $\gamma_{ij}^\star (\omega,\omega^\prime,t)$ is a positive semi-definite matrix, what assures completely positive and trace preserving dynamics (CPTP). Unfortunately, in general case $\gamma_{ij}^\star (\omega,\omega^\prime,t)$ lacks one of the properties (indices swapping) that exact $\gamma_{ij} (\omega,\omega^\prime,t)$ exhibits, i.e.  $\gamma_{ij} (\omega^\prime,\omega,t) =e^{i(\omega-\omega^\prime)t}\gamma_{ij} (\omega,\omega^\prime,t)$. However, in particular for the (still extremely important) case of reservoir being electromagnetic field we have $\left[R^{\frac 12} (\omega^\prime), R^{\frac 12} (\omega)\right]=0$, and the lost symmetry is recovered. 
We conclude this part with the following remark. 
\begin{remark}
The approximation derived in this Section can be especially useful in situations in which nothing more than $R(\omega)$ (for Bohr's frequencies) and the level structure of the open system is known. In fact, only the knowledge of (at most) $(\dim R)(\dim R-1)$ real numbers is required. These quantities, which can be easily determined in an experiment, can then be used to determine the system's behavior for intermediate and shorter times (that might be out of reach of spectroscopy). This can be contrasted with methods in which we know $R(\Omega)$ in the whole range, and we perform numerical integration.
\end{remark}

\subsection{$\star\star$-type approximation}
\label{app:sec:SM2}


In this Section we develop a different variant of the evolution, i.e, $\star\star$-approximation, that requires the knowledge of the full spectral density in a same way as the cumulant equation. This description is obtained via a more delicate treatment in which only the vacuum component of the superoperator (''spontaneous processes'') is subjected to modifications. Our motivation is an observation that FA equation, alike Davies-GKSL equation, is ''cutoff-stable''. Namely, the FA dynamics does not diverge with growing cutoff parameter. Therefore, in $\star\star$-approximation we remove the divergences by modifying solely the ''vacuum component'' of the integrals in the Kossakowski matrix.  
\begin{definition}[Of a decent reservoir]\label{def:decentR}
We call a reservoir a decent one iff in the stationary state
\begin{align}
    &R_{ji} (\Omega) = J_{ji}(\Omega) \left(N\left(T(\Omega),\Omega\right)+1\right),\\
    \forall_{i,j}~~&\lim_{\Omega \to +\infty} \Omega J_{ji}(\Omega) N\left(T(\Omega),\Omega\right) =0, \label{app:eqn:limit}
\end{align}
where, and $T(\Omega) \ge 0$ is a generalized (local) temperature such that $T(-\Omega)=T(\Omega)$, $J_{ji}(\Omega)$ is a spectral density such that $-J_{ij}(-\Omega)=J_{ji}(\Omega)$, and (in dimensionless units $\hbar=c=k_B=1$):
\begin{align}
    N\left(T(\Omega),\Omega\right) = \frac{1}{e^{\frac{\Omega}{T(\Omega)}}-1}.
\end{align}
\end{definition}
The condition in equation (\ref{app:eqn:limit}), is sufficient for convergence of integral within the dissipator associated with stimulated processes. Furthermore, at this point, we want to make the following remarks.
\begin{remark}\label{rem:sgnSD}
Importantly, $R_{ij}(\Omega)=\frac{1}{2\pi} \gamma_{ij}(\Omega) \in M_\mathbb{C}^{\mathcal{I} \times \mathcal{J}}$ is positive semi-definite matrix. Therefore, $J_{ij}(\Omega) \in M_\mathbb{C}^{\mathcal{I} \times \mathcal{J}}$ is a (hermitian) positive semi-definite matrix for $\Omega \ge 0$, and negative semi-definite matrix for $\Omega < 0$.
\end{remark}
\begin{remark}
One must be careful when using the above Definition in ``even cases'' , for example: $J(\Omega) \sim \Omega^2$. In this case we should modify initial choice to $J(\Omega)= \mathrm{sgn}(\Omega)\Omega^2$ whenever possible.
\end{remark}

We come with the following Corollary of Definition \ref{def:decentR}.
\begin{corollary} For a decent reservoir in a stationary state and $\Omega \ge 0$, we have the following relations:
    \begin{align}
    R_{ji} (\Omega)&=e^{\frac{\Omega}{T(\Omega)}}R_{ij}(-\Omega),\\
    R_{ji} (-\Omega) &= J_{ij}(\Omega) N(T(\Omega),\Omega).
    \end{align}
\end{corollary}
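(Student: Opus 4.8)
The plan is a direct computation from Definition~\ref{def:decentR}, using only the stated parities $T(-\Omega)=T(\Omega)$ and $-J_{ij}(-\Omega)=J_{ji}(\Omega)$ together with two elementary identities for the Bose function $N(T,\Omega)=(e^{\Omega/T}-1)^{-1}$. The convergence condition~(\ref{app:eqn:limit}) plays no role in this corollary; it is needed only elsewhere.

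First I would record the two scalar identities
\begin{align}
N(T,\Omega)+1 = e^{\Omega/T}\,N(T,\Omega), \qquad N(T,-\Omega)+1 = -\,N(T,\Omega).
\end{align}
The first follows by putting the left-hand side over the common denominator $e^{\Omega/T}-1$. The second follows by multiplying numerator and denominator of $N(T,-\Omega)=(e^{-\Omega/T}-1)^{-1}$ by $-e^{\Omega/T}$, which gives $N(T,-\Omega)=-e^{\Omega/T}N(T,\Omega)=-(N(T,\Omega)+1)$ by the first identity.

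Next, for the second asserted relation, I would evaluate the defining formula of Definition~\ref{def:decentR} at the argument $-\Omega$, namely $R_{ji}(-\Omega)=J_{ji}(-\Omega)\big(N(T(-\Omega),-\Omega)+1\big)$. Using $J_{ji}(-\Omega)=-J_{ij}(\Omega)$ (the antisymmetry $-J_{ij}(-\Omega)=J_{ji}(\Omega)$ read at $-\Omega$), the evenness $T(-\Omega)=T(\Omega)$, and the reflection identity above for the bracket, this collapses to $R_{ji}(-\Omega)=(-J_{ij}(\Omega))\big(-N(T(\Omega),\Omega)\big)=J_{ij}(\Omega)\,N(T(\Omega),\Omega)$, which is the claim.

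Finally, for the first (KMS / detailed-balance) relation I would swap $i\leftrightarrow j$ in what was just proved to get $R_{ij}(-\Omega)=J_{ji}(\Omega)\,N(T(\Omega),\Omega)$, multiply by $e^{\Omega/T(\Omega)}$, and invoke the first elementary identity: $e^{\Omega/T(\Omega)}R_{ij}(-\Omega)=J_{ji}(\Omega)\big(N(T(\Omega),\Omega)+1\big)=R_{ji}(\Omega)$, the last equality being the definition of $R_{ji}(\Omega)$. The only point requiring care is the bookkeeping of index order and of the signs produced by the parity conventions on $J$; there is no analytic obstacle, so this ``main obstacle'' is purely a matter of not transposing an index or dropping a minus sign.
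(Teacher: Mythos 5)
Your computation is correct and is exactly the ``straightforward from the definition'' argument the paper invokes (the paper gives no details beyond that): evaluate the defining relation at $-\Omega$, use the parities $T(-\Omega)=T(\Omega)$ and $J_{ji}(-\Omega)=-J_{ij}(\Omega)$ together with the Bose-function identities $N(T,\Omega)+1=e^{\Omega/T}N(T,\Omega)$ and $N(T,-\Omega)+1=-N(T,\Omega)$. Your observation that the convergence condition plays no role here is also accurate.
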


\begin{proof}
The proof is straightforward from Definition (\ref{def:decentR}).
\end{proof}

From the Definition (\ref{def:decentR}), and equation (\ref{eqn:gammaPreIntegrated}) combined, we obtain the following form of time-dependent relaxation coefficients $\gamma_{ij} (\omega,\omega^\prime,t)$ of the cumulant equation.

\begin{align}
    &\gamma_{ij} (\omega,\omega^\prime,t) = e^{i \frac{\omega^\prime-\omega}{2}t} \int_{0}^{\infty} d\Omega~  N\left(T(\Omega),\Omega\right) \left(
	    \left[t~ \mathrm{sinc} \left(\frac{\omega^\prime-\Omega}{2}t\right)\right]  \left[t~ \mathrm{sinc} \left(\frac{\omega-\Omega}{2}t\right)\right] J_{ji}(\Omega) \right. \nonumber \\
	   &\left. +
	    \left[t~ \mathrm{sinc} \left(\frac{\omega^\prime+\Omega}{2}t\right)\right]  \left[t~ \mathrm{sinc} \left(\frac{\omega+\Omega}{2}t\right)\right] J_{ij}(\Omega)
	    \right) +e^{i \frac{\omega^\prime-\omega}{2}t} \int_{0}^{\infty} d\Omega~ J_{ji}(\Omega)
	     \left[t~ \mathrm{sinc} \left(\frac{\omega^\prime-\Omega}{2}t\right)\right]\left[t~ \mathrm{sinc} \left(\frac{\omega-\Omega}{2}t\right)\right]
\end{align}
The first integral, in the equation above, is convergent due to the mentioned condition within the Definition (\ref{def:decentR}). The second integral (which diverges in the absence of cutoff profile) will be subjected to a modification. Similarly, as in Section (\ref{app:sec:SM}) we choose $\max\{\abs{\omega},\abs{\omega^\prime}\} \ll \Lambda <+\infty$.

\begin{align}
    &\int_{0}^{\infty} d\Omega~ J_{ji}(\Omega)
	     \left[t~ \mathrm{sinc} \left(\frac{\omega^\prime-\Omega}{2}t\right)\right]\left[t~ \mathrm{sinc} \left(\frac{\omega-\Omega}{2}t\right)\right] \\
	     &\stackrel{(I)}{=} \lim_{\Lambda \to +\infty}\int_{0}^{\Lambda} d\Omega~ J_{ji}(\Omega)
	     \left[t~ \mathrm{sinc} \left(\frac{\omega^\prime-\Omega}{2}t\right)\right]\left[t~ \mathrm{sinc} \left(\frac{\omega-\Omega}{2}t\right)\right]\\
	     &\stackrel{(II)}{=} \sum_k \lim_{\Lambda \to +\infty}\int_{0}^{\Lambda} d\Omega~ 
	     \left[t~ \mathrm{sinc} \left(\frac{\omega^\prime-\Omega}{2}t\right)J^{\frac 12}_{jk}(\Omega)\right]\left[t~ \mathrm{sinc} \left(\frac{\omega-\Omega}{2}t\right)J^{\frac 12}_{ki}(\Omega)\right]\\
	     &\stackrel{(III)}{\approx} \sum_k \lim_{\Lambda \to +\infty}\int_{0}^{\Lambda} d\Omega~ 
	     \left[t~ \mathrm{sinc} \left(\frac{\omega^\prime-\Omega}{2}t\right) \frac{ \left(\mathrm{sgn}(\omega^\prime) J(\omega^\prime) \right)^{\frac 12}_{jk}}{\mathrm{sgn}(\omega^\prime)}\right]\left[t~ \mathrm{sinc} \left(\frac{\omega-\Omega}{2}t\right)\frac{ \left(\mathrm{sgn}(\omega) J(\omega) \right)^{\frac 12}_{ki}}{\mathrm{sgn}(\omega)}\right]\\
	     &= \sum_k\frac{ \left(\mathrm{sgn}(\omega^\prime) J(\omega^\prime) \right)^{\frac 12}_{jk}}{\mathrm{sgn}(\omega^\prime)}\frac{ \left(\mathrm{sgn}(\omega) J(\omega) \right)^{\frac 12}_{ki}}{\mathrm{sgn}(\omega)} \lim_{\Lambda \to +\infty}\int_{0}^{\Lambda} d\Omega~ 
	     \left[t~ \mathrm{sinc} \left(\frac{\omega^\prime-\Omega}{2}t\right) \right]\left[t~ \mathrm{sinc} \left(\frac{\omega-\Omega}{2}t\right)\right]\\
	     &=2 \pi \sum_k\frac{ \left(\mathrm{sgn}(\omega^\prime) J(\omega^\prime) \right)^{\frac 12}_{jk}}{\mathrm{sgn}(\omega^\prime)}\frac{ \left(\mathrm{sgn}(\omega) J(\omega) \right)^{\frac 12}_{ki}}{\mathrm{sgn}(\omega)} \lim_{\Lambda \to +\infty}\int_{0}^{\Lambda} d\Omega~ 
	     \delta^{(1)}_{\frac t2} (\omega^\prime - \Omega)\left[t~ \mathrm{sinc} \left(\frac{\omega-\Omega}{2}t\right)\right]\\
	     &\stackrel{(IV)}{\approx}2 \pi \sum_k\frac{ \left(\mathrm{sgn}(\omega^\prime) J(\omega^\prime) \right)^{\frac 12}_{jk}}{\mathrm{sgn}(\omega^\prime)}\frac{ \left(\mathrm{sgn}(\omega) J(\omega) \right)^{\frac 12}_{ki}}{\mathrm{sgn}(\omega)} \lim_{\Lambda \to +\infty}\int_{0}^{\Lambda} d\Omega~ 
	     \delta (\omega^\prime - \Omega)\left[t~ \mathrm{sinc} \left(\frac{\omega-\Omega}{2}t\right)\right]\\
	     &= 2 \pi t H(\omega^\prime) \mathrm{sinc} \left(\frac{\omega-\omega^\prime}{2}t\right) \sum_k\frac{ \left(\mathrm{sgn}(\omega^\prime) J(\omega^\prime) \right)^{\frac 12}_{jk}}{\mathrm{sgn}(\omega^\prime)}\frac{ \left(\mathrm{sgn}(\omega) J(\omega) \right)^{\frac 12}_{ki}}{\mathrm{sgn}(\omega)} \\
	     &\stackrel{(V)}{\approx} 2 \pi t H(\omega)H(\omega^\prime)  \mathrm{sinc} \left(\frac{\omega-\omega^\prime}{2}t\right) \sum_k\frac{ \left(\mathrm{sgn}(\omega^\prime) J(\omega^\prime) \right)^{\frac 12}_{jk}}{\mathrm{sgn}(\omega^\prime)}\frac{ \left(\mathrm{sgn}(\omega) J(\omega) \right)^{\frac 12}_{ki}}{\mathrm{sgn}(\omega)} \\
	     &= 2 \pi t H(\omega)H(\omega^\prime)  \mathrm{sinc} \left(\frac{\omega-\omega^\prime}{2}t\right)  \sum_k J^{\frac 12}_{jk}(\omega^\prime) J^{\frac 12}_{ki}(\omega),
\end{align}
where $J^{\frac 12}_{ij}(\omega)\equiv\left(J^{\frac 12}(\omega)\right)_{ij}$, and $H(\omega)$ is a Heaviside step function.  In the first step $(I)$ we introduce the limiting procedure, in the second step $(II)$ we decompose positive semi-definite matrix $J(\Omega)$ into its square roots. In the third step $(III)$, similarly as in Section (\ref{app:sec:SM}), we employ the filtering property of models of delta function separately in each square bracket, sign function is added so that square root is well defined (see remark (\ref{rem:sgnSD})). It is worth noting, that there is no need to employ the $\mathrm{sign}$ function if one anticipates and considers that only $\omega,\omega^\prime>0$ case is non-zero from the very beginning. In the next step $(IV)$ we interchange model of delta function with Dirac delta distribution. In the fifth step $(V)$ we multiply the expression with additional Heaviside step function to retain positive semi-definiteness of final approximation, this is consistent with long times approximation due to presence of $\mathrm{sinc}\left(\frac{\omega-\omega^\prime}{2}t\right)$ function. In the last step we simplify the expression, utilizing Heaviside functions. Finally:
\begin{align}
    &\gamma_{ij} (\omega,\omega^\prime,t) \approx \gamma^{\star \star}_{ij} (\omega,\omega^\prime,t)  =  e^{i \frac{\omega^\prime-\omega}{2}t} \int_{0}^{\infty} d\Omega~  N\left(T(\Omega),\Omega\right) \left(
	    \left[t~ \mathrm{sinc} \left(\frac{\omega^\prime-\Omega}{2}t\right)\right]  \left[t~ \mathrm{sinc} \left(\frac{\omega-\Omega}{2}t\right)\right] J_{ji}(\Omega) \right. \nonumber \\
	   &\left. +
	    \left[t~ \mathrm{sinc} \left(\frac{\omega^\prime+\Omega}{2}t\right)\right]  \left[t~ \mathrm{sinc} \left(\frac{\omega+\Omega}{2}t\right)\right] J_{ij}(\Omega)
	    \right) +2\pi t H(\omega)H(\omega^\prime)  e^{i \frac{\omega^\prime-\omega}{2}t} 
	    \mathrm{sinc} \left(\frac{\omega^\prime-\omega}{2}t\right) \sum_k J^{\frac 12}_{jk}(\omega^\prime) J^{\frac 12}_{ki}(\omega). \label{eqn:DoubleStarFormula}
\end{align}

An interesting property of $\star\star$-approximation described above is the fact that not only $\gamma^{\star \star}_{ij} (\omega,\omega^\prime,t)$ is a positive semi-definite matrix but it also exhibits all properties (indices swapping) of exact $\gamma_{ij} (\omega,\omega^\prime,t)$ matrix.

For any two level system, for example the considered spin-boson model, $\gamma^{\star \star}_{ij} (\omega,\omega^\prime,t)$ simplifies \footnote{One can also ``full-secular'' approximation to the ``vacuum'' part by eliminating terms for which $\omega^\prime \neq \omega$.} to:
\begin{align}
    &\gamma^{\star \star}_{ij} (\omega,\omega^\prime,t)  = e^{i \frac{\omega^\prime-\omega}{2}t} \int_{0}^{\infty} d\Omega~  N\left(T(\Omega),\Omega\right) \left(
	    \left[t~ \mathrm{sinc} \left(\frac{\omega^\prime-\Omega}{2}t\right)\right]  \left[t~ \mathrm{sinc} \left(\frac{\omega-\Omega}{2}t\right)\right] J_{ji}(\Omega) \right. \nonumber \\
	   &\left. +
	    \left[t~ \mathrm{sinc} \left(\frac{\omega^\prime+\Omega}{2}t\right)\right]  \left[t~ \mathrm{sinc} \left(\frac{\omega+\Omega}{2}t\right)\right] J_{ij}(\Omega)
	    \right) +2\pi t H(\omega)H(\omega^\prime)J_{ji}(\omega). \label{eqn:DoubleStarQubit}
\end{align}

\begin{figure}[t!]
    \centering
    \includegraphics[width=0.49 \columnwidth]{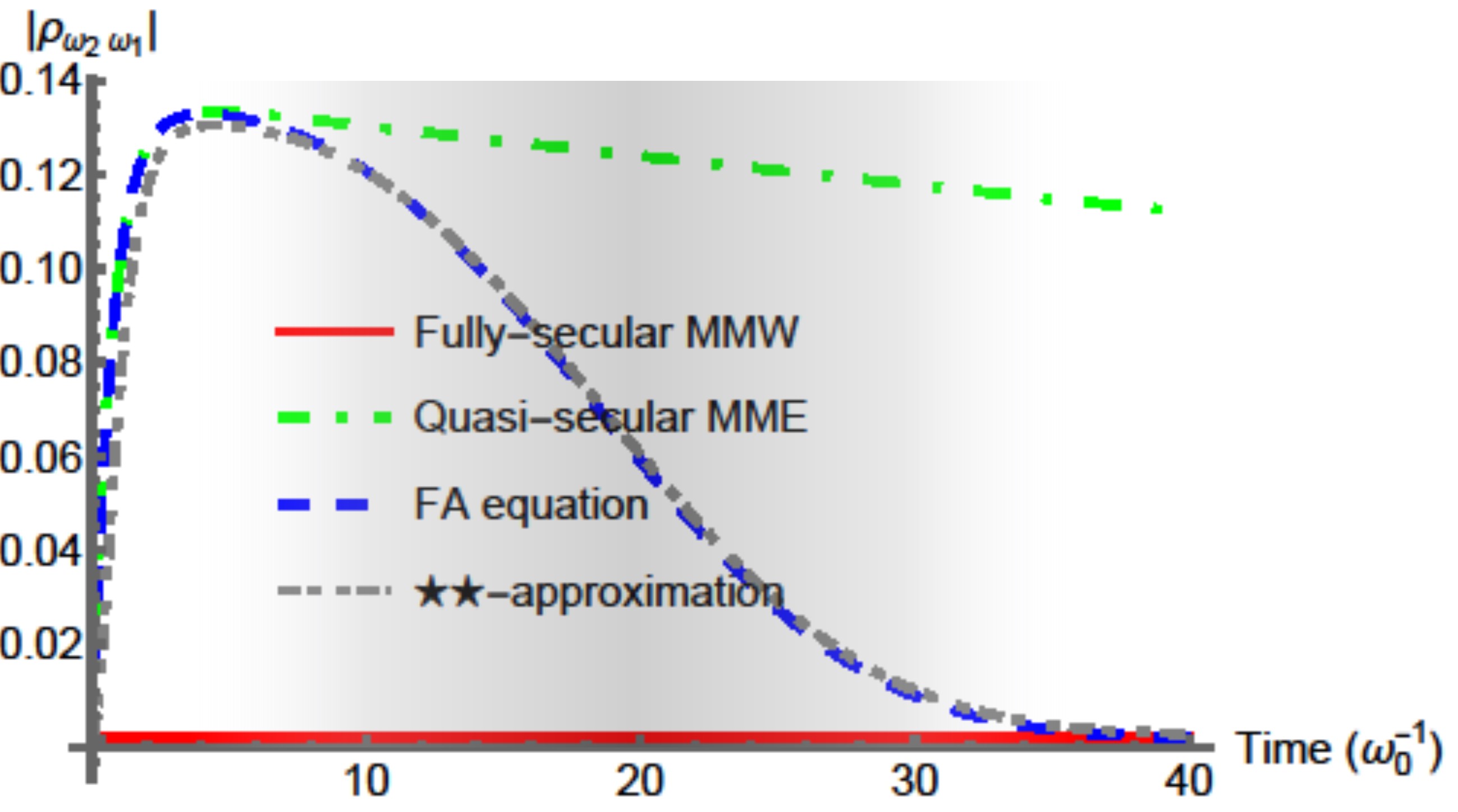}
     \includegraphics[width=0.49 \columnwidth]{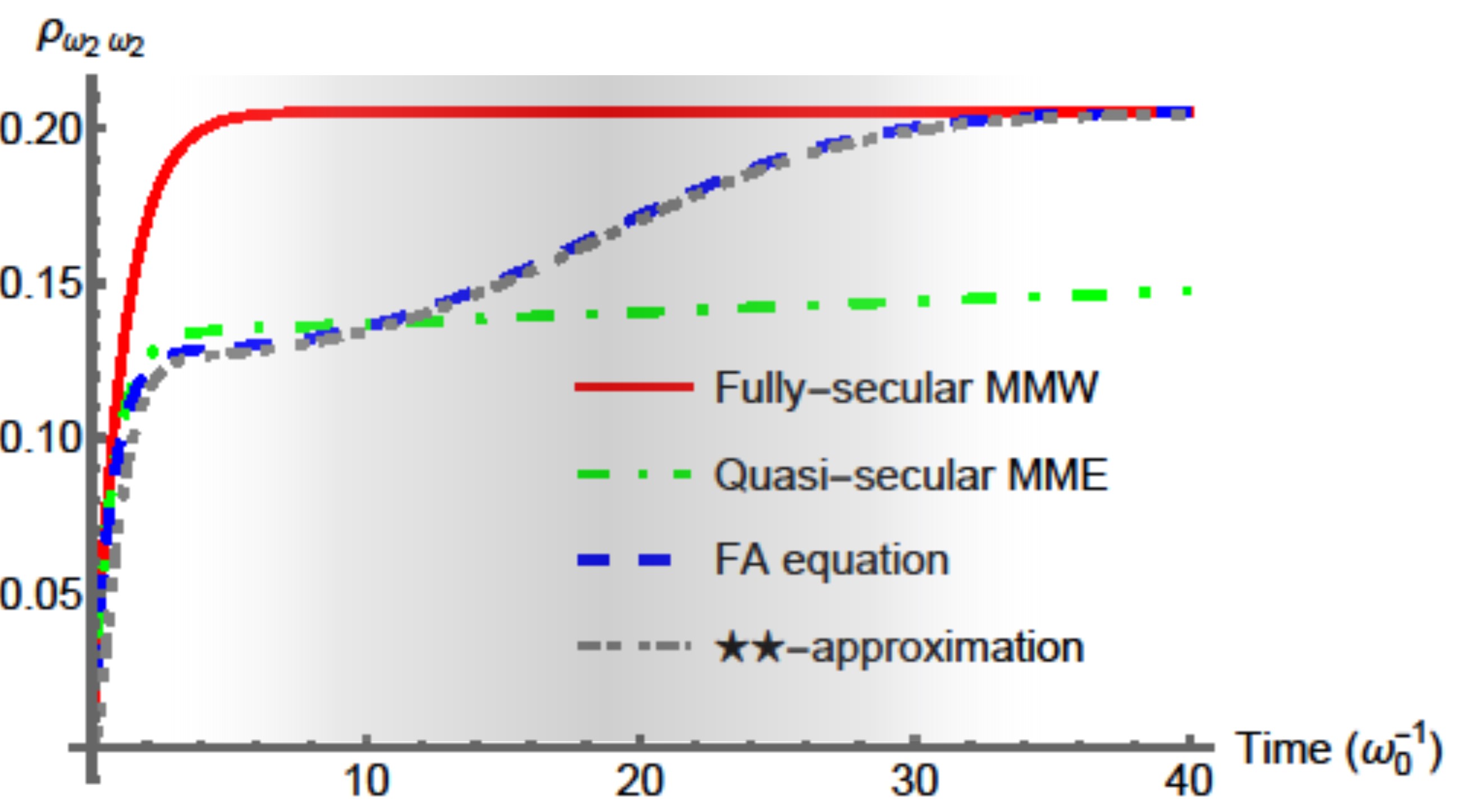}
    \caption{Qutrit-Boson model -- Interaction picture evolution of the coherence ${\rho_S}_{\omega_2 \omega_1}$ between the two excited levels (top), and population ${\rho_S}_{\omega_2 \omega_2}$ of the second excited $\omega_2$ (bottom) 
    of the reduced density matrix of the system $\rho_S$. The initial state of the system is $\rho_S(0)= \ketbra{g}{g} $.
    The reservoir is a heat bath at temperature $T_\mathrm{eff}=1$ in units of $\omega_0$. 
    The solid red curve is the evolution computed via the fully-secular master equation; the green dash-dotted curve is the evolution according to the quasi-secular master equation.  
    The dashed blue line is the evolution computed with the FA equation \eqref{eqn:DynMapStar}, and the grey short-dashed curve is the evolution given by the $\star\star$-approximation to the cumulant equation.
    The shaded grey zone indicates the region of intermediate times. The splitting parameter is $\Delta\omega = 2\pi\omega_0 \times 10^{-2}$, and $\alpha=0.05$. }
    \label{fig:Ev_qutrit2}
\end{figure} 

In Fig.~\ref{fig:Ev_qutrit2}, we observe that in the case of the qutrit-boson system, the dynamics provided by $\star\star$-approximation is close to the dynamics given by the FA equation\footnote{Parameters employed in  Fig.~\ref{fig:Ev_qutrit2} are beyond the weak-coupling regime, obtained curves might significantly differ from the exact solution.}. Indeed, $\star\star$-approximation interpolates between the quasi- and fully-secular Markovian master equations in the same way as the FA equation. As the spectral density function $J(\Omega)=\alpha \Omega$ has no explicit cutoff, the $\star\star$-approximation can be considered to be ''cutoff stable'' approximation to the refined weak coupling limit. However, at low temperatures, the evolution given by the $\star\star$-approximation is expected to be Markovian since, at these conditions, solely the Markovian ''vacuum'' component persists.  

We conclude this Section with the following remark.
\begin{remark}\label{app:rem:existence}
The derived approximations are ``cutoff stable'', i.e., the equations we propose are sensitive to the spectral density (and cutoff) profile but integrals in the Kossakowski matrix do not diverge if one considers the limit $\omega_c \to + \infty$. This modification was only possible by considering that the cutoff frequency exists but might be arbitrary large. Therefore, we are not allowed to think there is no cutoff at all. This kind of reasoning is consistent with Davies's weak coupling approach as finite lifetime of (two point) reservoir's correlations functions yields frequency cutoff for its spectral density.
\end{remark}

\begin{remark}
At (global) temperature $T=0$, both in FA equation and in $\star\star$-approximation only the spontaneous emission persists. Therefore, in the case of a two-level system at these conditions, the dynamics of FA equation and $\star\star$-approximation are equivalent to dynamics described by the (renormalized) Davies-GKSL equation.
\end{remark}

\subsection{Completely positive and trace preserving dynamics (CPTP) of the  approximations of the cumulant equation}

The CPTP property follows from its GKSL structure and positive semi-definiteness of the $\gamma_{ij}(\omega,\omega^\prime,t)$. The latter property was shown in detail in~\cite{RenormalizationPaper}. In this Section we prove the positive semi-definiteness of $\gamma^{\star}_{ij}(\omega,\omega^\prime,t)$ and $\gamma^{\star\star}_{ij}(\omega,\omega^\prime,t)$ matrices. The proofs are based on the observation that $R_{ij}(\omega) = \frac{1}{2 pi]} \gamma_{ij} (\omega)$, where $\gamma_{ij} (\omega)$ is known to be positive semi-definite matrix~\cite{Breuer+2006,Rivas_2012}.

\subsubsection{Positive semi-definiteness of $\gamma^{\star}_{ij}(\omega,\omega^\prime,t)$ matrix} 

Let us notice that $\Big(\gamma^{\star}_{ij}(\omega,\omega^\prime,t)\Big)$ is a Hadamard product (we denote with ''$\circ$''), of two matrices (multiplied with scalar).
\begin{align}
    \gamma_{ij}^\star (\omega,\omega^\prime,t)&= 2\pi t e^{i \frac{\omega^\prime-\omega}{2}t}
	     \mathrm{sinc} \left(\frac{\omega^\prime-\omega}{2}t\right) \sum_k R^{\frac 12}_{jk} (\omega^\prime) R^{\frac 12}_{ki} (\omega) \\
	     &=2\pi t \left( \Big( e^{i \frac{\omega^\prime-\omega}{2}t}
	     \mathrm{sinc} \left(\frac{\omega^\prime-\omega}{2}t\right)\Big) \circ \Big(\sum_k R^{\frac 12}_{jk} (\omega^\prime) R^{\frac 12}_{ki} (\omega)\Big) \right)_{(\omega,i),(\omega^\prime,j)},
\end{align}
where indices $i$ and $j$ are dummy for the matrix $\Big( e^{i \frac{\omega^\prime-\omega}{2}t}
	     \mathrm{sinc} \left(\frac{\omega^\prime-\omega}{2}t\right)\Big)$.

It is straighforward to see that matrix  $\Big(\sum_k R^{\frac 12}_{jk} (\omega^\prime) R^{\frac 12}_{ki} (\omega)\Big)$ is positive semi-definite 
\begin{align}
    \forall_g ~~&\sum_{\omega, \omega^\prime} \sum_{ij} g_i^*(\omega) g_j(\omega^\prime) \sum_k R^{\frac 12}_{jk} (\omega^\prime) R^{\frac 12}_{ki} (\omega) =   \sum_k \left(\sum_{\omega^\prime,j} g_j(\omega^\prime)  R^{\frac 12}_{jk} (\omega^\prime)\right) \left(\sum_{\omega,i} g_i^*(\omega)R^{\frac 12}_{ki} (\omega)\right) \\
    &=\sum_k \left(\sum_{\omega^\prime,j} g_j(\omega^\prime)  R^{\frac 12}_{jk} (\omega^\prime)\right) \left(\sum_{\omega,i} g_i (\omega)R^{\frac 12}_{ik} (\omega)\right)^*
    =\sum_k \abs{\sum_{\omega,i} g_i(\omega)  R^{\frac 12}_{ik} (\omega)}^2 \ge 0
\end{align}

To show that the matrix $\Big( e^{i \frac{\omega^\prime-\omega}{2}t}  \mathrm{sinc} \left(\frac{\omega^\prime-\omega}{2}t\right)\Big)$ is positive semi-definite we observe that 
\begin{align}
    \mathrm{sinc}(at) =\mathcal{F}\left(\mathcal{F}^{-1}\left(\mathrm{sinc}(at) \right)\right)= \frac{1}{\sqrt{2\pi}} \int_{-\infty}^{+\infty}ds~ e^{ias} \frac{\sqrt{\frac{\pi}{2}}\left(\mathrm{sgn}(t-s)+\mathrm{sgn}(t-s)\right) }{2t}, \label{eqn:FourierSinc}
\end{align}
where $\mathcal{F}$ and $\mathcal{F}^{-1}$ are Fourier transform and inverse Fourier transform respectively ($t \ge 0$), and $\mathrm{sgn}$ is the sign function.

Therefore,
\begin{align}
    \forall_g~~ &\sum_{\omega, \omega^\prime} \sum_{ij} g_i^*(\omega) g_j(\omega^\prime) e^{i \frac{\omega^\prime-\omega}{2}t}  \mathrm{sinc} \left(\frac{\omega^\prime-\omega}{2}t\right) \\
    &= \sum_{\omega, \omega^\prime} \sum_{ij} g_i^*(\omega) g_j(\omega^\prime) e^{i \frac{\omega^\prime-\omega}{2}t} \frac{1}{\sqrt{2\pi}} \int_{-\infty}^{+\infty}ds~ e^{i\frac{\omega^\prime-\omega}{2}s} \frac{\sqrt{\frac{\pi}{2}}\left(\mathrm{sgn}(t-s)+\mathrm{sgn}(t-s)\right) }{2t} \\
    &=\frac{1}{2}\int_{-\infty}^{+\infty}ds~ \sum_{\omega, \omega^\prime} \sum_{ij} g_i^*(\omega) g_j(\omega^\prime) e^{i \frac{\omega^\prime}{2}(t-s)} e^{-i \frac{\omega}{2}(t-s)}  \frac{\left(\mathrm{sgn}(t-s)+\mathrm{sgn}(t-s)\right) }{2t} \\
    &=\frac{1}{2}\int_{-\infty}^{+\infty}ds~      \left(\sum_{\omega,i} g_i^*(\omega)e^{-i \frac{\omega}{2}(t-s)} \sqrt{\frac{\left(\mathrm{sgn}(t-s)+\mathrm{sgn}(t-s)\right) }{2t}}\right)\left(\sum_{\omega^\prime,j}  g_j(\omega^\prime)e^{i \frac{\omega^\prime}{2}(t-s)}\sqrt{\frac{\left(\mathrm{sgn}(t-s)+\mathrm{sgn}(t-s)\right) }{2t}} \right) \\
    &=\frac{1}{2}\int_{-\infty}^{+\infty}ds~      \left(\sum_{\omega,i} g_i(\omega)e^{i \frac{\omega}{2}(t-s)} \sqrt{\frac{\left(\mathrm{sgn}(t-s)+\mathrm{sgn}(t-s)\right) }{2t}}\right)^*\left(\sum_{\omega^\prime,j}  g_j(\omega^\prime)e^{i \frac{\omega^\prime}{2}(t-s)}\sqrt{\frac{\left(\mathrm{sgn}(t-s)+\mathrm{sgn}(t-s)\right) }{2t}} \right) \\
    &=\frac{1}{2}\int_{-\infty}^{+\infty}ds~      \abs{\sum_{\omega,i} g_i(\omega)e^{i \frac{\omega}{2}(t-s)} \sqrt{\frac{\left(\mathrm{sgn}(t-s)+\mathrm{sgn}(t-s)\right) }{2t}}}^2 \ge 0.
\end{align}
Hence, matrix $\Big( e^{i \frac{\omega^\prime-\omega}{2}t}  \mathrm{sinc} \left(\frac{\omega^\prime-\omega}{2}t\right)\Big)$ is positive semi-definite.

Because, both matrices $\Big( e^{i \frac{\omega^\prime-\omega}{2}t}  \mathrm{sinc} \left(\frac{\omega^\prime-\omega}{2}t\right)\Big)$ and $\Big(\sum_k R^{\frac 12}_{jk} (\omega^\prime) R^{\frac 12}_{ki} (\omega)\Big)$ are positive semi-definite, the Hadamard product of these is also positive semi-definite. Now, because $\gamma^{\star}_{ij}(\omega,\omega^\prime,t)$ in question is proportional to the aforementioned Hadamard product, it is positive semi-definite matrix.

\subsubsection{Positive semi-definiteness of $\gamma^{\star\star}_{ij}(\omega,\omega^\prime,t)$ matrix}

To prove positive semi-definiteness of $\gamma^{\star\star}_{ij}(\omega,\omega^\prime,t)$ matrix let us firstly notice that
\begin{align}
    R_{ij}(-\Omega) &= J_{ij}(-\Omega) \left(N\left(T(-\Omega),-\Omega\right)+1\right) =
    J_{ij}(-\Omega) \left(\frac{1}{e^{-\frac{\Omega}{T(-\Omega)}}-1}+1\right) \\
    &=J_{ij}(-\Omega) \frac{e^{-\frac{\Omega}{T(-\Omega)}}}{e^{-\frac{\Omega}{T(-\Omega)}}-1} 
    = J_{ji}(\Omega) \frac{1}{e^{\frac{\Omega}{T(\Omega)}}-1} = J_{ji}(\Omega) N\left(T(\Omega),\Omega\right).
\end{align}
We now decompose $\gamma^{\star \star}_{ij} (\omega,\omega^\prime,t)$ in equation \eqref{eqn:DoubleStarFormula} into three components.
\begin{align}
    &\gamma^{\star \star}_{ij} (\omega,\omega^\prime,t)  =  e^{i \frac{\omega^\prime-\omega}{2}t} \int_{0}^{\infty} d\Omega~  N\left(T(\Omega),\Omega\right) \left(
	    \left[t~ \mathrm{sinc} \left(\frac{\omega^\prime-\Omega}{2}t\right)\right]  \left[t~ \mathrm{sinc} \left(\frac{\omega-\Omega}{2}t\right)\right] J_{ji}(\Omega) \right. \nonumber \\
	   &\left. +
	    \left[t~ \mathrm{sinc} \left(\frac{\omega^\prime+\Omega}{2}t\right)\right]  \left[t~ \mathrm{sinc} \left(\frac{\omega+\Omega}{2}t\right)\right] J_{ij}(\Omega)
	    \right) +2\pi t H(\omega)H(\omega^\prime)  e^{i \frac{\omega^\prime-\omega}{2}t} 
	    \mathrm{sinc} \left(\frac{\omega^\prime-\omega}{2}t\right) \sum_k J^{\frac 12}_{jk}(\omega^\prime) J^{\frac 12}_{ki}(\omega)\\
	    &=e^{i \frac{\omega^\prime-\omega}{2}t} \int_{0}^{\infty} d\Omega~  J_{ji}(\Omega)N\left(T(\Omega),\Omega\right) 
	    \left[t~ \mathrm{sinc} \left(\frac{\omega^\prime-\Omega}{2}t\right)\right]  \left[t~ \mathrm{sinc} \left(\frac{\omega-\Omega}{2}t\right)\right]\nonumber \\
	    &+e^{i \frac{\omega^\prime-\omega}{2}t} \int_{0}^{\infty} d\Omega~  J_{ij}(\Omega) N\left(T(\Omega),\Omega\right)  \left[t~ \mathrm{sinc} \left(\frac{\omega^\prime+\Omega}{2}t\right)\right]  \left[t~ \mathrm{sinc} \left(\frac{\omega+\Omega}{2}t\right)\right]\nonumber \\
	    &+2\pi t H(\omega)H(\omega^\prime)  e^{i \frac{\omega^\prime-\omega}{2}t} 
	    \mathrm{sinc} \left(\frac{\omega^\prime-\omega}{2}t\right) \sum_k J^{\frac 12}_{jk}(\omega^\prime) J^{\frac 12}_{ki}(\omega)\\
	    &=e^{i \frac{\omega^\prime-\omega}{2}t} \int_{0}^{\infty} d\Omega~  R_{ij}(-\Omega)
	    \left[t~ \mathrm{sinc} \left(\frac{\omega^\prime-\Omega}{2}t\right)\right]  \left[t~ \mathrm{sinc} \left(\frac{\omega-\Omega}{2}t\right)\right] \nonumber\\
	    &+e^{i \frac{\omega^\prime-\omega}{2}t} \int_{0}^{\infty} d\Omega~  R_{ji}(-\Omega)   \left[t~ \mathrm{sinc} \left(\frac{\omega^\prime+\Omega}{2}t\right)\right]  \left[t~ \mathrm{sinc} \left(\frac{\omega+\Omega}{2}t\right)\right]\nonumber \\
	    &+2\pi t H(\omega)H(\omega^\prime)  e^{i \frac{\omega^\prime-\omega}{2}t} 
	    \mathrm{sinc} \left(\frac{\omega^\prime-\omega}{2}t\right) \sum_k J^{\frac 12}_{jk}(\omega^\prime) J^{\frac 12}_{ki}(\omega).  \label{eqn:DoubleStarDecomposition}
\end{align}
Therefore, matrix $\Big(\gamma^{\star \star}_{ij} (\omega,\omega^\prime,t) \Big)$ is a sum of three matrices.
We now show that each of the term in equation \eqref{eqn:DoubleStarDecomposition} above is an element of a positive semi-definite matrix.

We start with the first component of equation \eqref{eqn:DoubleStarDecomposition}, that corresponds to matrix being the first component of $\Big(\gamma^{\star \star}_{ij} (\omega,\omega^\prime,t) \Big)$ matrix.
\begin{align}
    \forall_g ~~&\sum_{\omega,\omega^\prime} \sum_{ij} g^*_i(\omega) g_j(\omega^\prime) e^{i \frac{\omega^\prime-\omega}{2}t} \int_{0}^{\infty} d\Omega~  R_{ij}(-\Omega)
	    \left[t~ \mathrm{sinc} \left(\frac{\omega^\prime-\Omega}{2}t\right)\right]  \left[t~ \mathrm{sinc} \left(\frac{\omega-\Omega}{2}t\right)\right]\\
	    &= \sum_{\omega,\omega^\prime} \sum_{ij} g^*_i(\omega) g_j(\omega^\prime) e^{-i \frac{\omega}{2}t} e^{i \frac{\omega^\prime}{2}t} \int_{0}^{\infty} d\Omega~  \sum_k R^{\frac 12}_{ik}(-\Omega)R^{\frac 12}_{kj}(-\Omega)
	    \left[t~ \mathrm{sinc} \left(\frac{\omega^\prime-\Omega}{2}t\right)\right]  \left[t~ \mathrm{sinc} \left(\frac{\omega-\Omega}{2}t\right)\right]\\
	    &=     \sum_k \int_{0}^{\infty} d\Omega~  \left( \sum_{\omega^\prime,j}g_j(\omega^\prime)e^{i \frac{\omega^\prime}{2}t}  R^{\frac 12}_{kj}(-\Omega)
	    \left[t~ \mathrm{sinc} \left( \frac{\omega^\prime-\Omega}{2}t\right)\right] \right)  \left(\sum_{\omega,i}g^*_i(\omega)e^{-i \frac{\omega}{2}t}R^{\frac 12}_{ik}(-\Omega)\left[t~ \mathrm{sinc} \left(\frac{\omega-\Omega}{2}t\right)\right]\right)\\
	    &=     \sum_k \int_{0}^{\infty} d\Omega~  \left( \sum_{\omega^\prime,j}g_j(\omega^\prime)e^{i \frac{\omega^\prime}{2}t}  R^{\frac 12}_{kj}(-\Omega)
	    \left[t~ \mathrm{sinc} \left( \frac{\omega^\prime-\Omega}{2}t\right)\right] \right)  \left(\sum_{\omega,i}g_i(\omega)e^{i \frac{\omega}{2}t}R^{\frac 12}_{ki}(-\Omega)\left[t~ \mathrm{sinc} \left(\frac{\omega-\Omega}{2}t\right)\right]\right)^*\\
	    &=     \sum_k \int_{0}^{\infty} d\Omega~    \abs{\sum_{\omega,i}g_i(\omega)e^{i \frac{\omega}{2}t}R^{\frac 12}_{ki}(-\Omega)\left[t~ \mathrm{sinc} \left(\frac{\omega-\Omega}{2}t\right)\right]}^2 \ge 0.
\end{align}
The same method can be used to show the positive semi-definiteness of the second term.
\begin{align}
    \forall_g ~~&\sum_{\omega,\omega^\prime} \sum_{ij} g^*_i(\omega) g_j(\omega^\prime) e^{i \frac{\omega^\prime-\omega}{2}t} \int_{0}^{\infty} d\Omega~  R_{ji}(-\Omega)
	    \left[t~ \mathrm{sinc} \left(\frac{\omega^\prime+\Omega}{2}t\right)\right]  \left[t~ \mathrm{sinc} \left(\frac{\omega+\Omega}{2}t\right)\right]\\
	    &= \sum_{\omega,\omega^\prime} \sum_{ij} g^*_i(\omega) g_j(\omega^\prime) e^{-i \frac{\omega}{2}t} e^{i \frac{\omega^\prime}{2}t} \int_{0}^{\infty} d\Omega~  \sum_k R^{\frac 12}_{jk}(-\Omega)R^{\frac 12}_{ki}(-\Omega)
	    \left[t~ \mathrm{sinc} \left(\frac{\omega^\prime+\Omega}{2}t\right)\right]  \left[t~ \mathrm{sinc} \left(\frac{\omega+\Omega}{2}t\right)\right]\\
	    &=     \sum_k \int_{0}^{\infty} d\Omega~  \left( \sum_{\omega^\prime,j}g_j(\omega^\prime)e^{i \frac{\omega^\prime}{2}t}  R^{\frac 12}_{jk}(-\Omega)
	    \left[t~ \mathrm{sinc} \left( \frac{\omega^\prime+\Omega}{2}t\right)\right] \right)  \left(\sum_{\omega,i}g^*_i(\omega)e^{-i \frac{\omega}{2}t}R^{\frac 12}_{ki}(-\Omega)\left[t~ \mathrm{sinc} \left(\frac{\omega+\Omega}{2}t\right)\right]\right)\\
	    &=     \sum_k \int_{0}^{\infty} d\Omega~  \left( \sum_{\omega^\prime,j}g_j(\omega^\prime)e^{i \frac{\omega^\prime}{2}t}  R^{\frac 12}_{jk}(-\Omega)
	    \left[t~ \mathrm{sinc} \left( \frac{\omega^\prime+\Omega}{2}t\right)\right] \right)  \left(\sum_{\omega,i}g_i(\omega)e^{i \frac{\omega}{2}t}R^{\frac 12}_{ik}(-\Omega)\left[t~ \mathrm{sinc} \left(\frac{\omega+\Omega}{2}t\right)\right]\right)^*\\
	    &=     \sum_k \int_{0}^{\infty} d\Omega~    \abs{\sum_{\omega,i}g_i(\omega)e^{i \frac{\omega}{2}t}R^{\frac 12}_{ik}(-\Omega)\left[t~ \mathrm{sinc} \left(\frac{\omega+\Omega}{2}t\right)\right]}^2 \ge 0.
\end{align}

The positive semi-definiteness comes from formula \eqref{eqn:FourierSinc}.
\begin{align}
     \forall_g ~~&\sum_{\omega,\omega^\prime} \sum_{ij} g^*_i(\omega) g_j(\omega^\prime) 2\pi t H(\omega)H(\omega^\prime)  e^{i \frac{\omega^\prime-\omega}{2}t} 
	    \mathrm{sinc} \left(\frac{\omega^\prime-\omega}{2}t\right) \sum_k J^{\frac 12}_{jk}(\omega^\prime) J^{\frac 12}_{ki}(\omega) \\
	    &=\sum_{\omega,\omega^\prime} \sum_{ij} g^*_i(\omega) g_j(\omega^\prime) 2\pi t H(\omega)H(\omega^\prime)  e^{i \frac{\omega^\prime-\omega}{2}t} 
	     \frac{1}{\sqrt{2\pi}} \int_{-\infty}^{+\infty}ds~ e^{i\frac{\omega^\prime-\omega}{2}s} \frac{\sqrt{\frac{\pi}{2}}\left(\mathrm{sgn}(t-s)+\mathrm{sgn}(t-s)\right) }{2t}  \sum_k J^{\frac 12}_{jk}(\omega^\prime) J^{\frac 12}_{ki}(\omega) \\
	     &=\frac{\pi}{2} \sum_k \int_{-\infty}^{+\infty}ds~
	     \left(\sum_{\omega^\prime, j} g_j(\omega^\prime) H(\omega^\prime)  e^{i\frac{\omega^\prime}{2}(t+s)} \sqrt{\mathrm{sgn}(t-s)+\mathrm{sgn}(t-s) }   J^{\frac 12}_{jk}(\omega^\prime) \right)\\
	     &\times
	     \left(\sum_{\omega,i} g^*_i(\omega) H(\omega) e^{-i\frac{\omega}{2}(t+s)} \sqrt{\mathrm{sgn}(t-s)+\mathrm{sgn}(t-s)} J^{\frac 12}_{ki}(\omega) \right)\\
	     &=\frac{\pi}{2} \sum_k \int_{-\infty}^{+\infty}ds~
	     \left(\sum_{\omega^\prime, j} g_j(\omega^\prime) H(\omega^\prime)  e^{i\frac{\omega^\prime}{2}(t+s)} \sqrt{\mathrm{sgn}(t-s)+\mathrm{sgn}(t-s) }   J^{\frac 12}_{jk}(\omega^\prime) \right)\\
	     &\times
	     \left(\sum_{\omega,i} g_i(\omega) H(\omega) e^{i\frac{\omega}{2}(t+s)} \sqrt{\mathrm{sgn}(t-s)+\mathrm{sgn}(t-s)} J^{\frac 12}_{ik}(\omega) \right)\\
	      &=\frac{\pi}{2} \sum_k \int_{-\infty}^{+\infty}ds~
	     \abs{\sum_{\omega,i} g_i(\omega) H(\omega) e^{i\frac{\omega}{2}(t+s)} \sqrt{\mathrm{sgn}(t-s)+\mathrm{sgn}(t-s)} J^{\frac 12}_{ik}(\omega) }^2 \ge 0.
\end{align}
Therefore, because $\Big(\gamma^{\star \star}_{ij} (\omega,\omega^\prime,t) \Big)$ is a sum of three positive semi-definite matrices, it is positive semi-definite itself.


 \color{black}
\section{Interpolation between the quasi- and fully-secular master equations}\label{app:sec:Relax}

\subsection{Reproducing the fully-secular master equation}\label{app:sec:RelaxGlob}

The convergence to the fully-secular master equation can be performed directly from the cumulant equation (\ref{eqn:ActionOfCumulant}), the most easily with $\gamma_{ij} (\omega,\omega^\prime,t)$ in the form (\ref{eqn:gammaPreIntegrated}). It is enough to notice that for $\omega^\prime \neq \omega$, and $\forall_{\omega,\omega^\prime} ~\max\{\abs{\omega},\abs{\omega^\prime}\} \ll \omega_c <+\infty $: 
\begin{align}
    \frac{\gamma_{ij} (\omega,\omega^\prime,t)}{t} &= \frac{1}{t} e^{i \frac{\omega^\prime-\omega}{2}t} \int_{-\infty}^{\infty} d\Omega~
	    \left[t~ \mathrm{sinc} \left(\frac{\omega^\prime-\Omega}{2}t\right)\right]  \left[t~ \mathrm{sinc} \left(\frac{\omega-\Omega}{2}t\right)\right]  R_{ji} (\Omega)\\
	  &= \frac{2\pi}{t} e^{i \frac{\omega^\prime-\omega}{2}t}  \int_{-+\infty}^{+\infty} d\Omega~
	    \delta^{(1)}_{\frac t2} (\omega^\prime-\Omega)  \left[t~ \mathrm{sinc} \left(\frac{\omega-\Omega}{2}t\right)\right]  R_{ji} (\Omega)\\
	   &\stackrel{t \to \infty}{\approx} \frac{2\pi}{t} e^{i \frac{\omega^\prime-\omega}{2}t}  \int_{-+\infty}^{+\infty} d\Omega~
	    \delta_{\frac t2} (\omega^\prime-\Omega)  \left[t~ \mathrm{sinc} \left(\frac{\omega-\Omega}{2}t\right)\right]  R_{ji} (\Omega)\\
	    &=\ 2\pi e^{i \frac{\omega^\prime-\omega}{2}t}   \mathrm{sinc} \left(\frac{\omega-\omega^\prime}{2}t\right)  R_{ji} (\omega^\prime) 
	    \stackrel{t \to \infty}{\approx} 2\pi \delta_{\omega,\omega^\prime}  R_{ji} (\omega^\prime).
\end{align}
Where we have used the model of delta function $\delta^{(1)}_\tau$, see equation (\ref{eqn:models}). The result above holds true also in $\omega=\omega^\prime$ case. To check this it is enough to consider to use  $\delta^{(2)}_\tau$ from equation (\ref{eqn:models}) in the calculations.

Therefore, from the definition in equation (\ref{eqn:ActionOfCumulant}) we obtain the following.
\begin{align}
	\tilde{K}^{(2)}(t) \tilde{\rho}_S &=
	\sum_{i,j} \sum_{\omega, \omega^\prime} \gamma_{ij}(\omega,\omega^\prime,t)  \left(A_i (\omega) \tilde{\rho}_S A_j^\dagger (\omega^\prime) - \frac{1}{2} \left\{A_j^\dagger (\omega^\prime) A_i (\omega), \tilde{\rho}_S \right\} \right) \\
	& \stackrel{t \to \infty}{\approx} \sum_{i,j} \sum_{\omega, \omega^\prime}  2\pi t \delta_{\omega,\omega^\prime}  R_{ji} (\omega^\prime) \left(A_i (\omega) \tilde{\rho}_S A_j^\dagger (\omega^\prime) - \frac{1}{2} \left\{A_j^\dagger (\omega^\prime) A_i (\omega), \tilde{\rho}_S \right\} \right) \\
	& = \sum_{i,j} \sum_{\omega}  2\pi t  R_{ji} (\omega) \left(A_i (\omega) \tilde{\rho}_S A_j^\dagger (\omega) - \frac{1}{2} \left\{A_j^\dagger (\omega) A_i (\omega), \tilde{\rho}_S \right\} \right) \\
	& =t \sum_{i,j} \sum_{\omega}  \gamma_{ij} (\omega) \left(A_i (\omega) \tilde{\rho}_S A_j^\dagger (\omega) - \frac{1}{2} \left\{A_j^\dagger (\omega) A_i (\omega), \tilde{\rho}_S \right\} \right) = t \tilde{\mathcal{L}}^\mathrm{fs} \tilde{\rho}_S
\end{align}
where $\gamma_{ij} (\omega)=2\pi R_{ji} (\omega)$, is the {\it Markovian relaxation rate}. The $\tilde{\mathcal{L}}^\mathrm{fs}$ is the time-independent generator of the fully-secular master equation's dynamical map. Finally, we obtain:
\begin{align}
    e^{\tilde{K}^{(2)}(t)} \stackrel{t \to \infty}{\approx} e^{t \tilde{\mathcal{L}}^\mathrm{fs}},
\end{align}
where it is evident that $\tilde{\mathcal{L}}^\mathrm{fs}$ is a generator of a semigroup~\cite{AlickiLendi1987}, and the r.h.s. of the above is the fully-secular master equation's dynamical map.

A similar asymptotic result holds true for $\tilde{K}^{(2,\star)}(t)$, i.e, the FA equation.
\begin{align}
    \frac{\gamma^\star_{ij} (\omega,\omega^\prime,t)}{t} &= 2\pi  e^{i \frac{\omega^\prime-\omega}{2}t}
	     \mathrm{sinc} \left(\frac{\omega^\prime-\omega}{2}t\right) \sum_k R^{\frac 12}_{jk} (\omega^\prime) R^{\frac 12}_{ki} (\omega)\\
	     &\stackrel{t \to +\infty}{\approx} 2\pi \delta_{\omega,\omega^\prime} \sum_k R^{\frac 12}_{jk} (\omega^\prime) R^{\frac 12}_{ki} (\omega)= 2\pi \delta_{\omega,\omega^\prime}  R_{ji} (\omega).
\end{align}
In the approximation above we used properties of $\mathrm{sinc}$ function. Therefore, we have:
\begin{align}
	\tilde{K}^{(2,\star)}(t) \tilde{\rho}_S &=
	\sum_{i,j} \sum_{\omega, \omega^\prime} \gamma^\star_{ij}(\omega,\omega^\prime,t)  \left(A_i (\omega) \tilde{\rho}_S A_j^\dagger (\omega^\prime) - \frac{1}{2} \left\{A_j^\dagger (\omega^\prime) A_i (\omega), \tilde{\rho}_S \right\} \right)\\
	&\stackrel{t \to +\infty}{\approx}\sum_{i,j} \sum_{\omega, \omega^\prime} 2\pi t \delta_{\omega,\omega^\prime}  R_{ji} (\omega)  \left(A_i (\omega) \tilde{\rho}_S A_j^\dagger (\omega^\prime) - \frac{1}{2} \left\{A_j^\dagger (\omega^\prime) A_i (\omega), \tilde{\rho}_S \right\} \right)\\
	&= t\sum_{i,j} \sum_{\omega} 2\pi R_{ji} (\omega)  \left(A_i (\omega) \tilde{\rho}_S A_j^\dagger (\omega) - \frac{1}{2} \left\{A_j^\dagger (\omega) A_i (\omega), \tilde{\rho}_S \right\} \right)\\
	&= t\sum_{i,j} \sum_{\omega} \gamma_{ij} (\omega)  \left(A_i (\omega) \tilde{\rho}_S A_j^\dagger (\omega) - \frac{1}{2} \left\{A_j^\dagger (\omega) A_i (\omega), \tilde{\rho}_S \right\} \right)
	=t \tilde{\mathcal{L}}^\mathrm{fs} \tilde{\rho}_S.
\end{align}
In the full analogy, with the cases considered above, for $\tilde{K}^{(2,\star\star)}(t)$, i.e, the $\star\star$-approximation of the cumulant equation we have:
\begin{align}
    &\frac{\gamma^{\star\star}_{ij} (\omega,\omega^\prime,t)}{t} = e^{i \frac{\omega^\prime-\omega}{2}t} \frac{1}{t}\int_{0}^{+\infty} d\Omega~  N\left(T(\Omega),\Omega\right) \left(
	    \left[t~ \mathrm{sinc} \left(\frac{\omega^\prime-\Omega}{2}t\right)\right]  \left[t~ \mathrm{sinc} \left(\frac{\omega-\Omega}{2}t\right)\right] J_{ji}(\Omega) \right. \nonumber \\
	   &\left. +
	    \left[t~ \mathrm{sinc} \left(\frac{\omega^\prime+\Omega}{2}t\right)\right]  \left[t~ \mathrm{sinc} \left(\frac{\omega+\Omega}{2}t\right)\right] J_{ij}(\Omega)
	    \right) +2\pi  H(\omega)H(\omega^\prime)  e^{i \frac{\omega^\prime-\omega}{2}t} 
	    \mathrm{sinc} \left(\frac{\omega^\prime-\omega}{2}t\right) \sum_k J^{\frac 12}_{jk}(\omega^\prime) J^{\frac 12}_{ki}(\omega)\\
	    &=e^{i \frac{\omega^\prime-\omega}{2}t} 2 \pi \int_{0}^{+\infty} d\Omega~  N\left(T(\Omega),\Omega\right) \left(
	    \delta^{(1)}_{\frac t2}(\omega^\prime-\Omega)   \mathrm{sinc} \left(\frac{\omega-\Omega}{2}t\right) J_{ji}(\Omega) \right. \nonumber \\
	   &\left. +
	    \delta^{(1)}_{\frac t2}(\omega^\prime+\Omega)  \mathrm{sinc} \left(\frac{\omega+\Omega}{2}t\right) J_{ij}(\Omega)
	    \right) +2\pi  H(\omega)H(\omega^\prime)  e^{i \frac{\omega^\prime-\omega}{2}t} 
	    \mathrm{sinc} \left(\frac{\omega^\prime-\omega}{2}t\right) \sum_k J^{\frac 12}_{jk}(\omega^\prime) J^{\frac 12}_{ki}(\omega)\\
	    &\stackrel{t \to + \infty}{\approx}2 \pi e^{i \frac{\omega^\prime-\omega}{2}t}  \int_{0}^{+\infty} d\Omega~  N\left(T(\Omega),\Omega\right) \left(
	    \delta(\omega^\prime-\Omega)   \mathrm{sinc} \left(\frac{\omega-\Omega}{2}t\right) J_{ji}(\Omega) \right. \nonumber \\
	   &\left. +
	    \delta(\omega^\prime+\Omega)  \mathrm{sinc} \left(\frac{\omega+\Omega}{2}t\right) J_{ij}(\Omega)
	    \right) +2\pi  H(\omega)H(\omega^\prime)  e^{i \frac{\omega^\prime-\omega}{2}t} 
	    \mathrm{sinc} \left(\frac{\omega^\prime-\omega}{2}t\right) \sum_k J^{\frac 12}_{jk}(\omega^\prime) J^{\frac 12}_{ki}(\omega)\\
	    &= 2 \pi e^{i \frac{\omega^\prime-\omega}{2}t}    \left( N\left(T(\omega^\prime),\omega^\prime\right)
	    H(\omega^\prime)   \mathrm{sinc} \left(\frac{\omega-\omega^\prime}{2}t\right) J_{ji}(\omega^\prime)  + N\left(T(-\omega^\prime),-\omega^\prime\right)
	    H(-\omega^\prime)  \mathrm{sinc} \left(\frac{\omega-\omega^\prime}{2}t\right) J_{ij}(-\omega^\prime)
	    \right) \nonumber \\
	    & +2\pi  H(\omega)H(\omega^\prime)  e^{i \frac{\omega^\prime-\omega}{2}t} 
	    \mathrm{sinc} \left(\frac{\omega^\prime-\omega}{2}t\right) \sum_k J^{\frac 12}_{jk}(\omega^\prime) J^{\frac 12}_{ki}(\omega)\\
	    &\stackrel{t \to + \infty}{\approx} 2\pi\delta_{\omega,\omega^\prime} \left( N\left(T(\omega^\prime),\omega^\prime\right)
	    H(\omega^\prime)    J_{ji}(\omega^\prime)+  H(\omega^\prime) 
	     J_{ji}(\omega^\prime) + N\left(T(-\omega^\prime),-\omega^\prime\right)
	    H(-\omega^\prime)   J_{ij}(-\omega^\prime)
	    \right)\\
	    &= 2\pi\delta_{\omega,\omega^\prime} \left( \left(N\left(T(\omega^\prime),\omega^\prime\right)+1\right)
	    H(\omega^\prime)    J_{ji}(\omega^\prime)+ N\left(T(-\omega^\prime),-\omega^\prime\right)
	    H(-\omega^\prime)   J_{ij}(-\omega^\prime)
	    \right)\\
	    &= 2\pi\delta_{\omega,\omega^\prime} R_{ji}(\omega^\prime),
\end{align}
where we employed both models of delta function from equation (\ref{eqn:models}), and properties of $\mathrm{sinc}$ function. The last equality is due to the Definition \ref{def:decentR} of the decent reservoir. Finally, we compute:
\begin{align}
	\tilde{K}^{(2,\star \star)}(t) \tilde{\rho}_S &=
	\sum_{i,j} \sum_{\omega, \omega^\prime} \gamma^{\star\star}_{ij}(\omega,\omega^\prime,t)  \left(A_i (\omega) \tilde{\rho}_S A_j^\dagger (\omega^\prime) - \frac{1}{2} \left\{A_j^\dagger (\omega^\prime) A_i (\omega), \tilde{\rho}_S \right\} \right)\\
	&\stackrel{t \to +\infty}{\approx}\sum_{i,j} \sum_{\omega, \omega^\prime} 2\pi t \delta_{\omega,\omega^\prime}  R_{ji} (\omega)  \left(A_i (\omega) \tilde{\rho}_S A_j^\dagger (\omega^\prime) - \frac{1}{2} \left\{A_j^\dagger (\omega^\prime) A_i (\omega), \tilde{\rho}_S \right\} \right)\\
	&= t\sum_{i,j} \sum_{\omega} 2\pi R_{ji} (\omega)  \left(A_i (\omega) \tilde{\rho}_S A_j^\dagger (\omega) - \frac{1}{2} \left\{A_j^\dagger (\omega) A_i (\omega), \tilde{\rho}_S \right\} \right)\\
	&= t\sum_{i,j} \sum_{\omega} \gamma_{ij} (\omega)  \left(A_i (\omega) \tilde{\rho}_S A_j^\dagger (\omega) - \frac{1}{2} \left\{A_j^\dagger (\omega) A_i (\omega), \tilde{\rho}_S \right\} \right)
	=t \tilde{\mathcal{L}}^\mathrm{fs} \tilde{\rho}_S.
\end{align}

\begin{remark}
The above long-time limits holds solely in the sense approximation. In fact, the long-time limit of the cumulant superoperator (and and its approximations) contains (bounded) non-diagonal elements~\cite{RenormalizationPaper}. Fortunately, in the effective sense of dynamics the above relations hold, as all types of dynamical maps converge to the same final state~\cite{MW_preparation}.  
\end{remark}

\subsection{Reproducing the quasi-secular master equation}

If the systems consist of more than two energy levels, we can arrange these levels into groups of level, what also induced a grouping of Bohr's frequencies. The grouping procedure is always arbitrary, but if we can arrange them into groups well-separated from each other, with respect to the energy separations within the group (see Fig. \ref{fig:grouping}), the quasi-secular master equation can be applied. Furthermore, to apply this approach, the spectral density can not vary too much within any group. 
\begin{figure}[h!]
    \centering
    \includegraphics[width=0.45 \columnwidth]{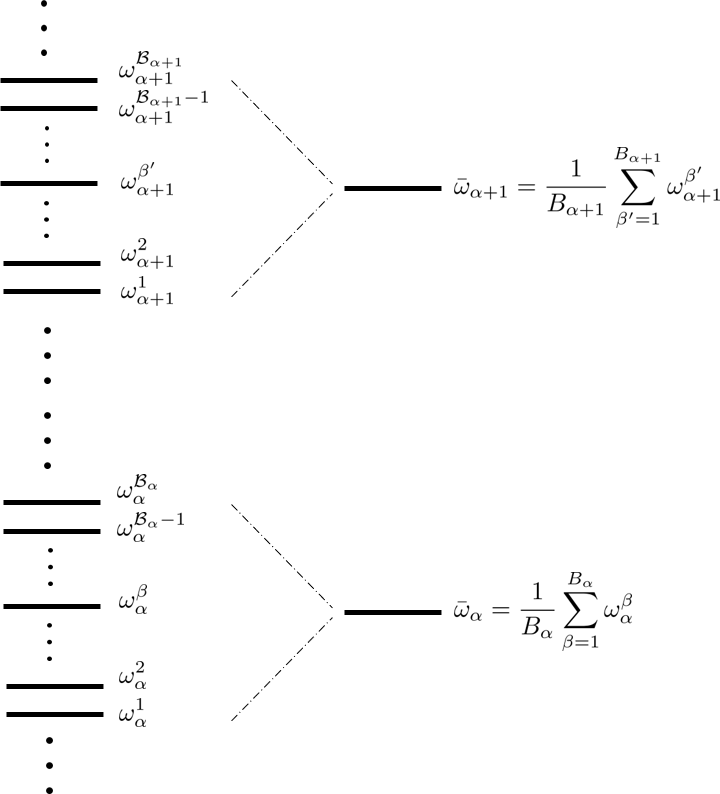}
    \caption{Diagrammatic representation of non-secular truncation.    }
    \label{fig:grouping}
\end{figure} 
We can denote now the energies of levels with $\omega_\alpha^\beta$, where $\alpha$ corresponds to the index of a group, and $\beta$ indicates the level within the group (for the diagrammatic representation see Fig. \ref{fig:grouping}). When the grouping is already done, we can introduce the following quantities.
\begin{align}
    \Delta \Omega &= \min_{\alpha,\nu} \abs{\bar{\omega}_\alpha-\bar{\omega}_{\nu}},\\
    \Delta \omega &= \max_{\alpha,\beta,\mu} \abs{\omega_\alpha^\beta-\omega_\alpha^{\mu}},
\end{align}
that define the ``short times scale'' , i.e,  $\frac{1}{\Delta\Omega} \ll t \ll \frac{1}{\Delta \omega}$. Indeed, we say that the system has ``well'' separated energy levels iff $\Delta \omega \ll \Delta \Omega$. Furthermore, it is important to notice that then there is no Bohr's frequency $\omega_{\emptyset}$, such that $\Delta \omega \le \omega_{\emptyset} \le \Delta \Omega$ (as a consequence of the well-separated levels assumption).

In order to derive the quasi-secular master equation we employ ``quasi-secular jump operators'' .
\begin{align}\label{def:jumpLocal}
	&{A}^\mathrm{qs}_i(\bar{\omega}_\alpha)=\sum_{\beta=1}^{B_\alpha}\sum_{\epsilon^\prime - \epsilon= \omega_\alpha^\beta} \Pi(\epsilon) {A}_i \Pi(\epsilon^\prime), 
\end{align}
for which it evident that $A(\omega) \neq A^\mathrm{qs} (\omega)$, unless group has only one member (see equation (\ref{eqn:jumpStd}) for comparison). 

Convergence to the quasi-secular master equation is a little more involved than in the case of the fully-secular master equation. We show the convergence property for FA equation and $\star\star$-approximation of the cumulant equation. In the case of the cumulant equation the proof of convergence to quasi-secular master equation is (mutatis mutandis) the same as in the case of $\star\star$-approximation. A clustering procedure of frequencies was also used in~\cite{trushechkin2021unified} to derive a universal GKSL equation.
We start by transforming the time-dependent generator $\tilde{K}^{(2)}(t)$ into the Schr{\"o}dinger picture. This must be done with the aid of the Baker–Campbell–Hausdorff formula, since $\tilde{K}^{(2)}(t)$ does not commute with the physical Hamiltonian $H_\mathcal{S}$~\cite{RenormalizationPaper}.
\begin{align}
	{K}^{(2)}(t) \rho_S &= -it \left[H_\mathcal{S},\rho_S\right]+
	\sum_{i,j} \sum_{\omega, \omega^\prime} \gamma_{ij}(\omega,\omega^\prime,t)   \left(A_i (\omega) \rho_S A_j^\dagger (\omega^\prime) - \frac{1}{2} \left\{A_j^\dagger (\omega^\prime) A_i (\omega), \rho_S \right\} \right) \nonumber\\
	&+ \frac 12
	\sum_{i,j} \sum_{\omega, \omega^\prime} [it(\omega-\omega^\prime)]\gamma_{ij}(\omega,\omega^\prime,t)   \left(A_i (\omega) \rho_S A_j^\dagger (\omega^\prime) - \frac{1}{2} \left\{A_j^\dagger (\omega^\prime) A_i (\omega), \rho_S \right\} \right)+ \cdots  \label{eqn:ActionOfCumulantSch}\\
	&=-it \left[H_\mathcal{S},\rho_S\right]+
	\sum_{i,j} \sum_{\omega, \omega^\prime} \left(1+\frac 12 [it(\omega-\omega^\prime)]\right)\gamma_{ij}(\omega,\omega^\prime,t)   \left(A_i (\omega) \rho_S A_j^\dagger (\omega^\prime) - \frac{1}{2} \left\{A_j^\dagger (\omega^\prime) A_i (\omega), \rho_S \right\} \right)+\cdots.\label{app:eqn:BCH1}
\end{align}
Where the third term in equation (\ref{eqn:ActionOfCumulantSch}) is equal to $-\frac{it}{2} \left[  \left[H_\mathcal{S},\cdot \right],\tilde{K}^{(2)}(t)\right]\rho_S$, and the form in equation (\ref{app:eqn:BCH1}) is obtained with ``higher'' order commutators. The first order not shown explicitly in the equation above contains the terms proportional to $\mathcal{O}((\omega-\omega^\prime)^2)$. 

\begin{remark}
The reason to perform the derivation of a quasi-secular Markovian equation in the Sch{\"o}dinger picture is the lack of ambiguity in choosing the interaction picture and also the no need for adding an ad hoc splitting Hamiltonian. 
\end{remark}

We can now write the equation (\ref{eqn:ActionOfCumulantSch}) in ``groups'' notation that is incorporated in this Section (see Fig. \ref{fig:grouping}):
\begin{align}
	&{K}^{(2)}(t) \rho_S = -it \left[H_\mathcal{S},\rho_S\right] \nonumber\\
	&+
	\sum_{i,j} \sum_{\alpha,\nu} \sum_{\beta=1}^{B_\alpha} \sum_{\mu=1}^{M_\nu} \left(1+[it(\omega_\alpha^\beta-\omega_\nu^{\mu})]\right)\gamma_{ij}(\omega_\alpha^\beta,\omega_\nu^{\mu},t)   \left(A_i (\omega_\alpha^\beta) \rho_S A_j^\dagger (\omega_\nu^{\mu}) - \frac{1}{2} \left\{A_j^\dagger (\omega_\nu^{\mu}) A_i (\omega_\alpha^\beta), \rho_S \right\} \right)+\cdots .
\end{align}

The first step makes use of the assumption that within any group the spectral density does not vary too much. Then, the next step is to perform the ``the secular approximation'' , that eliminates fast rotating terms.
\begin{align}
    &{K}^{(2)}(t) \rho_S = -it \left[H_\mathcal{S},\rho_S\right] \nonumber\\
    &+
	\sum_{i,j} \sum_{\alpha,\nu} \sum_{\beta=1}^{B_\alpha} \sum_{\mu=1}^{M_\nu} \left(1+[it(\omega_\alpha^\beta-\omega_\nu^{\mu})]\right)\gamma_{ij}(\omega_\alpha^\beta,\omega_\nu^{\mu},t)   \left(A_i (\omega_\alpha^\beta) \rho_S A_j^\dagger (\omega_\nu^{\mu}) - \frac{1}{2} \left\{A_j^\dagger (\omega_\nu^{\mu}) A_i (\omega_\alpha^\beta), \rho_S \right\} \right)+\cdots \\
	&\stackrel{(I)}{\approx}  -it \left[H_\mathcal{S},\rho_S\right]+
	\sum_{i,j} \sum_{\alpha,\nu} \sum_{\beta=1}^{B_\alpha} \sum_{\mu=1}^{M_\nu} \left(1+[it(\bar{\omega}_\alpha-\bar{\omega}_\nu)]\right) \gamma_{ij}(\bar{\omega}_\alpha,\bar{\omega}_\nu,t)   \left(A_i (\omega_\alpha^\beta) \rho_S A_j^\dagger (\omega_\nu^{\mu}) - \frac{1}{2} \left\{A_j^\dagger (\omega_\nu^{\mu}) A_i (\omega_\alpha^\beta), \rho_S \right\} \right)+\cdots \\
	&= -it \left[H_\mathcal{S},\rho_S\right] \nonumber \\
	&+
	\sum_{i,j} \sum_{\alpha,\nu}  \left(1+[it(\bar{\omega}_\alpha-\bar{\omega}_\nu)]\right) \gamma_{ij}(\bar{\omega}_\alpha,\bar{\omega}_\nu,t)   \left(\sum_{\beta=1}^{B_\alpha} A_i (\omega_\alpha^\beta) \rho_S \sum_{\mu=1}^{M_\nu} A_j^\dagger (\omega_\nu^{\mu}) - \frac{1}{2} \left\{ \sum_{\mu=1}^{M_\nu} A_j^\dagger (\omega_\nu^{\mu}) \sum_{\beta=1}^{B_\alpha} A_i (\omega_\alpha^\beta), \rho_S \right\} \right)+\cdots\\
	&\stackrel{(II)}{=} -it \left[H_\mathcal{S},\rho_S\right]+
	\sum_{i,j} \sum_{\alpha,\nu}   \left(1+[it(\bar{\omega}_\alpha-\bar{\omega}_\nu]\right)\gamma_{ij}(\bar{\omega}_\alpha,\bar{\omega}_\nu,t)   \left( A^\mathrm{qs}_i (\bar{\omega}_\alpha) \rho_S {A^\mathrm{qs}_j}^\dagger (\bar{\omega}_\nu) - \frac{1}{2} \left\{  {A^\mathrm{qs}_j}^\dagger (\bar{\omega}_\nu)  A^\mathrm{qs}_i (\bar{\omega}_\alpha), \rho_S \right\} \right)+\cdots\\
	&\stackrel{(III)}{\approx} -it \left[H_\mathcal{S},\rho_S\right]+
	\sum_{i,j} \sum_{\alpha}   \gamma_{ij}(\bar{\omega}_\alpha,\bar{\omega}_\alpha,t)   \left( A^\mathrm{qs}_i (\bar{\omega}_\alpha) \rho_S {A^\mathrm{qs}_j}^\dagger (\bar{\omega}_\alpha) - \frac{1}{2} \left\{  {A^\mathrm{qs}_j}^\dagger (\bar{\omega}_\alpha)  A^\mathrm{qs}_i (\bar{\omega}_\alpha), \rho_S \right\} \right)= (\clubsuit).
\end{align}
Here, in step $(I)$ we made the ``grouping''  approximation, and in $(II)$, we used the Definition in equation (\ref{def:jumpLocal}). In step $(III)$, we performed approximation, usually referred to as ``the secular approximation'' however, in a range of non-resonant (non-Bohr's) frequencies. Steps $(I)$ and $(III)$ combined account for ``the non-secular approximation''. Before we obtain the ``local'' generator, we still have to perform the Markovian approximation.

\begin{align}
    \gamma_{ij}(\bar{\omega}_\alpha,\bar{\omega}_\alpha,t) &= \int_0^t ds \int_0^t dw~ e^{i \bar{\omega}_\alpha( s -  w)} \left< \tilde{B}_j (s-w) \tilde{B}_i \right>_{\tilde{\rho}_B}
    =\int_{- \frac t2}^{\frac t2} ds \int_{- \frac t2}^{\frac t2} dw~ e^{i \bar{\omega}_\alpha( s -  w)} \left< \tilde{B}_j (s-w) \tilde{B}_i \right>_{\tilde{\rho}_B}\\
    &\stackrel{(I)}{=} \frac 12 \int_{-t}^{t} dv \int_{-t}^{t} du~ e^{i \bar{\omega}_\alpha u} \left< \tilde{B}_j (u) \tilde{B}_i \right>_{\tilde{\rho}_B}
    = t \int_{-t}^{t} du~ e^{i \bar{\omega}_\alpha u} \left< \tilde{B}_j (u) \tilde{B}_i \right>_{\tilde{\rho}_B}\\
    &\stackrel{(II)}{\approx} t \int_{-\infty}^{+\infty} du~ e^{i \bar{\omega}_\alpha u} \left< \tilde{B}_j (u) \tilde{B}_i \right>_{\tilde{\rho}_B}
    =2\pi t \left<\left(\frac{1}{2\pi}\int_{-\infty}^{+\infty} du~ e^{i \bar{\omega}_\alpha u}  \tilde{B}_j (u)\right) \tilde{B}_i \right>_{\tilde{\rho}_B}\\
    &\stackrel{(III)}{=}2\pi t \left<B_j(\bar{\omega}_\alpha ) \tilde{B}_i \right>_{\tilde{\rho}_B}= 2\pi t R_{ji} (\bar{\omega}_\alpha ) = t \gamma_{ij}(\bar{\omega}_\alpha).
\end{align}
Where in $(I)$ we performed change of variables, i.e., $u=s-w,~~v=s+v$, with $\frac{D(s,w)}{D(v,u)}=\frac 12$. Then the step $(II)$ accounts for the Markovian approximation. Finally in $(III)$ we used the definition of Fourier transform of bath operators (see equation (\ref{def:FTofB})).
\begin{align}\label{eqn:GenLoc}
    (\clubsuit) &\approx -it \left[H_\mathcal{S},\rho_S\right]+
	t \sum_{i,j} \sum_{\alpha}    \gamma_{ij}(\bar{\omega}_\alpha)   \left( A^\mathrm{qs}_i (\bar{\omega}_\alpha) \rho_S {A^\mathrm{qs}_j}^\dagger (\bar{\omega}_\alpha) - \frac{1}{2} \left\{  {A^\mathrm{qs}_j}^\dagger (\bar{\omega}_\alpha)  A^\mathrm{qs}_i (\bar{\omega}_\alpha), \rho_S \right\} \right)
	= t \mathcal{L}^\mathrm{qs} \rho_S
\end{align}

Let us firstly show how FA equation converges to the quasi-secular master equation. The first step, to achieve this is to split the sum in equation (\ref{eqn:GenLoc}) into three parts.
\begin{align}
    {K}^{(2,\star)}(t) \rho_S &= -it \left[H_\mathcal{S},\rho_S\right]+
	\sum_{i,j} \sum_{\omega_\alpha^\beta,\omega_\nu^{\mu}\in \mathrm{Bohr}} \gamma^{\star}_{ij}(\omega_\alpha^\beta,\omega_\nu^{\mu},t)   \left(A_i (\omega_\alpha^\beta) \rho_S A_j^\dagger (\omega_\nu^{\mu}) - \frac{1}{2} \left\{A_j^\dagger (\omega_\nu^{\mu}) A_i (\omega_\alpha^\beta), \rho_S \right\} \right)\\
	=  -it \left[H_\mathcal{S},\rho_S\right] &+
	\sum_{i,j} \sum_{\abs{\omega_\alpha^\beta-\omega_\nu^{\mu}} \le \Delta \omega } \gamma^{\star}_{ij}(\omega_\alpha^\beta,\omega_\nu^{\mu},t)   \left(A_i (\omega_\alpha^\beta) \rho_S A_j^\dagger (\omega_\nu^{\mu}) - \frac{1}{2} \left\{A_j^\dagger (\omega_\nu^{\mu}) A_i (\omega_\alpha^\beta), \rho_S \right\} \right) \nonumber \\
	&+	\sum_{i,j} \sum_{\Delta \omega < \abs{\omega_\alpha^\beta-\omega_\nu^{\mu}} < \Delta \Omega } \gamma^{\star}_{ij}(\omega_\alpha^\beta,\omega_\nu^{\mu},t)   \left(A_i (\omega_\alpha^\beta) \rho_S A_j^\dagger (\omega_\nu^{\mu}) - \frac{1}{2} \left\{A_j^\dagger (\omega_\nu^{\mu}) A_i (\omega_\alpha^\beta), \rho_S \right\} \right) \nonumber \\
	&+	\sum_{i,j} \sum_{\Delta \Omega  \le \abs{\omega_\alpha^\beta-\omega_\nu^{\mu}}  } \gamma^{\star}_{ij}(\omega_\alpha^\beta,\omega_\nu^{\mu},t)   \left(A_i (\omega_\alpha^\beta) \rho_S A_j^\dagger (\omega_\nu^{\mu}) - \frac{1}{2} \left\{A_j^\dagger (\omega_\nu^{\mu}) A_i (\omega_\alpha^\beta), \rho_S \right\} \right)=(\spadesuit).
\end{align}
Because the second summation runs over an empty set, as was mentioned before, it is equal to $0$ exactly.

When considering the first summation for $\abs{\omega_\alpha^\beta-\omega_\nu^{\mu}} < \Delta \omega$ and $\Delta \omega t \ll  1$, we can approximate the phase factor and the $\mathrm{sinc}$ function with $1$.
\begin{align}\label{app:eqn:CloseBohrSM}
    \frac{\gamma^{\star}_{ij}(\omega_\alpha^\beta,\omega_\nu^{\mu},t)}{t} &= 2\pi  e^{i \frac{\omega_\nu^{\mu}-\omega_\alpha^\beta}{2}t}
	     \mathrm{sinc} \left(\frac{\omega_\nu^{\mu}-\omega_\alpha^\beta}{2}t\right)  \sum_k R^{\frac 12}_{jk} (\omega_\nu^{\mu}) R^{\frac 12}_{ki} (\omega_\alpha^\beta) 
	    \approx 2\pi  R_{ji} (\omega_\nu^{\mu}).
\end{align}\label{app:eqn:FarBohrSM}
For the third summation, and $\Delta \Omega<\abs{\omega_\alpha^\beta-\omega_\nu^{\mu}}  $ and $1 \ll \Delta \Omega t $ we have:
\begin{align}\label{app:eqn:FarBohrSM}
    \frac{\gamma^{\star}_{ij}(\omega_\alpha^\beta,\omega_\nu^{\mu},t)}{t}  &= 2\pi  e^{i \frac{\omega_\nu^{\mu}-\omega_\alpha^\beta}{2}t}
	     \mathrm{sinc} \left(\frac{\omega_\nu^{\mu}-\omega_\alpha^\beta}{2}t\right)  \sum_k R^{\frac 12}_{jk} (\omega_\nu^{\mu}) R^{\frac 12}_{ki} (\omega_\alpha^\beta) 
	    \approx 0.
\end{align}
Where the last approximation is with comparison to the r.h.s. of equation (\ref{app:eqn:CloseBohrSM}). Therefore:
\begin{align}
    (\spadesuit)&\approx -it \left[H_\mathcal{S},\rho_S\right] +
	\sum_{i,j} \sum_{\abs{\omega_\alpha^\beta-\omega_\nu^{\mu}} < \Delta \omega } \gamma^{\star}_{ij}(\omega_\alpha^\beta,\omega_\nu^{\mu},t)   \left(A_i (\omega_\alpha^\beta) \rho_S A_j^\dagger (\omega_\nu^{\mu}) - \frac{1}{2} \left\{A_j^\dagger (\omega_\nu^{\mu}) A_i (\omega_\alpha^\beta), \rho_S \right\} \right)\\
	&\approx -it \left[H_\mathcal{S},\rho_S\right] +
	\sum_{i,j} \sum_{\abs{\omega_\alpha^\beta-\omega_\nu^{\mu}} < \Delta \omega } 2\pi t R_{ji} (\omega_\nu^{\mu})  \left(A_i (\omega_\alpha^\beta) \rho_S A_j^\dagger (\omega_\nu^{\mu}) - \frac{1}{2} \left\{A_j^\dagger (\omega_\nu^{\mu}) A_i (\omega_\alpha^\beta), \rho_S \right\} \right)\\
	&= -it \left[H_\mathcal{S},\rho_S\right] +
	\sum_{i,j} \sum_{\alpha,\nu,\beta\mu:~\abs{\omega_\alpha^\beta-\omega_\nu^{\mu}} < \Delta \omega}  2\pi t R_{ji} (\omega_\nu^{\mu})  \left(A_i (\omega_\alpha^\beta) \rho_S A_j^\dagger (\omega_\nu^{\mu}) - \frac{1}{2} \left\{A_j^\dagger (\omega_\nu^{\mu}) A_i (\omega_\alpha^\beta), \rho_S \right\} \right)\\
	&\stackrel{(I)}{=} -it \left[H_\mathcal{S},\rho_S\right] +
	\sum_{i,j} \sum_{\alpha,\beta\mu}  2\pi t R_{ji} (\omega_\alpha^{\mu})  \left(A_i (\omega_\alpha^\beta) \rho_S A_j^\dagger (\omega_\alpha^{\mu}) - \frac{1}{2} \left\{A_j^\dagger (\omega_\alpha^{\mu}) A_i (\omega_\alpha^\beta), \rho_S \right\} \right)\\
	&\stackrel{(II)}{\approx} -it \left[H_\mathcal{S},\rho_S\right] +
	\sum_{i,j} \sum_{\alpha,\beta\mu}  2\pi t R_{ji} (\bar{\omega}_\alpha)  \left(A_i (\omega_\alpha^\beta) \rho_S A_j^\dagger (\omega_\alpha^{\mu}) - \frac{1}{2} \left\{A_j^\dagger (\omega_\alpha^{\mu}) A_i (\omega_\alpha^\beta), \rho_S \right\} \right)\\
	&= -it \left[H_\mathcal{S},\rho_S\right] +
	\sum_{i,j} \sum_{\alpha}  2\pi t R_{ji} (\bar{\omega}_\alpha)  \left(\sum_\beta A_i (\omega_\alpha^\beta) \rho_S \sum_\mu A_j^\dagger (\omega_\alpha^{\mu}) - \frac{1}{2} \left\{\sum_\mu A_j^\dagger (\omega_\alpha^{\mu}) \sum_\beta A_i (\omega_\alpha^\beta), \rho_S \right\} \right)\\
	&= -it \left[H_\mathcal{S},\rho_S\right] +
	t \sum_{i,j} \sum_{\alpha}    \gamma_{ij} (\bar{\omega}_\alpha)  \left( A^\mathrm{qs}_i (\bar{\omega}_\alpha) \rho_S  {A^\mathrm{qs}_j}^\dagger (\bar{\omega}_\alpha) - \frac{1}{2} \left\{ {A^\mathrm{qs}_j}^\dagger (\bar{\omega}_\alpha)  A^\mathrm{qs}_i (\bar{\omega}_\alpha), \rho_S \right\} \right)=t \mathcal{L}^\mathrm{qs} \rho_S, 
\end{align}
where again step $(I)$ is because the ``grouping'' approximation, i.e, small frequency differences occur only within a group. The second step $(II)$ is due to an assumption that the spectral density does not vary too much within the group. Finally, upon transforming to the interaction picture, we obtain:
\begin{align}
    e^{\tilde{K}^{(2,\star)}(t)} \stackrel{\frac{1}{\Delta\Omega} \ll t \ll \frac{1}{\Delta \omega}}{\approx} e^{t \tilde{\mathcal{L}}^\mathrm{qs}}.
\end{align}
In this way, we have shown that FA equation approximates the quasi-secular master equation within short-times scale, i.e., $\frac{1}{\Delta\Omega} \ll t \ll \frac{1}{\Delta \omega}$.

The proof for the convergence of $\star\star$-type approximation of the cumulant equation to the quasi-secular master equation is similar to the $\star$ case. The only difference appears, at the step of equations (\ref{app:eqn:CloseBohrSM},\ref{app:eqn:FarBohrSM}).

For $\abs{\omega_\alpha^\beta-\omega_\nu^{\mu}} < \Delta \omega$ and $\Delta \omega t \ll  1$, we have:
\begin{align}\label{app:eqn:CloseBohrSM2}
    &\frac{\gamma^{\star\star}_{ij}(\omega_\alpha^\beta,\omega_\nu^{\mu},t)}{t} =e^{i \frac{\omega_\nu^{\mu}-\omega_\alpha^\beta}{2}t} \frac{1}{t} \int_{0}^{\infty} d\Omega~  N\left(T(\Omega),\Omega\right) \left(
	    \left[t~ \mathrm{sinc} \left(\frac{\omega_\nu^{\mu}-\Omega}{2}t\right)\right]  \left[t~ \mathrm{sinc} \left(\frac{\omega_\alpha^\beta-\Omega}{2}t\right)\right] J_{ji}(\Omega) \right. \nonumber \\
	   &\left. +
	    \left[t~ \mathrm{sinc} \left(\frac{\omega_\nu^{\mu}+\Omega}{2}t\right)\right]  \left[t~ \mathrm{sinc} \left(\frac{\omega_\alpha^\beta+\Omega}{2}t\right)\right] J_{ij}(\Omega)
	    \right) \\
	    &+2\pi  H(\omega_\alpha^\beta)H(\omega_\nu^{\mu})  e^{i \frac{\omega_\nu^{\mu}-\omega_\alpha^\beta}{2}t} 
	    \mathrm{sinc} \left(\frac{\omega_\nu^{\mu}-\omega_\alpha^\beta}{2}t\right) \sum_k J^{\frac 12}_{jk}(\omega_\nu^{\mu}) J^{\frac 12}_{ki}(\omega_\alpha^\beta) \\
	    &\approx  \frac{1}{t} \int_{0}^{\infty} d\Omega~  N\left(T(\Omega),\Omega\right) \left(
	    \left[t~ \mathrm{sinc} \left(\frac{\omega_\nu^{\mu}-\Omega}{2}t\right)\right]^2  J_{ji}(\Omega)
	    +\left[t~ \mathrm{sinc} \left(\frac{\omega_\nu^{\mu}+\Omega}{2}t\right)\right]^2   J_{ij}(\Omega)
	    \right) \\
	    &+2\pi  H(\omega_\alpha^\beta)H(\omega_\nu^{\mu})    J_{ji}(\omega_\nu^{\mu}) \\
	    &= 2\pi  \int_{0}^{+\infty} d\Omega~  N\left(T(\Omega),\Omega\right) \left(
	    \delta^{(2)}_{\frac t2}(\omega_\nu^{\mu}-\Omega)   J_{ji}(\Omega)
	    +\delta^{(2)}_{\frac t2}(\omega_\nu^{\mu}+\Omega)    J_{ij}(\Omega)
	    \right) \\
	    &+2\pi  H(\omega_\alpha^\beta)H(\omega_\nu^{\mu})    J_{ji}(\omega_\nu^{\mu}) \\
	    &\approx 2\pi  \int_{0}^{+\infty} d\Omega~  N\left(T(\Omega),\Omega\right) \left(
	    \delta(\omega_\nu^{\mu}-\Omega)   J_{ji}(\Omega)
	    +\delta(\omega_\nu^{\mu}+\Omega)    J_{ij}(\Omega)
	    \right) \\
	    &+2\pi  H(\omega_\nu^{\mu})    J_{ji}(\omega_\nu^{\mu}) \\
	    &=2\pi    \left(N\left(T(\omega_\nu^{\mu}),\omega_\nu^{\mu}\right)
	    H(\omega_\nu^{\mu})   J_{ji}(\omega_\nu^{\mu})
	    +N\left(T(\omega_\nu^{\mu}),\omega_\nu^{\mu}\right) H(-\omega_\nu^{\mu})    J_{ij}(\Omega)
	    \right) +2\pi  H(\omega_\nu^{\mu})    J_{ji}(\omega_\nu^{\mu}) \\
	    &=2\pi    \left( \left(N\left(T(\omega_\nu^{\mu}),\omega_\nu^{\mu}\right)+1\right)
	    H(\omega_\nu^{\mu})   J_{ji}(\omega_\nu^{\mu})
	    +N\left(T(\omega_\nu^{\mu}),\omega_\nu^{\mu}\right) H(-\omega_\nu^{\mu})    J_{ij}(\Omega)
	    \right) \\
	    & =2 \pi  R_{ji} (\omega_\nu^{\mu}),
	  \end{align}
Here, we notice that the quality of approximation is determined how well models approximate exact Dirac delta function, given particular spectral density.  Subsequently for $\Delta \Omega<\abs{\omega_\alpha^\beta-\omega_\nu^{\mu}}  $ and $1 \ll \Delta \Omega t $ ( (what implies $\abs{\omega_\alpha^\beta-\omega_\nu^{\mu}}t \gg 1$))
\begin{align}\label{app:eqn:FarBohrSM2}
    &\frac{\gamma^{\star\star}_{ij}(\omega_\alpha^\beta,\omega_\nu^{\mu},t)}{t} =e^{i \frac{\omega_\nu^{\mu}-\omega_\alpha^\beta}{2}t} \frac{1}{t} \int_{0}^{\infty} d\Omega~  N\left(T(\Omega),\Omega\right) \left(
	    \left[t~ \mathrm{sinc} \left(\frac{\omega_\nu^{\mu}-\Omega}{2}t\right)\right]  \left[t~ \mathrm{sinc} \left(\frac{\omega_\alpha^\beta-\Omega}{2}t\right)\right] J_{ji}(\Omega) \right. \nonumber \nonumber \\
	   &\left. +
	    \left[t~ \mathrm{sinc} \left(\frac{\omega_\nu^{\mu}+\Omega}{2}t\right)\right]  \left[t~ \mathrm{sinc} \left(\frac{\omega_\alpha^\beta+\Omega}{2}t\right)\right] J_{ij}(\Omega)
	    \right) \nonumber \\ 
	    &+2\pi  H(\omega_\alpha^\beta)H(\omega_\nu^{\mu})  e^{i \frac{\omega_\nu^{\mu}-\omega_\alpha^\beta}{2}t} 
	    \mathrm{sinc} \left(\frac{\omega_\nu^{\mu}-\omega_\alpha^\beta}{2}t\right) \sum_k J^{\frac 12}_{jk}(\omega_\nu^{\mu}) J^{\frac 12}_{ki}(\omega_\alpha^\beta) \\
     &= 2\pi \int_{0}^{+\infty} d\Omega~  N\left(T(\Omega),\Omega\right) \left(
	    \delta^{(1)}_{\frac t2}(\omega_\nu^{\mu}-\Omega)  \mathrm{sinc} \left(\frac{\omega_\alpha^\beta-\Omega}{2}\right) J_{ji}(\Omega)  +
	    \delta^{(1)}_{\frac t2}(\omega_\nu^{\mu}+\Omega)   \mathrm{sinc} \left(\frac{\omega_\alpha^\beta+\Omega}{2}\right) J_{ij}(\Omega)
	    \right) \nonumber \\ 
	    &+2\pi  H(\omega_\alpha^\beta)H(\omega_\nu^{\mu})  e^{i \frac{\omega_\nu^{\mu}-\omega_\alpha^\beta}{2}t} 
	    \mathrm{sinc} \left(\frac{\omega_\nu^{\mu}-\omega_\alpha^\beta}{2}t\right) \sum_k J^{\frac 12}_{jk}(\omega_\nu^{\mu}) J^{\frac 12}_{ki}(\omega_\alpha^\beta) \\
	   &\approx 2\pi \int_{0}^{+\infty} d\Omega~  N\left(T(\Omega),\Omega\right) \left(
	    \delta(\omega_\nu^{\mu}-\Omega)  \mathrm{sinc} \left(\frac{\omega_\alpha^\beta-\Omega}{2}t\right) J_{ji}(\Omega)  +
	    \delta(\omega_\nu^{\mu}+\Omega)   \mathrm{sinc} \left(\frac{\omega_\alpha^\beta+\Omega}{2}t\right) J_{ij}(\Omega) 
	     \right)\nonumber \\ 
	    &+2\pi  H(\omega_\alpha^\beta)H(\omega_\nu^{\mu})  e^{i \frac{\omega_\nu^{\mu}-\omega_\alpha^\beta}{2}t} 
	    \mathrm{sinc} \left(\frac{\omega_\nu^{\mu}-\omega_\alpha^\beta}{2}t\right) \sum_k J^{\frac 12}_{jk}(\omega_\nu^{\mu}) J^{\frac 12}_{ki}(\omega_\alpha^\beta) \\
	   &= 2\pi  R_{ji} (\omega_\nu^{\mu})  \mathrm{sinc} \left(\frac{\omega_\alpha^\beta-\omega_\nu^{\mu}}{2}t\right)  +2\pi  H(\omega_\alpha^\beta)H(\omega_\nu^{\mu})  e^{i \frac{\omega_\nu^{\mu}-\omega_\alpha^\beta}{2}t} 
	    \mathrm{sinc} \left(\frac{\omega_\nu^{\mu}-\omega_\alpha^\beta}{2}t\right) \sum_k J^{\frac 12}_{jk}(\omega_\nu^{\mu}) J^{\frac 12}_{ki}(\omega_\alpha^\beta) \approx 0
\end{align}
Where the last approximation is due to properties of $\mathrm{sinc}$ function (in the $\abs{\omega_\alpha^\beta-\omega_\nu^{\mu}}t \gg 1$ regime) and comparison to the r.h.s. of equation (\ref{app:eqn:CloseBohrSM2}). The other steps of the proof are the same as in the case of FA equation. Therefore, we have
\begin{align}
    e^{\tilde{K}^{(2,\star\star)}(t)} \stackrel{\frac{1}{\Delta\Omega} \ll t \ll \frac{1}{\Delta \omega}}{\approx} e^{t \tilde{\mathcal{L}}^\mathrm{qs}}.
\end{align}

Analogous proofs, both in quasi- and fully-secular  case, hold in the case of the (''original'') cumulant equation, that should a priori correctly describe all time scales.  We have     
\begin{align}
    &e^{\tilde{K}^{(2}(t)} \stackrel{t \to \infty}{\approx} e^{t \tilde{\mathcal{L}}^\mathrm{fs}},\\
    &e^{\tilde{K}^{(2}(t)} \stackrel{\frac{1}{\Delta\Omega} \ll t \ll \frac{1}{\Delta \omega}}{\approx} e^{t \tilde{\mathcal{L}}^\mathrm{qs}}.
\end{align}

\begin{remark}
The transformation to interaction picture (with respect to the physical Hamiltonian $H_\mathcal{S}$) for ${\mathcal{L}}^\mathrm{qs}$ is non-trivial. The difficulty comes from the fact that $\left[H_\mathcal{S},{\mathcal{L}}^\mathrm{qs}\right]\neq0$.
\end{remark}

\subsection{Comparison with exact numerics}\label{app:numerics}

We test the dynamics of the cumulant equation for the spin-boson model with the Hierarchical Equations of Motion (HEOM)~\cite{TanimuraHeom} and with the Bloch-Redfield equation. 
HEOM is a powerful non-perturbative approach and provided that the algorithm converges, the HEOM is thought to provide a numerically exact solution. The starting point of HEOM is a system interacting linearly with a bosonic environment, such that the Hamiltonian of the full system is given by:

\begin{equation}
    H=H_{S}(t)+\sum_{k} \omega_{k} a_{k}^{\dagger} a_{k} + Q \sum_{k} g_{k} (a_{k}^{\dagger}+ a_{k})
\end{equation}

the two main assumptions of the method are that at $t=0$ the system is in a product state, namely $\rho(t=0)=\rho_{S}(t=0)\otimes \rho_{B}$. and that the environment is an initially thermal equilibrium state 

\begin{equation}
    \rho_{B}=\frac{e^{-\beta \sum_{k} \omega_{k} a_{k}^{\dagger} a_{k} }}{Z}
\end{equation}

Furthermore the HEOM method typically uses a Drude-Lorentz spectral density as opposed to the exponential cutoff that we have been using in this paper, without lack of generality any spectral density can be written in terms of  a set of the Drude spectral densities in the under-damped regime by fitting to the one we want to simulate, or fitting the correlation functions directly as described in~\cite{QutipBonFin,Tannor}. In this paper we fit the correlation function to construct the bosonic bath. Without lack of generality correlation function of the environment for a bosonic gaussian bath can be written as:

\begin{equation}
    C(\tau) = \int_{0}^{\infty} d\omega \frac{J(\omega)}{\pi} \Big( \coth\left(\frac{\beta \omega}{2} \cos(\omega \tau) - i \sin(\omega \tau ) \right)\Big)
\end{equation}

for our choice of spectral density $J(\Omega) = \alpha \Omega \times e^{-\frac{\abs{\Omega}}{\omega_c}}$ it results in 

 \begin{align}\label{eq:analytical_cor}
 C(\tau) =& \: \frac{1}{\pi}\alpha  \beta^{- 2}  \Gamma(2) \left[ \zeta \left(2, \frac{1 + \beta \omega_c - i \omega_c \tau}{\beta \omega_c}\right) + \zeta \left(2, \frac{1 + i \omega_c \tau}{\beta \omega_c}\right) \right]
 \end{align}

where $\zeta$ is the Rienmann zeta function while $\Gamma$ is the gamma special function. When dealing with HEOM the correlation function is conveniently expressed as a sum of exponential functions, the reader interested in how this mapping can find a description in~\cite{QutipBonFin,Nori} and references therein :

\begin{align}\label{eq:series_corr}
      C(t)= C_{R}(t)+ C_{I}(t) \\
      C_{R}=  \sum_{k=1}^{N_{R}} c_{k}^{R} e^{-\gamma_{k}^{R} t} \\ 
      C_{I}= \sum_{k=1}^{N_{I}} c_{k}^{I} e^{-\gamma_{k}^{I} t}
\end{align}

After expressing the correlation function in this convenient form, the HEOM dynamics can be obtained by solving the following set of differential equations~\cite{TanimuraHeom,QutipBonFin}, once we have fitted Eq. \eqref{eq:series_corr} to \eqref{eq:analytical_cor}.

\begin{equation}\label{eq:HEOM_eqs}
    \dot{\rho}^{n}(t) =  -i [H_{S},\rho^{n}(t) ]  - \sum_{j=R,I}\sum_{k=1}^{N_{j}} n_{jk} \gamma_{k}^{j} \rho^{n}(t) - i \sum_{k=1}^{N_{R}} c_{k}^{R} n_{Rk} \{Q, \rho(t)^{n_{Rk}-1}\} + \sum_{k=1}^{N_{I}} c_{k}^{I} n_{Ik}\{Q, \rho(t)^{n_{Ik}-1}\}  - i  \sum_{j=R,I}\sum_{k=1}^{N_{j}} [Q,\rho^{n_{jk}+1}]
\end{equation}

where $n=(n_{R1},n_{R2},\hdots ,n_{R N_{R}},n_{I1},n_{I2},\hdots,n_{I N_{I}})$  is a multi-index label of integers for the different auxiliary density matrices (ADOs), $n_{jk} \epsilon \{0,N_{c}\}$, where $N_{c}$ is where we truncate our hierarchy. While $n=(0,0,\hdots,0)$ denotes the system density matrix. An important remark is that ADOs are not physical density operators in the standard sense, they correspond to terms collected for different exponents of the correlation functions that appear from the the application of the chain rule, during the derivation of Eq. \eqref{eq:HEOM_eqs} the interested reader may consult~\cite{TanimuraHeom,Tanimura1989} for the full derivation.

The comparison between the different approaches is done in the interaction picture. For the simulations, we have chosen  $\alpha=0.05$, $T=1$, $\omega=1$, and $\omega_{c}=5$, we consider seven  auxiliary density matrices (ADOs), as hinted in~\cite{Lobejko_Mean-Force} the dynamics of the spin boson model only  for the cumulant equation shows a shift on the diagonal with respect to Bloch-Redfield dynamics

We also include the standard Bloch-Redfield equation in our simulation, in the  Schrodinger picture it an be written as in Ref.~\cite{cattaneo2019local} 

\begin{align}
\frac{d \rho(t)}{dt}
=& - i \left[ H_{s}, \rho(t) \right] + D(\rho(t)) \\
D(\rho(t)) =& \sum_{\omega,\omega'}^{sec} \sum_{\alpha,\beta} \gamma_{\alpha,\beta}(\omega,\omega') \Big( A_{\beta}(\omega) \rho(t) A_{\alpha}^{\dagger}  -\frac{\{A_{\alpha}^{\dagger}(\omega') A_{\beta}(\omega), \rho(t)\}}{2} \Big),
\end{align}
where we have neglected the contribution from the Lamb-shift Hamiltonian term~\cite{RenormalizationPaper}. Whereas, the upper bound of the second summation \textit{sec} indicates summation over secular terms satisfying $|\omega-\omega'| \ll \tau_{decay}$. The $\gamma(\omega,\omega')$ coefficients are given by:

\begin{align}
    \gamma_{\alpha,\beta}(\omega,\omega') = \Gamma_{\alpha,\beta}(\omega) + \Gamma^{*}_{\beta,\alpha}(\omega')
\end{align}

where 

\begin{equation}
    \Gamma_{\alpha,\beta}(\omega)=\int_{0}^{\infty} dt' e^{i \omega t'} Tr\Big[ B_{\alpha}(t') B_{\beta}(0) \rho_{B}\Big]
\end{equation}


Both the Bloch-Redfield and HEOM equations are simulated using Qutip~\cite{qutip,QutipBonFin},
The dynamics derived via the cumulant equation ends up being similar to both the Bloch-Redfield equation and the HEOM, as it is shown in Fig.~\ref{fig:heom_plots} when describing coherences, and departs from the Bloch-Redfield description in terms of populations as hinted in~\cite{Lobejko_Mean-Force}.

From figure \ref{fig:heom_plots} (a) and (b) we can only see qualitatevely that the evolution from the Bloch-Redfield and the cumulant equation are closer to the the numerically exact solution than the standard GKLS equation. However we would like to see this quantitately and for that we need to introduce a distance between states. We will use quantum fidelity which is defined as 

\begin{align}
    \mathcal{F}(\rho,\sigma)= \left(Tr\left[\sqrt{\sqrt{\rho}\sigma \sqrt{\rho}}\right]\right)^{2}
\end{align}

Which is one when the states are the same, we compute the fidelity of the cumulant and Bloch-Redfield equation with respect to HEOM, the results are shown in \ref{fig:heom_plots} (c). The Markovian GKLS was left out because its fidelity is too low compared to its Non-Markovian counterparts, making their differences hard to visualize. On the plot we see that the secular version of Bloch-Redfield and Cumulant are comparable, however the cumulant does slightly better at really short times and long times.

The Bloch-Redfield equation compared with the cumulant agrees better with the HEOM when the secular approximation is not performed. Anyway, it is known that at certain times it can restitute a 
non-positive density matrix, as shown in~\cite{AlickiLendi1987,suarez1992memory} even for this simple case. 
To sum up, the cumulant equation is always completely-positive, ensuring physically meaningful solutions at every time, it shows the non-Markovian behavior of the dynamics 
in good agreement with the HEOM method, and eventually, it has the \textit{plug-and-play} easiness of the Davies equation.  

\newpage

\blk
\end{widetext}

\end{document}